\DeclareMathOperator*{\plim}{plim}
\newtheorem{definition}{Definition}
\newtheorem{proposition}{Proposition} 
\newtheorem{assumption}{Assumption}  
\newtheorem{lemma}{Lemma}
\newcommand*\bigcdot{\mathpalette\bigcdot@{.5}}
\newcommand*\bigcdot@[2]{\mathbin{\vcenter{\hbox{\scalebox{#2}{$\m@th#1\bullet$}}}}}
\newcommand{\vect}[1]{\boldsymbol{#1}}
\def\1{1\!{\rm l}}
\def\y{\mathbf{y}}
\def \zb{\mathbf{z}}
\newcommand{\dt}{\text{d}}
\newcommand{\blind}{1}
\begin{document}
\def\spacingset#1{\renewcommand{\baselinestretch}%
	{#1}\small\normalsize} \spacingset{1}

\date{\today}

\if1\blind
{\title{\bf Robust Approximate Bayesian Inference  with Synthetic Likelihood}
	
  \author{David T. Frazier\thanks{
	Department of Econometrics and Business Statistics, Monash University, and the Australian Centre of Excellence for Mathematical and Statistical Frontiers}
	 \and
	Christopher Drovandi\thanks{School of Mathematical Sciences, Queensland University of Technology and the Australian Centre of Excellence for Mathematical and Statistical Frontiers}
}
  \maketitle
} \fi

\if0\blind
{
	\bigskip
	\bigskip
	\bigskip
	\begin{center}
		{\LARGE\bf Robust Approximate Bayesian Inference  with Synthetic Likelihood}
	\end{center}
	\medskip
} \fi

\bigskip

\begin{abstract}
Bayesian synthetic likelihood (BSL) is now an established method for conducting approximate Bayesian inference in models where, due to the intractability of the likelihood function, exact Bayesian approaches are either infeasible or computationally too demanding. Implicit in the application of BSL is the  assumption that the data generating process (DGP) can produce simulated summary statistics that capture the behaviour of the observed summary statistics. We demonstrate that if this compatibility between the actual and assumed DGP is not satisfied, i.e., if the model is misspecified, BSL  can yield unreliable parameter inference. To circumvent this issue, we propose a new BSL approach that can detect the presence of model misspecification, and simultaneously deliver useful inferences even under significant model misspecification. Two simulated and two real data examples demonstrate the performance of this new approach to BSL, and document its superior accuracy over standard BSL when the assumed model is misspecified. 
\end{abstract}

\noindent
{\it{Keywords:}} approximate Bayesian computation; synthetic likelihood; likelihood-free inference;  model misspecification; robust Bayesian inference; slice sampling.

%\newpage
\spacingset{1.5} % DON'T change the spacing!

\section{Introduction}

In situations where the likelihood of the underlying model is intractable, approximate Bayesian methods are often the only feasible solution to conduct Bayesian inference. Indeed, approximate Bayesian methods are an increasingly common tool in the arsenal of the practicing statistician and  allow users to conduct reliable inference in models where exact Bayesian inference procedures are either infeasible, or too computationally demanding.

The literature on approximate Bayesian inference now includes several competing approximate methods that are often useful in different scenarios. Arguably, the two most common likelihood-free Bayesian methods in the {statistical} literature are approximate Bayesian computation (ABC) (see, e.g., \citealp{marin2012} for a review) and Bayesian synthetic likelihood (BSL) (\citealp{wood2010}, \citealp{price2018bayesian}).  {The machine learning community is also making significant contributions to likelihood-free methods, such as using emulation to reduce the number of calls to the model simulator (e.g.\ \citealp{Gutmann2016}) and training neural conditional density estimators such as normalizing flows (e.g.\ \citealp{Papamakarios2018}). We refer to \citet{Cranmer2019}  for a comprehensive review of machine learning approaches to likelihood-free methods.}  

Following the frequentest synthetic likelihood approach of \cite{wood2010}, \cite{price2018bayesian} develop an alternative to ABC by constructing a Bayesian version of synthetic likelihood, which places a prior distribution over the parameters and generates an approximate posterior. Unlike ABC, which implicitly estimates a version of the likelihood for the summaries, BSL directly assumes that the joint density of the summary statistics, conditional on the unknown model parameters, is Gaussian with unknown mean and variance. Using independent simulations obtained from the assumed data generating process (DGP), the mean and variance of the summary statistics are then estimated, and used to construct a (simulated) Gaussian likelihood function that is directly inserted into standard Markov Chain Monte Carlo (MCMC) algorithms. \cite{price2018bayesian} demonstrate the BSL approach across several examples, and show that it often performs well in comparison with ABC.

BSL, and approximate Bayes methods more generally, are most often applied in situations where the complexity of the model that is assumed to have generated the observed data renders exact Bayesian inference infeasible. That is, by the very nature of the problems to which BSL is commonly applied, the model is so complex that we can not easily access the DGP and must instead resort to an approximate inference approach. However, while complicated, highly-structured models that allow for vast complexity allow us to explain critical features of the observed data, it is unlikely that any modeler will be able to construct an entirely accurate model that captures all features of the observed data. In short, all models are wrong and the scientist cannot obtain a ``correct'' one through excessive elaboration (\citealp{box1976science}).

The implications of such a statement are particularly worrying in the context of BSL, where the assumed underlying DGP is often very complex. Indeed, applying the above reasoning of Box, it must be the case that the models to which BSL is routinely applied are misspecified representations of the actual, or ``true'', DGP. In such situations, the application of BSL deserves further scrutiny given the recent results of \cite{frazier2020model}, which demonstrate that if the model is misspecified the ABC posterior can be ill-behaved. Given that the principles underlying ABC and BSL are qualitatively the same, further analysis is needed to ensure that BSL does not suffer from the same issues as ABC in cases where the model is misspecified.

Through several simulated and empirical examples, we demonstrate that if the assumed model is misspecified, point estimators and credible sets obtained from BSL are unreliable. {To circumvent this issue, we propose two novel versions of BSL that deliver ``robust'' inferences regardless of whether the model is correctly specified. Herein, we follow the robust statistics literature, as described in, e.g., \cite{hampel2011robust}, and consider a statistical inference procedure to be robust if it is not overly ``sensitive'' to departures from the underlying modeling assumptions.\footnote{Formalizing this notion of robustness requires a refined mathematical treatment that is beyond the scope of this paper. A precise definition of robustness requires the specification and use of infinitesimal neighborhoods that capture the degree of model misspecification, and which then allow us to formally define a notion ``sensitivity'' (see, e.g., \citealp{hampel2011robust}, and \citealp{rieder2012robust} for a discussion). Given this, we leave a formal study on the theoretical robustness of this new BSL approach for future research.}}

{This new BSL approach has three principle benefits over standard BSL. Firstly, this new procedure is less sensitive to model misspecification than standard BSL. In particular, the resulting posteriors are less affected by model misspecification than those obtained from BSL (see Section \ref{sec:diffs} for a detailed discussion and Section \ref{sec:examples} for specific examples). Consequently, this approach yields more reliable point estimators and uncertainty quantification in misspecified models. Second, this new BSL approach has an in-built mechanism for diagnosing model misspecification, which allows us to discern which components of the model may in-fact be misspecified. Lastly, this new approach is computationally robust in the following sense: when the model is misspecified, the standard BSL posterior can require an excessive number of model simulations to generate accurate samples, however, our proposed BSL approach does not suffer from this issue. Given the above notions of robustness, both statistical and computational, we refer to this new approach as robust BSL (R-BSL).}

{Through a series of examples, both simulated and empirical, we demonstrate that R-BSL yields reliable statistical inferences regardless of whether the model is correctly or incorrectly specified, and can consistently detect when the modeling assumptions are violated. Repeated sampling results demonstrate that R-BSL yields more accurate point estimators and quantifies uncertainty better than BSL in misspecified models. In addition, theoretical (and simulation results) demonstrate that if the model is correctly specified, R-BSL and BSL deliver similar statistical inferences. As such, R-BSL allows users to hedge against model misspecification, but ensures that the resulting inference remains accurate if the model is correctly specified.}

{Before moving on, we remark that throughout this paper, we assume that the Gaussian assumption of the synthetic likelihood is at least approximately correct. We refer the reader to the discussion in Section \ref{sec:discussion} for more details, and potential impacts on the resulting inferences when this assumption is invalid.}

The remainder of the paper is organized as follows. In Section two we give a brief overview of BSL and examine the consequences of model misspecification in BSL. Section three presents our robust approach to BSL. Through a sequence of examples discussed in Section four, we document the good performance of R-BSL and the poor performance of BSL across models with varying levels of misspecification. In addition, Section four contains an empirical application to the analysis of invasive species. {Section five concludes. Additional examples and the proofs of the technical results are given in the supplementary material.}

\section{Bayesian Synthetic Likelihood and Compatibility}
\subsection{Bayesian Synthetic Likelihood Framework}
We observe data ${\y}=(y_{1},\dots,y_{n})^{\top}$, $n\geq 1$, and denote by $P^n_0$ the true distribution of the observed sample. The true distribution is unknown and instead we consider that the class of probability measures  
$\{\theta\in\Theta\subset \mathbb{R}^{d_{\theta }},\;n\geq1:P^{n}_{{\theta }}\}$, for some value of $\theta$, have generated the data, and denote the corresponding conditional density as $p_n(\cdot|\theta)$. Given prior beliefs over the unknown parameters in the model $\theta$, represented by the probability measure $\Pi(\theta)$, with its density denoted by $\pi({\theta })$, our aim is to
produce draws from the exact posterior density 
$$
\pi({\theta \mid \y})\propto p_n(\y|\theta)\pi({\theta }).
$$ 

In situations where the likelihood is intractable, sampling from $\pi({\theta \mid \y})$ can be computationally costly or infeasible, however, so-called likelihood-free methods can still be used to conduct inference on the unknown parameters $\theta$. The most common implementations of these methods in the statistical literature are approximate Bayesian computation (ABC) and Bayesian synthetic likelihood (BSL). Both ABC and BSL generally degrade the data down to a vector of summary statistics and then perform posterior inference on the unknown $\theta$, conditional only on this vector of summary statistics.

More formally, let $\eta(\cdot):\mathbb{R}^{n}\rightarrow\mathbb{R}^{d_\eta}$ denote a $d_\eta$-dimensional map, $d_\eta\ge d_\theta$, that represents the chosen summary statistics, and let $\zb:=(z_1,\dots,z_n)^{\top}\sim P_\theta^n$ denote data simulated from the model $P_\theta^n$. For $G_n(\cdot|\theta)$ denoting the projection of $P^n_\theta$ under $\eta(\cdot):\mathbb{R}^n\rightarrow\mathbb{R}^{d_\eta}$, with $g_n(\cdot|\theta)$ its corresponding density, the goal of approximate Bayesian methods is to generate samples from the approximate or `partial' posterior $$\pi[{\theta \mid \eta(\y)}]\propto g_n[{\eta(\y)\mid\theta }]\pi(\theta).$$ 
However, given the complexity of the assumed model, $P_\theta^n$, it is unlikely that the structure of $G_n(\cdot|\theta)$ is any more tractable than the original likelihood function $p_n(\y|\theta)$. Therefore,  simulation-based sampling schemes must be applied to generate samples from $\pi[\theta\mid{\eta(\y)}]$.

The approximate methods of ABC and BSL differ in how $g_n(\cdot|\theta)$ is estimated. ABC forms an implicit nonparametric estimator of $g_n(\cdot|\theta)$, while BSL uses a parametric or semi-parametric \citep{an2018robust} approximation of $g_n(\cdot|\theta)$.{In particular, BSL replaces - an estimate of - the (intractable) density $g_n(\cdot|\theta)$ by a multivariate Gaussian approximation: $$\mathcal{N}\left[\eta;\mu(\theta),\Sigma(\theta)\right],$$where $\mu(\theta)$ and $\Sigma(\theta)$ denote the mean and variance of the summary statistics. In cases where $\mu(\theta),\Sigma(\theta)$ are known we can obtain the ``exact'' BSL posterior}
\begin{flalign*}
{\pi}[\theta\mid\eta(\y)]&\propto\mathcal{N}\left[\eta(\y);\mu(\theta),\Sigma(\theta)\right]\pi(\theta).\nonumber
\end{flalign*}{However, in almost any practical example $\mu(\theta)$ and $\Sigma(\theta)$ are unknown and we must replace these quantities with the estimated counterparts $\mu_m(\theta)$ and $\Sigma_m(\theta)$, obtained as} 
\begin{flalign*}
\mu_m(\theta)&=\frac{1}{m}\sum_{i=1}^{m}\eta(\zb^i),\;\;\Sigma_m(\theta)=\frac{1}{m}\sum_{i=1}^{m}\left[\eta(\zb^i)-\mu_m(\theta)\right]\left[\eta(\zb^i)-\mu_m(\theta)\right]^{\top},
\end{flalign*}and where each simulated data set $\zb^i$, $i=1,\dots,m$, are generated iid from $P^n_{\theta}$; that is, both $\mu_m(\theta)$ and $\Sigma_m(\theta)$ depend on the simulated data sets $\zb^1,\dots,\zb^m$. The Gaussian approximation is then directly used within an MCMC sampling scheme to sample from the following approximation to the partial posterior, hereafter referred to as the BSL posterior, 
\begin{flalign}\label{eq:BSL_post}
\hat{\pi}[\theta\mid\eta(\y)]&\propto\bar{g}_n[{\eta(\y)\mid\theta }]\pi(\theta),
\\
\bar{g}_n[{\eta(\y)\mid\theta }]&:=\int \mathcal{N}\left[\eta(\y);\mu_m(\theta),\Sigma_m(\theta)\right]\left\{\prod_{i=1}^{m}g_n[\eta(\zb^i)\mid\theta]\right\}\text{d}\eta(\zb^{1})\cdots\text{d}\eta(\zb^m).\nonumber
\end{flalign}

{\citet{price2018bayesian} demonstrate empirically that the BSL posterior depends weakly on $m$, provided that $m$ is chosen large enough so that the plug-in synthetic likelihood estimator has a small enough variance to ensure that MCMC mixing is not adversely affected.  In this paper we choose $m$ and the number of MCMC iterations large enough so that the Monte Carlo error arising from MCMC is small.}

Due to the parametric nature of equation \eqref{eq:BSL_post}, BSL can often treat summary statistics of larger dimension than ABC and can lead to sharper inference in some cases. While the validity of the Gaussian approximation is often warranted if the underlying summaries satisfy a central limit theorem (\citealp{wood2010}), even in cases where the summary statistics are far from Gaussian, BSL has displayed some insensitivity to violations of this assumption (\citealp{price2018bayesian}). However, if the statistics are very far from being Gaussian, this can result in a significant loss of accuracy, see \citet{an2018robust} for a demonstration.  

\subsection{Model Incompatibility and its Consequences}

{BSL implicitly maintains that the assumed model can generate simulated summary statistics $\eta(\zb)$ that can match the observed summary statistics $\eta(\y)$.} That is, BSL is not required to match every aspect of the data, but only those features of the data that are captured via the summary statistics $\eta(\y)$. This differs from a standard Bayesian framework based on a likelihood, where, under general regularity conditions, the posterior ultimately gives higher probability mass to values of $\theta\in\Theta$ that ensure the Kullback-Leibler (KL) divergence
$$\mathcal { D } \left( P^n _ { 0 } \| P^n _ { \theta } \right) =\int \log \left\{ \frac { p^n _ { 0 } ( \mathbf { y } ) } { p_n(\y|\theta) } \right\} \dt P^n _ { 0 } ( \mathbf { y } ) $$ is as close to zero as possible. When the model is misspecified, i.e., when $P_0^n\not= P^n_\theta$ for any $\theta\in\Theta$, following \cite{kleijn2012bernstein}, the posterior eventually places increasing mass on the value $\theta^*\in\Theta$ that minimizes the KL-divergence. 

{Given that BSL attempts to simulate summary statistics $\eta(\zb)$ that can match the value of the observed summary statistic $\eta(\y)$, as measured by a weighted Euclidean norm, KL divergence is not the most meaningful notion of model misspecification associated with BSL. A more meaningful notion is whether or not $\eta(\zb)$ can match $\eta(\y)$ in terms of the Euclidean norm. Therefore, we follow \cite{marin2014relevant}, and \cite{frazier2020model}, and say that the assumed model is misspecified when it can not generate summaries $\eta(\zb)$ that can match $\eta(\y)$. More formally, for $b_0:=\plim_{n} \eta(\y)$ and $b(\theta):=\plim_{n} \eta(\mathbf{z})$, denoting the probability limits of the summaries as $n\rightarrow\infty$, we can state this notion of misspecification as follows.}\footnote{The fact that we require the summary statistics in BSL to concentrate to well-defined limit counterparts should not come as a surprise. \cite{frazier2016asymptotic} have demonstrated that this concentration is necessary to formally discuss the asymptotic behavior of ABC, while \cite{frazier2019bayesian} echos this finding in the specific context of BSL.} 
\begin{definition}\label{def1}
The model $P^n_\theta\times \Pi$ and summary statistic map $\eta(\cdot)$ are compatible  if $$\inf_{\theta\in\Theta}\|b(\theta)-b_0\|=0.$$
\end{definition}
\noindent {Heuristically, compatibility requires that asymptotically $\eta(\y)$ must be in the range of $\eta(\zb)$ and implies that, for some value of $\theta$, $\eta(\zb )$ can recover $\eta(\y)$ when $\zb $ is simulated under $P^n_\theta$. Compatibility is not concerned with the distributions of $\eta(\y)$ and $\eta(\zb)|\theta$, which ultimately must be degenerate if they are to be informative about $\theta$, but only their probability limits. Employing this notion of model misspecification allows us to analyze a large set of examples, since all this concept requires is that the summaries satisfy some weak law of large numbers.\footnote{It is also useful to point out that compatibility is not directly related to, or interpreted in terms of, any statistical divergence, and is only related to the minimum achievable distance between observed and simulated summaries (in the infinite data limit). While it may be possible to recast compatibility in terms of a statistical divergence, it is not clear what, if any, additional insights such an analysis would yield.}} 

When the model is not compatible, it can not (asymptotically) replicate the value of the observed statistics and, following the nomenclature in \cite{frazier2020model}, we say that the model is \textit{misspecified in the BSL sense}.\footnote{This notion of model misspecification is a ``global'' notion of misspecification, and is precisely the same notion of model misspecification used in \cite{marin2014relevant} and \cite{frazier2020model}. This is in contrast to the notion of ``local'' misspecification that is commonly entertained in the robust statistics literature.}
As recently discussed by \cite{frazier2020model}, in the context of ABC, when approximate methods are based on a model and summary statistic combination that are not compatible, the resulting posteriors can be ill-behaved and statistical inferences based on these posteriors can be highly-unreliable. Given that ABC and BSL are based on the same principles, it is highly likely that BSL will suffer from the same issues as ABC when the above compatibility condition is not satisfied. While we demonstrate this with realistic examples in Section \ref{sec:examples}, we first consider an artificially simple example where BSL should perform well, but due to model misspecification, BSL inference can be unreliable.

\subsection{Toy Example 1: Contaminated Normal Model}\label{sec:toynorm}

To demonstrate how BSL can fail under model misspecification, we consider an artificially simple example: our goal is inference on the unknown mean parameter $\theta$ in the assumed model 
\begin{equation}
y_i = \theta+{v}_i,\;{v}_i \stackrel{iid}{\sim}\mathcal{N}(0,1)\label{eq:mod1}.
\end{equation}
However, we  consider that the assumed model in \eqref{eq:mod1} is correct, but only for a portion of the data, $\omega\in(0,1)$, while the remaining portion of the data, $1-\omega$, is contaminated by outliers that are also normal but can have a much larger variance than unity. The true DGP can then be stated as the mixture model
\begin{equation}\label{eq:truemod}
y_i=\begin{cases}
\theta+\epsilon_{1,i},\;\epsilon_{1i}\sim\mathcal{N}(0,1),\;\text{ with probability } \omega\\\theta+\epsilon_{2,i},\;\epsilon_{1i}\sim\mathcal{N}(0,\sigma_\epsilon^2),\;\text{ with probability }1-\omega
\end{cases}	.
\end{equation}

Given that our assumed model is \eqref{eq:mod1}, the most reasonable set of summary statistics to choose are the sample mean $\eta_{1}(\mathbf{y})=\frac{1}{n}\sum_{i=1}^{n}{y}_{i}$ and sample variance $\eta_{2}(\mathbf{y})=\frac{1}{n-1}\sum_{i=1}^{n}({y}_{i}-\eta_{1}(\mathbf{y}))^{2}$. It can easily be shown that when $\sigma^2_\epsilon$ in the true DGP, \eqref{eq:truemod}, satisfies $\sigma^2_\epsilon\neq1$, the model is not compatible with the variance summary, and Definition \ref{def1} is not satisfied. 

Regardless of this model misspecification, one would think that BSL should perform well: the first summary statistic, $\eta_1(\y)$, is Gaussian, the statistic $\eta_2(\y)$ satisfies the central limit theorem, and $\eta_1(\y)$ is sufficient for $\theta$.  Moreover, the model is only wrong for a proportion of the data. 

For this experiment, we fix the level of data contamination at $1-\omega=0.20$, and generate simulated data
sets for $\y$ where each data set corresponds to a different value of
${\sigma}^2_\epsilon$. We choose a grid of values for ${\sigma}^2_\epsilon$ so that the \textit{sample (and population) standard deviation} of $\y$ range from 1 to 2, with evenly spaced increments of 0.10. The observed data is generated so that the \textit{sample (and population) mean} is fixed at 1 for all simulated data sets.\footnote{Since all BSL observes for the purpose of inference on $\theta$ is the sample mean and variance, generating data so that the sample mean and variance take on fixed values allows us to isolate the impact of model misspecification.}  The sample size across the experiments is taken to be $n=100$. Our prior for $\theta$ is $\theta\sim \mathcal{N}(0,10)$.

For BSL, we use $m=100$ simulated data sets to estimate the mean and variance of the summaries. Sampling of the BSL posterior is {implemented using random-walk Metropolis-Hastings (RWMH),} where the variance of the proposal is set to the exact posterior variance. The sampler is initialized at $\theta=0$, and run for 25,000 iterations, with the first 10,000 iterations discarded for burn-in.

Panel A of Figure \ref{fig_bsl1} plots the acceptance rates across the different levels of model misspecification. From the results in Panel A, we see that at large levels of misspecification RWMH has a very difficult time exploring the parameter space. As a consequence, the MCMC chain stays stuck for long periods of time and the acceptance rates plummet from a peak of about $70\%$, to a low of less than $0.01\%$. In Panel B of Figure \ref{fig_bsl1} we see that the posterior median of BSL, and the resulting credible set, vary significantly as the level of model misspecification in the data increases. Consequently, for values of the sample standard deviation larger than about 1.60, statistical inference based on the BSL posterior is not reliable. 

The inaccuracy with which the BSL posterior is sampled is a consequence of the incompatibility for $\eta_2(\y)$: no matter the value of $\theta$, the summary statistic $\eta_2(\y)$ can not be matched by the simulated counterpart $\eta_2(\zb)$. Therefore, the actual value of the observed statistic $\eta_2(\y)$, for any value of $\theta$, will be  in the tails of the Gaussian approximation for the posterior, which are inherently much noisier than values in the central mass of the approximation. Subsequently, this leads to a very noisy acceptance step and causes the MCMC chain to stick, with the overall result being unreliable inference for $\theta$. 

\begin{figure}[htb]
	\centering
	\includegraphics[width=17cm, height=6cm]{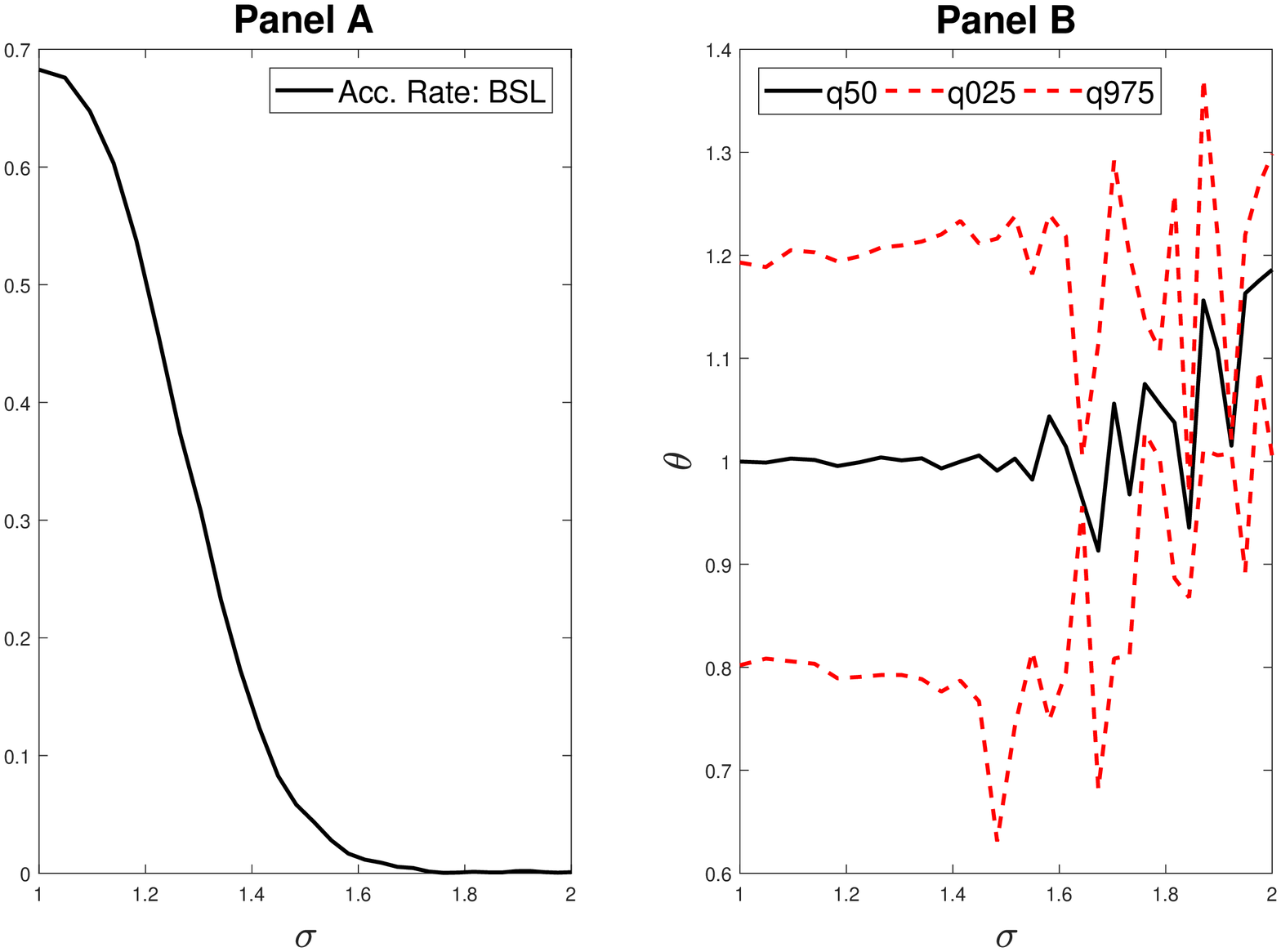}  
	\caption{{Panel A: acceptance rates for BSL across all levels of misspecification. Panel B: posterior medians and credible sets for BSL across the different levels of model misspecification. In both panels, the value of $\sigma$ denotes the sample (and population) standard deviation for the experiment. }}
	\label{fig_bsl1}
\end{figure}

\section{Robust Bayesian Synthetic Likelihood}
We propose two possible strategies for conducting inference using BSL when the model and summaries are incompatible {(i.e., when the model is misspecified in the BSL sense)}. The first approach augments the mean of the simulated summaries with additional free parameters, while the second approach augments the variance of the simulated summaries with additional free parameters. Both specifications allow us to conduct {reliable statistical inference on the model parameters regardless of model misspecification, and allows us to determine which of the summaries are incompatible with the data. Given this robustness,  throughout the remainder we refer to these approaches as ``mean'' and ``variance'' robust BSL (R-BSL).\footnote{We remind the reader that, following the robust statistics literature, we say a procedure is robust if it is not overly sensitive to model misspecification and/or if it can accurately diagnose model misspecification. Since this new approach accomplishes both of these tasks, the robust moniker is appropriate.}} 

  %We denote this prior density for $\gamma$ by $\pi(\gamma)$. 

\subsection{Mean Adjustment and Prior Specification}
Incompatibility implies that the observed statistics $\eta(\y)$ can not be recovered by the simulated mean $\mu_m(\theta)$, for any $\theta\in\Theta$, with probability converging to one.\footnote{Recall that $\eta(\y)$ is most often a sample mean so that its support is, with large probability, a shrinking ball around the point $b_0$.} Therefore, one approach to create a BSL procedure that is robust to this incompatibility issue is to adjust the vector of simulated means. This can be accomplished by adding to $\mu_m(\theta)$ an additional free parameter $\Gamma$, where $\Gamma=(\gamma_1,\dots,\gamma_{d_\eta})^\top\in\mathcal{G}\subset\mathbb{R}^{d_{\eta}}$, 
so that $\eta(\y)$ will always be in the support of this new simulated mean, even as the sample size diverges. Defining the joint vector of unknown parameters as $\zeta:=(\theta^{\top},\Gamma^\top)^\top\in\Theta\times \mathcal{G} \subset\mathbb{R}^{d_\theta}\times\mathbb{R}^{d_{\eta}}$, we define the vector of simulated means for use in BSL as
\begin{flalign*}
\phi_m(\zeta)&=\mu_m(\theta)+\text{diag}\left[\Sigma^{1/2}_m(\theta)\right]\Gamma.
\end{flalign*}Note that, by considering the scaled adjustment term, $\text{diag}[\Sigma^{1/2}_m(\theta)]\Gamma$, we ensure that these components are measured in the same units as $\mu_m(\theta)$, which allows us to treat $\Gamma$ as if they were unitless. 

Given this linear adjustment, and under weak conditions on the summary statistics and the parameter space $\Theta\times\mathcal{G}$, it is simple to see that $\phi_m(\zeta)$ will be compatible with $\eta(\y)$ for any prior choice on $\Gamma$ such that each individual component of $\Gamma$ has support over $\mathbb{R}$. 

Denote the prior on $\zeta$ by $\pi(\zeta)$. Following \cite{price2018bayesian}, the augmented BSL target, which we refer to as the Robust BSL-mean (R-BSL-M) posterior, is to generate samples from\footnote{The scaling of the perturbation by $\text{diag}[\Sigma^{1/2}_m(\theta)]$, gives the impression that the magnitude of the perturbation is decreasing as $n$ increases, since $\text{diag}[\Sigma^{1/2}_m(\theta)]$ is decreasing as $n$ increases. However, a simple exploration of the R-BSL-M posterior demonstrates that this scaling has no effect on these components, as they are themselves weighted by $\Sigma_m^{-1/2}(\theta)$ within the Gaussian kernel.} 
\begin{flalign*}
\hat{\pi}\left[\zeta\mid\eta(\y)\right]\propto \bar{g}_n\left[\eta(\y)|\zeta\right]\pi(\zeta),
\end{flalign*}where
\begin{flalign}\label{eq:rbsl1}
\bar{g}_n\left[\eta(\y)|\zeta\right]=\int \mathcal{N}\left[\eta(\y);\phi_m(\zeta),\Sigma_m(\theta)\right]\left\{\prod_{i=1}^{m}g_n[\eta(\zb^i)\mid\theta]\right\}\text{d}\eta(\zb^{1})\cdots\text{d}\eta(\zb^m).
\end{flalign}

\subsubsection*{Prior Choice: Laplace Prior}
To ensure that the observed summary $\eta(\y)$ can always be recovered by $\phi_m(\zeta)$, even when $n$ is very large, our prior on the components of $\Gamma$ should allow for $\text{diag}[\Sigma^{1/2}_m(\theta)]\Gamma$ to escape the support of $\mu_m(\theta)$ with large probability. However, given that some components of the original $\mu_m(\theta)$ are likely compatible with some components of $\eta(\y)$, we want to make sure that $\text{diag}[\Sigma^{1/2}_m(\theta)]\Gamma$ does not unduly perturb the components that are compatible. Therefore, we should choose a prior that places the vast majority of its mass near the origin. In this way, our prior choice for $\Gamma$ should induce \textit{``shrinkage''} in the components of $\Gamma$: only the components of $\Gamma$ that correspond to incompatible summaries should receive significant posterior probability away from the origin, while the components of $\Gamma$ that correspond to compatible summaries should have the majority of their posterior mass near the origin. 

With these dual requirements in mind,  and given that each component of $\Gamma$ has the same prior scale, we propose to follow the Bayesian Lasso literature (\citealp{park2008bayesian}) and use independent Laplace (i.e., double-exponential) priors for each component of $\Gamma$, with fixed location $0$ and common scale $\lambda>0$:
\begin{equation}\label{eq:laplace_prior}
\pi(\Gamma):=\prod_{j=1}^{d_\eta}\frac{1}{2\lambda} e^{-\frac{|\gamma_{j}|}{\lambda}}=\left(\frac{1}{2\lambda}\right)^{d_\eta}e^{-\frac{1}{\lambda}\sum_{j=1}^{d_\eta}|\gamma_j|}.
\end{equation} When convenient, we denote this prior by $\text{La}(0,\lambda)$. The Laplace prior for $\Gamma$ guarantees that the majority of prior mass for $\gamma_j$ is near the origin, but has thick enough tails so that $\phi_m(\zeta)$ is compatible with virtually any $\eta(\y)$ that would be used in practice. 

The hyper-parameter $\lambda$ should be chosen so that the prior support of $\pi(\Gamma)$ complements the support of $\mu_m(\theta)$. That is, $\lambda$ should be chosen so that the tails of $\Gamma$ are thicker than those of $\mu_m(\theta)$, which will allow us to detect deviations from compatibility, but not so large as to cause the statistic $\phi_m(\zeta)$ to have heavy tails. Indeed, if the tails of $\phi_m(\zeta)$ are too heavy, the implicit normality assumption made in BSL will be violated and can result in inefficient sampling. 

Since there is no reason to believe a priori that $\theta$ and $\Gamma$ are related, we take as our overall prior on $\zeta$ in R-BSL-M to be $$\pi(\zeta):=\pi(\theta)\text{La}(0,\lambda).$$  {As a default choice of prior, we suggest to select $\lambda = 0.5$, as this places most of the prior support for allowing up to $\pm 3$ standard deviations shift in the mean for each summary statistic (see Figure \ref{fig:robust_priors})}.

\begin{figure}[h!]
	\centering
	\includegraphics[width=15cm, height=7cm]{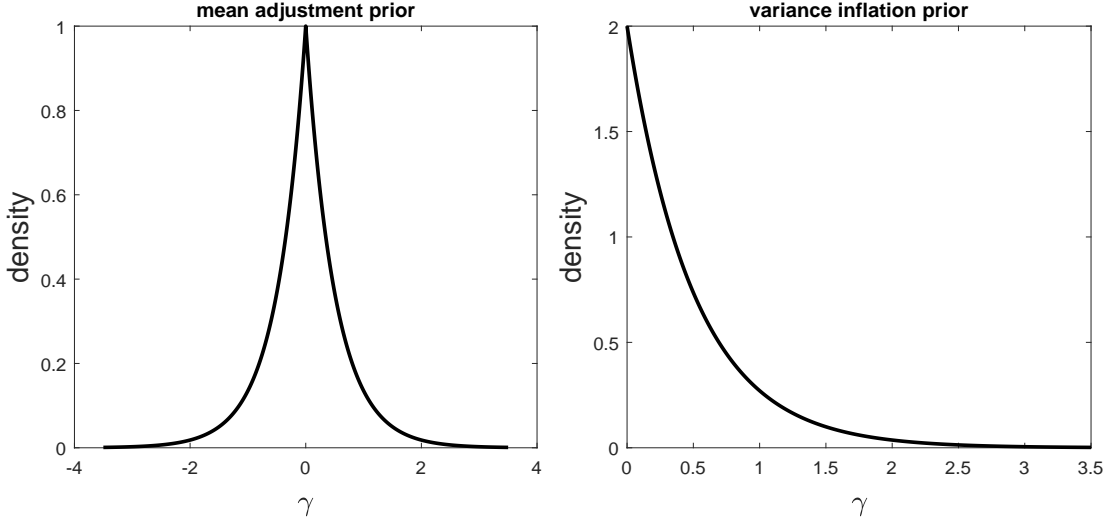}  
	\caption{Default prior distribution on $\gamma$ for mean adjustment (left) and variance inflation (right).}%
	\label{fig:robust_priors}
\end{figure}

\subsection{Variance Compatibility and Prior Specification}
While one approach to ensure compatibility is to adjust the mean of the simulated summaries, an alternative is to inflate the variance of the simulated summaries to ensure that $\eta(\y)$ is always in the support of $\mu_m(\theta)$.\footnote{Recall that, if $\eta(\y)$ is a sample average, the sample variance of $\mu_m(\theta)$ is converging to zero as either $n$ or $m$ diverges. Effectively, this has the effect of shrinking the support of $\mu_m(\theta)$.}

Under regularity conditions and for fixed $m$, it is likely to be the case that the centered statistic $Z_{n,m}(\theta):=\{\mu_m(\theta)-b(\theta)\}$ behaves as $Z_{n,m}(\theta)=O_{P}(1/(m\sqrt{n}))$, with the variance of $Z_{n,m}(\theta)$ decreasing like $1/(mn)$. Consequently, for $n$ (or $m$) large enough, if for a given value of $\theta$ the statistic $\{\eta(\y)-b(\theta)\}$ is more than a few standard deviations (as measured by $\Sigma^{1/2}_{m}(\theta)$) away from $\{\mu_m(\theta)-b(\theta)\}$, we can effectively view the summaries as being incompatible. %In such cases, the BSL posterior is likely to have low acceptance rates in this region of the parameter space. 

Given this characterization, an alternative approach to ensure that $\mu_m(\theta)$ is compatible with $\eta(\y)$ is to artificially inflate the variance $\Sigma_m(\theta)$ so that the variance of $Z_{n,m}(\theta)$ never completely collapses to zero, and thus $\{\eta(\y)-b(\theta)\}$ can always be found in the support of $Z_{n,m}(\theta)$, albeit perhaps with small probability. More specifically, we propose to artificially inflate the variance used within BSL by adding to $\Sigma_m(\theta)$ the free parameters $\Gamma=(\gamma_1,\dots,\gamma_{d_\eta})^\top$. 

Recalling $\zeta=(\theta^\top,\Gamma^\top)^\top$, a robust BSL procedure based on adjusting the variance can be implemented by re-defining the variance of the simulated statistics used within BSL to be 
\begin{flalign}\label{eq:varadj}
V_{m}(\zeta):=\Sigma_{m}(\theta)+\begin{pmatrix}
[\Sigma_{m}(\theta)]_{11}^{}\gamma^2_1&0&\dots&0\\0&[\Sigma_{m}(\theta)]_{22}\gamma^2_2&\dots&0\\\vdots&\dots&\ddots&\vdots\\0&\cdots&\cdots&[\Sigma_{m}(\theta)]_{d_\eta d_\eta}^{}\gamma^2_{d_\eta}
\end{pmatrix},
\end{flalign}where $\left[\Sigma_m(\theta)\right]_{ii}$ denotes the $(i,i)$ element of $\Sigma_m(\theta)$. Given the structure of $V_m(\zeta)$, $\gamma_i$ can be interpreted as an inflation factor operating on the standard deviations of the original BSL variance.  An equivalent interpretation is that the $i$-th, $i=1,\dots,d_\eta$, BSL variance is multiplied by the factor $1+\gamma_i^2$. Using $V_m(\zeta)$ in place of $\Sigma_m(\theta)$ in the BSL posterior target, \eqref{eq:BSL_post}, and for $\pi(\zeta)$ an appropriate prior on $\zeta$, the Robust BSL-variance (R-BSL-V) posterior is given as: 
\begin{flalign*}
\hat{\pi}\left[\zeta|\eta(\y)\right]&\propto \bar{g}_n[\eta(\y)|\zeta]\pi(\zeta),
\end{flalign*}where
\begin{flalign}\label{eq:rbsl2}
\bar{g}_n\left[\eta(\y)|\zeta\right]=\int \mathcal{N}\left[\eta(\y);\mu_m(\theta),V_m(\zeta)\right]\left\{\prod_{i=1}^{m}g_n[\eta(\zb^i)\mid\theta]\right\}\text{d}\eta(\zb^{1})\cdots\text{d}\eta(\zb^m).
\end{flalign}

{We note that the variance adjustment approach is operationally similar to using a tempered version of the synthetic likelihood, where the parameter $\Gamma$ controls the ``amount'' of tempering, in that the value of $\Gamma$ allow us to artificially fatten the tails of the likelihood. Tempered likelihoods in Bayesian inference are often suggested as a means of conducting robust inference in the context of model misspecification (\citealp{bissiri2016general}). While interesting, a thorough comparison between these two approaches is beyond the scope of this paper and is left for future research. } 
		
\subsubsection*{Prior Choice: Exponential Prior}
Note that, by considering the standardization in \eqref{eq:varadj}, we ensure that each $\gamma_i$ has the same scale and can be considered as unit-less. Moreover, similar to the case of the mean adjustment BSL approach, there is no reason to believe there is any \textit{a priori} dependence between $\theta$ and $\Gamma$, so we can consider independent priors, i.e, $\pi(\zeta):=\pi(\theta) \pi(\Gamma)$. While several prior choices exist for $\Gamma$, following the arguments for the prior choice in the mean adjustment procedure, we need to choose a prior for the components of $\Gamma$ so that there is a large amount of prior mass near the origin, and enough mass out in the tails to ensure we can detect incompatible summaries. 

To this end, we consider independent exponential priors for each component $\gamma_i$, $(i=1,\dots,d_\eta)$, with common rate $\lambda>0$: 
\begin{flalign*}
\pi(\Gamma):=\prod_{i=1}^{d_\eta}\lambda e^{-\lambda \gamma_i}=\lambda^{d_\eta}e^{-\lambda\sum_{i=1}^{d_\eta}\gamma_i}.
\end{flalign*}
The hyper-parameter $\lambda$ should be chosen so that a large amount of prior mass is close to the origin, so as not to over-inflate the variance of the simulated summaries that are compatible. 

While this choice of prior is not, strictly speaking, a shrinkage prior, it is still the case that we should observe some shrinkage like behavior for summaries that are compatible. That is, for the summaries that are compatible, this additional inflation by $\Gamma$ is unnecessary and we expect that, for appropriate choices of $\lambda$, the addition of this component will not greatly affect the corresponding components in the variance. In contrast, for the summaries that are not compatible, this adjustment term is critical to ensure that the variance of the summaries is large enough to contain the observed summary $\eta(\y)$. {As a default choice of prior, we suggest to select a mean of $\lambda = 0.5$, as this places most of the prior support for allowing an additive inflation in the variance of up to $3$ times the standard deviation of each summary statistic (see Figure \ref{fig:robust_priors})}.

\subsection{Comparison of BSL and R-BSL}\label{sec:diffs}
At this stage, it is useful to compare and contrast BSL and our robust approach to BSL under model misspecification to understanding why R-BSL will produce more reliable/robust statistical inferences under model misspecification. To this end, we now compare the behavior of the dominant terms within the BSL and R-BSL likelihoods under model misspecification, which will highlight the fundamental differences between the posteriors that can emerge in practice.

In the case of R-BSL-M, the behavior of the posterior is driven by the quadratic form
 $$\|\phi_m(\zeta)-\eta(\y)\|_{\Sigma_m(\theta)}=\left[\phi_m(\zeta)-\eta(\y)\right]^{\top}{\Sigma^{-1}_m(\theta)}\left[\phi_m(\zeta)-\eta(\y)\right].$$ The R-BSL-M posterior assigns higher probability mass to values of $(\theta,\Gamma)$ for which $\|\phi_m(\zeta)-\eta(\y)\|_{\Sigma_m(\theta)}$ is ``small'', i.e., values of $(\theta,\Gamma)$ so that $\phi_m(\zeta)=\mu_m(\theta)+\text{diag}\left[\Sigma^{1/2}_m(\theta)\right]\Gamma$ is close to $\eta(\y)$. 
 
 The behavior of the R-BSL-V posterior is driven by the quadratic form
 $$ \|\mu_m(\theta)-\eta(\y)\|_{V_{m}(\zeta)}=\left[\mu_m(\theta)-\eta(\y)\right]^\top V_{m}^{-1}(\zeta)\left[\mu_m(\theta)-\eta(\y)\right].$$ If there are many values of $\theta$ for which $\|\mu_m(\theta)-\eta(\y)\|$ is already small, then variance inflation is not needed, and the resulting posterior for $\Gamma$ will be uninformative (and resemble the prior); if there are no values of $\theta$ that make $\|\mu_m(\theta)-\eta(\y)\|$ ``small'', then $\|\mu_m(\theta)-\eta(\y)\|_{V_{m}(\zeta)}$ can always be made small by choosing a large value of $\Gamma$. Consequently, when no value of $\theta$ exists for which $\mu_m(\theta)$ is close to $\eta(\y)$, the R-BSL-V posterior will assign high posterior mass to values of $\theta$ for which $\mu_m(\theta)$ is as close as possible to $\eta(\y)$, and values of $\Gamma$ that ensure $\|\mu_m(\theta)-\eta(\y)\|_{V_{m}(\zeta)}$ is small.

The posterior behavior of standard BSL is driven by the quadratic form
$$
\|\mu_m(\theta)-\eta(\y)\|_{\Sigma_m(\theta)}=\left[\mu_m(\theta)-\eta(\y)\right]^{\top}\Sigma_m^{-1}(\theta) \left[\mu_m(\theta)-\eta(\y)\right].
$$While R-BSL has in-built mechanisms to ensure that the these quadratic forms can be made small, no such mechanism exists for BSL: by the inequality $\|a-b-c\|\geq \|b\|-\|a\|-\|c\|$, 
\begin{flalign*}
\|\mu_m(\theta)-\eta(\y)\|_{\Sigma_m(\theta)}&\geq \|b(\theta)-b_0\|_{\Sigma_m(\theta)}-\|b(\theta)-\mu_m(\theta)\|_{\Sigma_m(\theta)}-\|\eta(\y)-b_0\|_{\Sigma_m(\theta)
}\\&\geq \|b(\theta)-b_0\|_{\Sigma_m(\theta)}-o_p(1),
\end{flalign*}
where the second inequality comes from the fact that $b_0:=\plim_n\eta(\y)$ and $b(\theta):=\plim_n\mu_m(\theta)$, for any $m\geq1$. Under incompatibility, i.e., model misspecification, the term $ \|b(\theta)-b_0\|_{\Sigma_m(\theta)}$ can be quite large, and is strictly positive in the limit, which results in a BSL posterior that is sensitive to the level/nature of model misspecification and which can ultimately be ill-behaved (for example, bi-modal); see Section \ref{sec:maexam} for a particular example. In contrast, since R-BSL ensures a form of compatibility can be achieved, the resulting posteriors will not be (particularly) sensitive to the level of model misspecification, and will be better behaved than their BSL counterpart.

\subsection{Sampling Robust BSL}

As demonstrated in \citet{price2018bayesian}, the standard BSL target posterior is given by
\begin{align}
\hat{\pi}[\zeta\mid\eta(\y)]&\propto\bar{g}_n[{\eta(\y)\mid\theta }]\pi(\theta), \label{eq:bsl_target}
\end{align}
where
\begin{align*}
\bar{g}_n[{\eta(\y)\mid\theta}] &= \int \mathcal{N}\left[\eta(\y);\mu_m(\theta),\Sigma_m(\theta)\right]\left\{\prod_{i=1}^{m}g_n[\eta(\zb^i)\mid\theta]\right\}\text{d}\eta(\zb^{1})\cdots\text{d}\eta(\zb^m).
\end{align*}
\citet{price2018bayesian} use a Metropolis-Hastings algorithm to sample from \eqref{eq:bsl_target} that proposes $\theta^* \sim q(\cdot|\theta)$ according to a Markov transition, and estimating $\bar{g}_n[{\eta(\y)\mid\theta^*}]$ unbiasedly through a single draw from $\prod_{i=1}^{m}g_n[\eta(\zb^i)\mid\theta^*]$ and evaluating $\mathcal{N}\left[\eta(\y);\mu_m(\theta),\Sigma_m(\theta)\right]$.   Using pseudo-marginal MCMC arguments of \citet{andrieu+r09}, substituting this estimator into a Metropolis-Hastings algorithm produces an algorithm that targets \eqref{eq:bsl_target}. 

Our robust BSL methods operate on an extended state space over $\theta$ and $\Gamma$ with target distribution
\begin{align*}
\hat{\pi}[\zeta\mid\eta(\y)]&\propto\bar{g}_n[{\eta(\y)\mid\zeta }]\pi(\theta)\pi(\Gamma), \text{ where }\\\bar{g}_n\left[\eta(\y)|\zeta\right]&=\int \mathcal{N}\left[\eta(\y);\phi_n(\zeta),\Sigma_n(\theta)\right]\left\{\prod_{i=1}^{m}g_n[\eta(\zb^i)\mid\theta]\right\}\text{d}\eta(\zb^{1})\cdots\text{d}\eta(\zb^m)
&\text{ (Mean Adjustment)},\\\bar{g}_n\left[\eta(\y)|\zeta\right]&=\int
\mathcal{N}\left[\eta(\y);\mu_m(\theta),V_m(\zeta)\right]\left\{\prod_{i=1}^{m}g_n[\eta(\zb^i)\mid\theta]\right\}\text{d}\eta(\zb^{1})\cdots\text{d}\eta(\zb^m)&\text{ (Variance Adjustment)}.
\end{align*}
To sample these target distributions we use a component-wise MCMC algorithm that updates, in turn, $\theta$ conditional on $\Gamma$ and then $\Gamma$ conditional on $\theta$.  The update for $\theta$ is the same as in standard BSL, but where the adjusted mean or inflated variance is computed as appropriate using the current value of $\Gamma$.  As before, the update for $\theta$ involves generating $m$ model simulations, $\eta(\zb^{i})$, $i=1,\ldots,m$.

The update for $\Gamma$ holds the currently accepted model simulations fixed, and thus $\mu_m(\theta)$ and $\Sigma_m(\theta)$ are fixed within the update step for $\Gamma$.  Each component of $\Gamma$, $\gamma_j$, for $j=1,\ldots,d_\eta$, is updated separately, conditional on the current values of the remaining components (denoted $\gamma_{/j}$).  The full conditional distribution for $\gamma_j$ is given by
\begin{align*}
\pi(\gamma_j^*|\theta,\mu_m(\theta),\Sigma_m(\theta),\gamma_{/j}) & \propto \mathcal{N}\left[\eta(\y);\phi_m(\zeta^*),\Sigma_m(\theta)\right]\pi(\gamma_j^*) &\text{(Mean Adjustment)} \\
\pi(\gamma_j^*|\theta,\mu_m(\theta),\Sigma_m(\theta),\gamma_{/j}) & \propto \mathcal{N}\left[\eta(\y);\mu_m(\theta),V_m(\zeta^*)\right]\pi(\gamma_j^*) &\text{(Variance Inflation)},
\end{align*} 
where $\gamma_j^*$ is a realisation of $\gamma_j$ and $\zeta^* = [\theta, \gamma_j^*, \gamma_{/j}]$.  

We sample this full conditional distribution using a slice sampler, and, in particular, the ``stepping out'' and ``shrinkage'' procedures detailed in \citet{Neal2003}.  The appeal of the slice sampler is that the acceptance probability is one, and thus there are no tuning parameters that can affect the statistical efficiency. However, a stepping out width needs to be selected, which can impact the speed of the slice sampler.  Given that the components of $\Gamma$ are eventually scaled by the summary statistic standard deviation, and given that our prior choices effectively penalise large values, we expect each component of $\Gamma$ to be ${O}(1)$.  Thus, we set the stepping out width to be 1, except for the lower bound of $\gamma_j$ in the variance inflation, which is immediately set to 0; and hence the stepping out procedure is not required.  We find this choice of width to be suitable, and since updating $\Gamma$ does not require any model simulations, the slice sampler is very fast.  Hence, importantly, our robust BSL methods do not require any additional tuning and the run-time per iteration for non-trivial applications is not noticeably slower.  The full MCMC algorithm to sample from the R-BSL posteriors is provided in Algorithm \ref{alg:robst_bsl}.

\begin{algorithm}[!htp]
	\SetKwInOut{Input}{Input}
	\SetKwInOut{Output}{Output}
	\Input{Summary statistic of the data, $\eta(\y)$, the prior distribution, $\pi(\theta)$, the proposal distribution $q$, the number of iterations, $T$, and the initial value of the chain, $\theta^{0}$, $\Gamma^0$.}
	\Output{MCMC sample $(\theta^{0},\theta^{1}, \ldots, \theta^{T})$ and $(\Gamma^0,\Gamma^1,\ldots,\Gamma^T)$ from the robust BSL posterior.  Some samples can be discarded as burn-in if required. }
	Estimate $\mu_m(\theta^{0})$ and $\Sigma_m(\theta^{0})$ via $m$ independent model simulations at $\theta^{0}$ \\
	Compute $\phi_m(\zeta^0)=\mu_m(\theta^0)+\text{diag}\left[\Sigma^{1/2}_m(\theta)\right]\Gamma^0$ (mean adjustment) or $V_{m}(\zeta^0)$ (variance inflation) defined in \eqref{eq:varadj}.   \\
	Compute robust synthetic likelihood $L^0 = \mathcal{N}\left[\eta(\y);\phi_m(\zeta^0),\Sigma_m(\theta^0)\right]$ (mean adjustment) or $L^0 = \mathcal{N}\left[\eta(\y);\mu_m(\theta^{0}),V_{m}(\zeta^0)\right]$ (variance inflation) \\
	\For{$i = 1$ to $T$}{
		\%\%\% Obtain $\Gamma^i$ via the following, which does not require any model simulations \\
		\For{$j = 1$ to $d_\eta$}{
			Update $\gamma_j^{i}$ with a slice sampler with target density $\pi(\gamma_j^{i}|\theta^{i-1},\mu_m(\theta^{i-1}),\Sigma_m(\theta^{i-1}),\gamma_1^{i}, \ldots, \gamma_{j-1}^{i},  \gamma_{j+1}^{i-1}, \ldots, \gamma_{d_\eta}^{i-1}) \propto \mathcal{N}\left[\eta(\y);\phi_m(\zeta^*),\Sigma_m(\theta^{i-1})\right]\pi(\gamma_j^{i})$ (mean adjustment) or $\mathcal{N}\left[\eta(\y);\mu_m(\theta^{i-1}),V_{m}(\zeta^*)\right]\pi(\gamma_j^{i})$ (variance inflation) where $\zeta^* = [\theta^{i-1}, \gamma_1^{i}, \ldots, \gamma_{j-1}^{i}, \gamma_j^{i},  \gamma_{j+1}^{i-1}, \ldots, \gamma_{d_\eta}^{i-1}]$  \\
		}
		Denote $L^i = \mathcal{N}\left[\eta(\y);\phi_m([\theta^{i-1}, \Gamma^i]),\Sigma_m(\theta^{i-1})\right]$ (mean adjustment) or $L^i = \mathcal{N}\left[\eta(\y);\mu_m(\theta^{i-1}),V_{m}([\theta^{i-1}, \Gamma^i])\right]$ (variance inflation)  \\
		\%\%\% Update $\theta^i$ conditional on $\Gamma^i$ \\
		Draw $\theta^{*} \sim q(\cdot|\theta^{i-1})$\\
		Estimate $\mu_m(\theta^*)$ and $\Sigma_m(\theta^*)$ via $m$ independent model simulations at $\theta^*$ \\
		Compute $\phi_m(\zeta^*)$ (mean adjustment) or $V_{m}(\zeta^*) $ (variance inflation) defined in \eqref{eq:varadj}, where $\zeta^* = [\theta^{*}, \Gamma^i]$  \\
		Compute proposed adjusted synthetic likelihood $L^* = \mathcal{N}\left[\eta(\y);\phi_m(\zeta^*),\Sigma_m(\theta^*)\right]$ (mean adjustment) or $L^* = \mathcal{N}\left[\eta(\y);\mu_m(\theta^*),V_{m}(\zeta^*)\right]$ (variance inflation)  \\
		Compute Metropolis-Hastings ratio:
		\begin{align*}
		r &= \frac{L^* \pi(\theta^*)q(\theta^{i-1}|\theta^*)}{L^i \pi(\theta^{i-1})q(\theta^*|\theta^{i-1})}
		\end{align*}
		\eIf{$\mathcal{U}(0,1)<r$}{
			Set $\theta^{i}=\theta^{*}$, $\mu_m(\theta^{i})=\mu_m(\theta^{*})$ and $\Sigma_m(\theta^{i})=\Sigma_m(\theta^{*})$\\
		}{
			Set $\theta^{i}=\theta^{i-1}$, $\mu_n(\theta^{i})=\mu_n(\theta^{i-1})$ and $\Sigma_n(\theta^{i})=\Sigma_n(\theta^{i-1})$\\
		}
		\caption{Robust MCMC BSL.}
		\label{alg:robst_bsl}
	}
\end{algorithm}

It is important to note that under either R-BSL approach, we recover the original BSL target when $$\gamma_1=\dots=\gamma_{d_\eta}=0.$$Therefore, if the model $P^{n}_\theta$ can generate summaries $\eta(\zb)$ that match $\eta(\y)$, under an appropriate prior specification, the posterior $\hat{\pi}[\Gamma\mid\eta(\y)]$ should not differ substantially from the prior, with most of the posterior mass located near the origin. 

\subsection{Theoretical Properties of R-BSL}
If the model is compatible, what behavior should we expect from the R-BSL approach? Given the priors used in R-BSL, which place the majority of their mass near the origin, one would hope that when the compatibility condition (Definition \ref{def1}) is satisfied, the introduction of the additional parameters $\Gamma$ does not influence the BSL posterior for $\theta$. We demonstrate that when Definition \ref{def1} is satisfied the asymptotic behavior of the R-BSL posterior {of $\theta$ components behaves the same as the exact BSL posterior: namely, both posteriors concentrate all mass onto the value of $\theta$ that satisfies $b(\theta)=b_0$. In addition, we demonstrate that when Definition \ref{def1} is satisfied the R-BSL posteriors for the adjustment components converge to the prior. Consequently, we do not lose anything by using R-BSL when the model is correctly specified, but gain robustness to deviations from the modeling assumptions if the model is misspecified.} 

%To simply arguments and derivations, we demonstrate this result in the case of the ``exact'' BSL posterior, where we replace $\mu_{m}(\theta)$ and $\Sigma_{m}(\theta)$ with their idealized counterparts $$b_{}(\theta)=\mathbb{E}[\eta(\y)|\theta]\text{ and }A_n(\theta):=\left\{\mathbb{E}\left[\left\{\eta(\y)-b(\theta)\right\}^2\mid \theta\right]\right\}^{1/2}.$$ In this simplified scenario, the ``model likelihood'' for BSL become $$\bar{g}_n(\eta|\theta)=\mathcal{N}\left[\eta;b(\theta),A_n(\theta)^{\top}A_n(\theta)\right].$$

Before presenting the formal result, we must state some notation. Define $P^n_{0}$ as the true distribution generating $\mathbf{y}$. The map $\eta:\mathbb{R}^n\rightarrow\mathbb{R}^{d_\eta}$, $d_\eta\ge d_\theta$, which defines the summary statistics used in the procedure, satisfies $\eta(\y)\sim G_n^0$, where $G^0_n$ denotes the projection of $P_n^0$ under the map $\eta$, and denote by $g_n^0$ the density of $G_n^0.$ Likewise, recall that $G_n[\cdot|\theta]$ denotes the projection of $P^\theta_n$ under the map $\eta$. For real-valued sequences $\{a_{n}\}_{n\geq 1}$ and
$\{b_{n}\}_{n\geq 1}$: $a_{n}\lesssim b_{n}$ denotes $a_{n}\leq Cb_{n}$ for
some finite $C>0$ and all $n$ large, $a_{n}\gtrsim b_{n}$ denotes $a_{n}\geq Cb_{n}$ for
some finite $C>0$, and $a_{n}\asymp b_{n}$ implies $a_{n}\lesssim b_{n}$ and $a_{n}\gtrsim b_{n}$. The terms $O_P$ and $o_P$ have their usual connotations and the notation $\Rightarrow$ denotes weak convergence in distribution. 

Recall that, $b_{}(\theta)=\mathbb{E}[\eta(\y)|\theta]$ and $\Sigma(\theta):=\mathbb{E}\left[\left\{\eta(\y)-b(\theta)\right\}\left\{\eta(\y)-b(\theta)\right\}'\mid \theta\right].$ We impose the following regularity conditions. 
\begin{assumption}\label{ass:one}{
		There exists a sequence of positive real numbers $v_n$ diverging to $\infty$ such that, for some distribution $Q$ on $\mathbb{R}^{d_\eta}$ and some vector $b_0\in\mathbb{R}^{d_\eta}$, $v_n\left[\eta(\y)-b_0\right]\Rightarrow Q,\text{ \em{under} } P^0_n.$}
\end{assumption}
\begin{assumption}\label{ass:two}
	(i) The sequence $\{v_n\}_{n\ge1}$ is such that, for all $\theta\in\Theta$ and some $n$ large enough, there exists constants ${c}_1,{c}_2$, $c_1\leq c_2$, satisfying: $0<{c}_1\leq \|v_{n}^{}{\Sigma}^{}_{n}({\theta }%
	)\|_*\leq {c}_2<\infty$, for some matrix norm $\|\cdot\|_*$; (ii) For all $\theta\in\Theta$ and some $n$ large enough, the $d\times d$-matrix $A_{n}(\theta)$ is continuous in $\theta$.
\end{assumption}
\begin{assumption}\label{ass:three}
There exists a deterministic map $\theta\mapsto{b}(\theta)$, such that, for all $\theta\in\Theta$, and for constants $\alpha, u_0>0$, for all $0<u<u_0 v_n$,
%\begin{equation*}
${G}_n\left[ \|v_n\{\eta(\y)-b(\theta)\}\|>u\mid\theta\right] \leq c(\theta) u^{-\alpha},$
%\end{equation*}
uniformly for $n\ge1$ and where $\int_\Theta c(\theta)\pi(\theta)\text{d}\theta=O(1). $
\end{assumption}

\begin{assumption}\label{ass:four}(i) There exists some $\tau>0$ such that, for all $0<u <u_0 v_n$, the prior probability satisfies
	$\Pi \left[ \|{b}({\theta })-{b}_{0}\|\leq u \right]
	\asymp u ^{\tau}.$ (ii) The prior density $\pi(\theta)$ is continuous and satisfies $\pi(\theta_0)>0$. 
\end{assumption}

\begin{assumption}\label{ass:five} {(i)} The map $\theta\mapsto{b}(\theta)$
	%:{\Theta }\rightarrow \mathcal{B}$ 
	is continuous and injective, with $b(\theta_0)=b_0$ for some $\theta_0\in\Theta$, and satisfies:
	$\Vert {\theta }-{\theta }_{0}\Vert \leq L\Vert {b}({\theta })-b_0\Vert ^{\kappa }$ on some open neighbourhood of ${\theta }_{0}$ with $L>0$ and $\kappa >0$.
\end{assumption}

\begin{assumption}\label{ass:six} If Assumption \ref{ass:five} is satisfied, for any $\epsilon>0$, there exists $u,\delta>0$ and a set $V_n$ such that, for all $\theta\in\{\theta:\|b(\theta)-b_0\|\leq u v_n^{-1}\}$, $V_n\subset \left\{\eta\in\mathbb{R}^{d_\eta}:g_n^0(\eta)\lesssim {g}_n\left(\eta\mid \theta\right)\right\}\text{ where } P_n^0(V_n^c)<\epsilon.$
	
\end{assumption}
%\begin{remark}{\normalfont
The above assumptions are similar to those used in \cite{marin2014relevant} to deduce the behavior of Bayes factors in situations where inference is conditioned on summary statistics, as opposed to the entire data set. Due to space constraints, we defer a detailed discussion on these assumptions to Section 2 of the supplementary material. 
%}
%\end{remark}

{The following result, the proof of which is also given in the supplemental material, describes the theoretical behavior of the R-BSL posterior under the above assumptions.} 

\begin{proposition}\label{prop:one} 
Under Assumption \ref{ass:one}-\ref{ass:six}, for any $\delta>0$, 
$$
\Pi\left[\|\theta-\theta_0\|\leq\delta |\eta(\y)\right]=1+o_P(1).
$$
Moreover, for any $A\subseteq\mathcal{G}$: $$\Pi\left[\Gamma\in A|\eta(\y)\right]=\Pi[\Gamma\in A]+o_P(1).$$
\end{proposition}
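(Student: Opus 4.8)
The plan is to treat $\zeta=(\theta^\top,\Gamma^\top)^\top$ as an extended parameter and to write each marginal posterior as a ratio of integrals of the adjusted synthetic likelihood $\bar g_n[\eta(\y)\mid\zeta]$ against $\pi(\theta)\pi(\Gamma)$. Because Assumptions \ref{ass:one}--\ref{ass:six} mirror those of \cite{marin2014relevant}, I would adapt their Laplace-type analysis of summary-based posteriors, the new ingredient being the dependence of the kernel on $\Gamma$. The structural fact I would build on is the factorization from Section \ref{sec:diffs}: writing $D_m(\theta)=\text{diag}[\Sigma_m^{1/2}(\theta)]$, $s(\theta)=D_m(\theta)^{-1}\{\mu_m(\theta)-\eta(\y)\}$, and using $D_m\Sigma_m^{-1}D_m=R_m^{-1}$ for the correlation matrix $R_m(\theta)$, the R-BSL-M quadratic form becomes $\{s(\theta)+\Gamma\}^\top R_m(\theta)^{-1}\{s(\theta)+\Gamma\}$. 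Thus $\theta$ and $\Gamma$ enter the mean-adjusted kernel only through the translated, standardized discrepancy $s(\theta)+\Gamma$; the variance-adjusted kernel reduces similarly to $s(\theta)^\top\{R_m(\theta)+\text{diag}(\gamma_i^2)\}^{-1}s(\theta)$ together with a determinant penalty $|R_m(\theta)+\text{diag}(\gamma_i^2)|^{-1/2}$.

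For the concentration claim I would show that, although $\Gamma$ can in principle absorb incompatibility, the chosen priors make this prohibitively costly. Away from $\theta_0$, injectivity and continuity of $b$ (Assumption \ref{ass:five}) give $\|b(\theta)-b_0\|\ge c>0$, so driving $s(\theta)+\Gamma$ to zero forces $\|\Gamma\|$ to diverge at a rate proportional to $v_n\|b(\theta)-b_0\|$ (up to the spectral scaling of Assumption \ref{ass:two}); the Laplace, respectively exponential, tails then suppress the $\Gamma$-integrated likelihood like $\exp(-cv_n\|b(\theta)-b_0\|/\lambda)$. I would bound $\Pi[\|\theta-\theta_0\|>\delta\mid\eta(\y)]$ by splitting the numerator over $\{\|\theta-\theta_0\|>\delta\}$, which is then $\lesssim\exp(-cv_n)$ up to the common $|\Sigma_m|^{-1/2}$ factor, against a denominator bounded below by restricting to a $v_n^{-1}$-neighbourhood of $\theta_0$, where $s(\theta)=O_P(1)$ and the prior-mass and positivity conditions (Assumption \ref{ass:four}) guarantee a non-negligible contribution. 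The ratio is $o_P(1)$, which is the first display; Assumptions \ref{ass:two}--\ref{ass:three} are what render these bounds uniform in $\theta$.

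For the second claim I would show that the $\theta$-integrated likelihood $h(\Gamma)=\int\bar g_n[\eta(\y)\mid\zeta]\pi(\theta)\,\text{d}\theta$ becomes flat in $\Gamma$, so that $\hat\pi[\Gamma\mid\eta(\y)]=\pi(\Gamma)h(\Gamma)/\int\pi h\to\pi(\Gamma)$. Under compatibility the mass of $h(\Gamma)$ comes from a $v_n^{-1}$-neighbourhood of $\theta_0$; rescaling $w=v_n(\theta-\theta_0)$ turns the integral into $\int_{\mathbb{R}^{d_\theta}}\exp(-\tfrac12\{\tilde Jw+\xi+\Gamma\}^\top R_0^{-1}\{\tilde Jw+\xi+\Gamma\})\,\text{d}w$ plus vanishing remainders, with $\tilde J$ the standardized Jacobian of $b$ at $\theta_0$ and $\xi=O_P(1)$ the standardized noise. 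Since $w$ now ranges over all of $\mathbb{R}^{d_\theta}$, the finite shift $\Gamma$ acts as a translation of $w$ in the directions spanned by $\tilde J$ and is absorbed by the integral, so $h(\Gamma)$ loses its dependence on those components of $\Gamma$; in the variance case the analogous Gaussian integral produces a normalizer $|\tilde J^\top(R_0+\text{diag}(\gamma_i^2))^{-1}\tilde J|^{-1/2}$ that cancels the determinant penalty. This flatness yields the second display.

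\textbf{The main obstacle} is the uniform control of this Laplace-type reduction: one must show the remainders from expanding $s(\theta)$, $|\Sigma_m|$ and $R_m(\theta)$ about $\theta_0$ are negligible \emph{uniformly over $\Gamma$}, in particular over the growing-$\Gamma$ region that governs the tail contributions, where the interplay between the diverging rate $v_n$ and the $O_P(1)$ simulation noise is most delicate; here Assumption \ref{ass:six} supplies the density domination needed to transfer bounds from $g_n(\cdot\mid\theta)$ to $g_n^0$. A second, structural difficulty arises when $d_\eta>d_\theta$: the translation argument only removes the components of $\Gamma$ in the column space of $\tilde J$, leaving a residual quadratic form in the orthogonal directions, so one must argue separately that these over-identified directions contribute negligibly — either because they carry no identifying information on $\theta$ or by passing to a just-identified reduction of the summaries.
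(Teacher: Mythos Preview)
Your route differs from the paper's in two respects. For the concentration of $\theta$, you exploit the Laplace/exponential tails of $\pi(\Gamma)$ directly: away from $\theta_0$ the Gaussian kernel can only be kept large if $\|\Gamma\|$ grows like $v_n$, which the prior kills exponentially. The paper instead first proves a separate concentration lemma (its Lemma~1) for the \emph{standard}, un-augmented idealized BSL posterior, following \cite{marin2014relevant}, and then uses that lemma to restrict all $\theta$-integrals to a shrinking ball around $b_0$ before analysing the $\Gamma$-dependence. Your argument is more self-contained and makes explicit that it is the prior on $\Gamma$ that prevents the augmentation from destroying concentration; the paper's approach reuses existing BSL machinery and treats the $\Gamma$ part purely as a perturbation of an already-concentrating posterior.

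For the flatness of $h(\Gamma)$, both of you reduce to a Gaussian integral, but the paper changes variables in $b$-space rather than $\theta$-space: it sets $Z=v_n\{b(\theta)-\eta\}$, integrates over $\{\|Z\|\le M_n\}\subset\mathbb{R}^{d_\eta}$, and invokes the identity $\int_{\mathbb{R}^{d_\eta}}\exp\bigl(-\tfrac12 \tilde Z^\top \tilde Z+\tilde Z^\top\Gamma\bigr)\,\dt\tilde Z=(2\pi)^{d_\eta/2}\exp\bigl(\tfrac12\Gamma^\top\Gamma\bigr)$, which cancels the prefactor $\exp(-\tfrac12\Gamma^\top\Gamma)$ and yields flatness in one line; for R-BSL-V the Jacobian of $\tilde Z=[v_n^2V(b_0,\Gamma)]^{-1/2}Z$ absorbs the determinant penalty in the same way. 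Because this $Z$-integral runs over the \emph{same} dimension as $\Gamma$, the shift is absorbed completely and the over-identification issue you flag never surfaces. Note, however, that writing the integral with element $\dt b$ over $\mathbb{R}^{d_\eta}$ tacitly requires $d_\eta=d_\theta$ and $b$ a local diffeomorphism; the paper does not treat $d_\eta>d_\theta$ separately, so your ``second structural difficulty'' is in fact shared rather than something your approach creates. Finally, the paper works throughout with the \emph{idealized} likelihood (exact $b(\theta)$ and $\Sigma(\theta)$ in place of $\mu_m,\Sigma_m$), which removes the simulation-noise layer you identify as your main technical obstacle.
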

	Proposition \ref{prop:one} demonstrates that under compatibility the posterior for the model parameters, $\theta$, are asymptotically unaffected by the introduction of the $\Gamma$ components: the R-BSL posterior for $\theta$ concentrates all posterior mass onto the value $\theta_0$, the value under which $b(\theta)=b_0$. {Consequently, from Proposition 1 in \cite{frazier2019bayesian}, the R-BSL and BSL posteriors behavior similarly when the model is compatible (i.e., when Definition \ref{def1} is satisfied) and implies that R-BSL does not pay a penalty for being robust to model misspecification if the model is correctly specified.\footnote{We note that determining the theoretical behavior of BSL and R-BSL when compatibility is not in evidence is a significant undertaking, and a novel research topic in its own right. Therefore, we leave such technical details for future research.}}

	Proposition \ref{prop:one} also demonstrates that, under compatibility, the posterior for the $\Gamma$ components converge to the prior. This implies that under correct model specification the resulting posteriors for the $\Gamma$ components will not asymptotically concentrate on the origin. As discussed in \cite{bhattacharya2012bayesian}, this is not surprising given the relatively mild shrinkage priors placed on the adjustment components. We conjecture that if stronger shrinkage priors were employed, posterior concentration, toward the origin, for these components could also be achieved. However, the use of these more complex priors could create issues within the sampling. 
			
	When the compatibility condition in Definition \ref{def1} is not satisfied, the R-BSL posterior for $\Gamma$ will deviate from the prior. Therefore, the result of Proposition \ref{prop:one} can be used to determine the level of model misspecification by comparing the difference between the R-BSL posterior for $\Gamma$ and the prior for $\Gamma$. While visual detection will often be enough to determine if any meaningful differences between these two exist, any norm on the space of probability measures could be used to quantify this discrepancy.

	By analyzing the posterior elements of $\Gamma$ that differ from the prior, we can deduce precisely which of the summary statistics the assumed model can not match, i.e., which summaries are not compatible. This information can then be incorporated into subsequent modelling steps to construct a model that can more adequately capture the observed data. In this way, R-BSL can also be used as a model criticism device to help researchers locate discrepancies between the assumed model and the observed data.

\section{Examples}\label{sec:examples}
{In this section, we first consider two toy examples that demonstrate the statistical benefits of R-BSL, over BSL.} Next, we apply BSL and R-BSL to conduct inference in a challenging problem in ecology: using real data to model invasive toad populations. {Collectively, these examples echo the analysis in Section \ref{sec:diffs} and demonstrate that R-BSL delivers more robust statistical inferences than BSL under model misspecification.} {In all examples we use the default priors shown in Figure \ref{fig:robust_priors}.}

\subsection{Toy Example 1 Continued: Contaminated Normal Model}\label{sec:toynormcont}

In this section, we analyze the performance of R-BSL in the contaminated normal example. We refer the reader to Section \ref{sec:toynorm} for a detailed description of the underlying Monte Carlo design. 
Following the analysis in Section \ref{sec:toynorm}, we use precisely the same ``observed'' data in Section \ref{sec:toynorm}, and apply R-BSL to this data.\footnote{We again use a RWMH algorithm that is initialized at  $\theta=0$, and we run the sampler for 25,000 iterations, with the first 10,000 discarded for burn-in.} The acceptance rates for BSL, and R-BSL are plotted in Panel A of  Figure  \ref{fig:normal_rbsl}, while the resulting posterior median and credible sets are given in Panels B, C and D. 

{The results in Panels B, C and D of Figure \ref{fig:normal_rbsl} demonstrate that R-BSL yields reliable statistical inferences on $\theta$ regardless of the level of model misspecification. In stark contrast to BSL, across all the experiments, the posterior means and credible sets for R-BSL are virtually unaffected by the data contamination. Panel A of Figure  \ref{fig:normal_rbsl} demonstrates that the acceptance rate for the variance adjustment R-BSL approach is only slightly affected by the level of model misspecification, while the mean adjustment version does display some degradation but still maintains acceptance rates above $5\%$ in all cases. In comparison, when the sample standard deviation is equal to 2, the BSL acceptance rate is less than $0.01\%$.}

{In addition, we note that numerically implementing R-BSL is not much more computationally costly than implementing BSL. In this example, the execution time required to sample the R-BSL posterior (via Algorithm \ref{alg:robst_bsl}) was only 40\% slower than the execution time required to sample the BSL posterior, even though R-BSL is conducting inference on three times as many parameters. Moreover, the additional time per MCMC iteration for R-BSL is even less noticeable in realistic examples where model simulation is non-trivial.} 
\begin{figure}[h!]
	\centering
	\includegraphics[width=17cm, height=7cm]{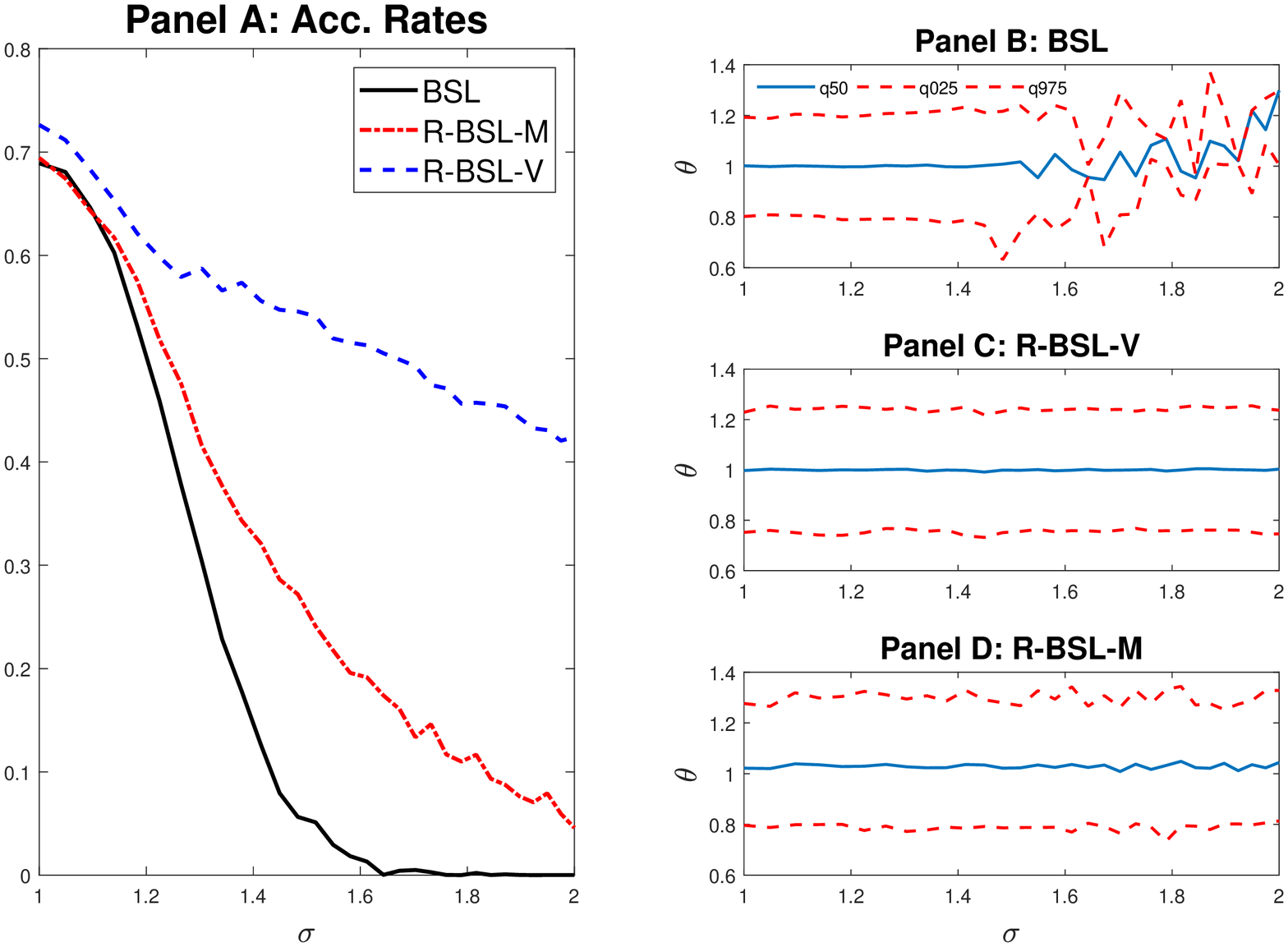}  
	\caption{\footnotesize{Panel A gives the acceptance rates for the different BSL procedures across the different data sets. Panels B, C and D give the posterior median and 95\% credible sets for R-BSL and compares these with those obtained from BSL. We recall that the value of $\sigma$ denotes the value of the sample (and population) standard deviation.}}%
	\label{fig:normal_rbsl}
\end{figure}

In the supplementary material, we analyze the adjustment components for this example and compare the repeated sampling behavior of BSL and R-BSL across different levels of model misspecification. As expected, the R-BSL posteriors for the adjustment components demonstrate that the model is unable to match the second summary statistic, while the posteriors associated with the first adjustment component are indistinguishable from the prior. See Figure 1 in the supplementary appendix for full details. The repeated sampling results demonstrate that R-BSL behaves similarly to BSL when the model is correctly specified, but yields more accurate estimators when the model is misspecified. See Table 1 in the supplementary appendix for full details. 

\subsection{Toy Example 2: Moving Average Model}\label{sec:maexam}
A common toy example used to demonstrate approximate inference methodology is the moving average (MA) model. The researcher believes $\mathbf{y}$ is generated according to an MA(1) model:
\begin{equation}
z_{t}=e_{t}+\theta _{}e_{t-1},  \label{MA2}
\end{equation}and the unknown parameter $\theta$ satisfies $|\theta|<1$, while our prior information on $\theta$ uniform over $(-1,1)$. A useful choice of summary statistics for the MA(1) model are the
sample autocovariances $\eta _{j}(\mathbf{z})=\frac{1}{T}%
\sum_{t=1+j}^{T}z_{t}z_{t-j}$, for $j\in\{0,1,2\}$. Let $\eta(\mathbf{z})$ denote the summaries $\mathbf{\eta}\left( \mathbf{z}\right) =(\eta _{0}\mathbf{(z)}, \eta _{1}\mathbf{(z)}, \eta _{2}\mathbf{(z)})^\top$. Under the DGP in equation \eqref{MA2}, it can be shown that the summaries $\eta(\zb)$ satisfy 
$$
\eta(\zb)\xrightarrow{P}b(\theta):=\begin{pmatrix}
1+\theta^2_{1},&\theta_{1},&0
\end{pmatrix}^{\top}.$$

While the researcher believes the data is generated according to an MA(1) model, the actual DGP for $\mathbf{y}$ evolves according to the stochastic volatility (SV) model  
\begin{flalign}
y_{t}=\exp(h_{t}/2)u_{t},\;\;h_{t}=\omega+\rho h_{t-1}+v_{t}\sigma_{v}\label{trueDGP},
\end{flalign} where $0<\rho<1$, $0<\sigma_{v}<1$, $u_{t}$ and  $v_{t}$ and both iid standard normal. In this case, if one takes $\mathbf{\eta }\left( \mathbf{y}\right) =(\eta _{0}\mathbf{(y)}, \eta_{1}\mathbf{(y)}, \eta _{2}\mathbf{(y)})^\top$, under the DGP in \eqref{trueDGP}, 
\begin{equation*}
\eta(\mathbf{y})\stackrel{P}{\rightarrow} b_{0}:=\begin{pmatrix}\exp\left( \frac{\omega}{1-\rho}+\frac{1}{2}\frac{\sigma_v^2}{1-\rho^2}\right)
,&0,&0
\end{pmatrix}^{\top}.\end{equation*}

For any value of $\omega,\sigma_v$ and $\rho$ such that $\exp\{ {\omega}/{(1-\rho)}+\frac{1}{2}{\sigma_v^2}/{(1-\rho^2)}\}\neq1,$ the model is not compatible (and hence misspecified in the BSL sense); i.e., for any value of $\theta$, $\|b(\theta)-b_0\|>0$. From the definition of $b(\theta)$ and $b_0$, it also follows that the value that minimizes $\|b(\theta)-b_0\|$ is $\theta=0$, and it is this value onto which we would expect the R-BSL posterior to concentrate.

To understand how BSL and R-BSL perform in this misspecified model, we enact the following Monte Carlo experiment: we generate $n=100$ observations from the SV model in \eqref{trueDGP} with parameter values $\omega = -0.76$, $\rho = 0.90$ and $\sigma_v = 0.36$, and use BSL, and R-BSL to conduct inference on $\theta$ in the misspecified MA(1) model. The R-BSL approach uses $m=50$ simulated data sets to estimate the mean and the variance, while BSL uses $10\times m=500$ simulated data sets. 

Under the true DGP in \eqref{trueDGP}, the first two auto-correlations are zero for all values of $(\omega,\rho,\sigma_v)^\top$. Hence, we would expect that R-BSL will detect incompatibility in the first summary statistic, the sample variance, while the corresponding adjustment components for the other summaries would be indistinguishable from the prior, given in Figure \ref{fig:robust_priors}. 

For R-BSL we consider starting values obtained from the maximum likelihood estimators of the MA(1) model, and posterior draws for $\theta$ are obtained via a random-walk Metropolis sampler with fixed variance of $0.1$. We run the MCMC sampler for 100,000 iterations and discard the first 10,000 for burn-in. 

{The nature of the model misspecification in this example results in a standard BSL posterior that is ill-behaved; i.e., it is bi-modal with well separated modes, and the MCMC struggles to move between modes.  Thus, given that there is only a single parameter, we use importance sampling based on 100,000 samples from the prior for $\theta$ to obtain the BSL posterior.  The effective sample sizes of the importance sampling approximations over the 50 datasets is roughly between 50--500, allowing us to reasonably estimate the standard BSL posterior.}

Under this Monte Carlo design, we generate fifty replications from the DGP in \eqref{trueDGP} and apply BSL and R-BSL to the `observed data'. The average acceptance rates across the replicated data sets for R-BSL-V and R-BSL-M are $38\%$ and $26\%$, respectively. We display the posteriors for $\theta$ from each method, and across each data set, in Panels A-C of Figure \ref{fig:ma2_theta}.\footnote{To deal with excess autocorrelation in the MCMC chains, the results in Panel A and B of Figure \ref{fig:ma2_theta} have been thinned by taking every 100th sample. This significantly reduced the autocorrelation in the chain and permits smoother density estimates. Furthermore, we note here that the MCMC chains associated with each of the, non-thinned, posteriors satisfies the convergence diagnostic proposed in \cite{Geweke1992} at any reasonable level of significance.} {Both R-BSL-V and R-BSL-M display significant posterior concentration around $\theta=0$, while the BSL posterior is bi-modal and has little posterior mass around $\theta=0$. The differences between the posteriors highlights the results discussed in Section \ref{sec:diffs}, where we argued that the BSL and R-BSL posteriors can be dissimilar when the model is misspecified. }

\begin{figure}[h!]
	\centering
	\includegraphics[width=18.5cm, height=7cm]{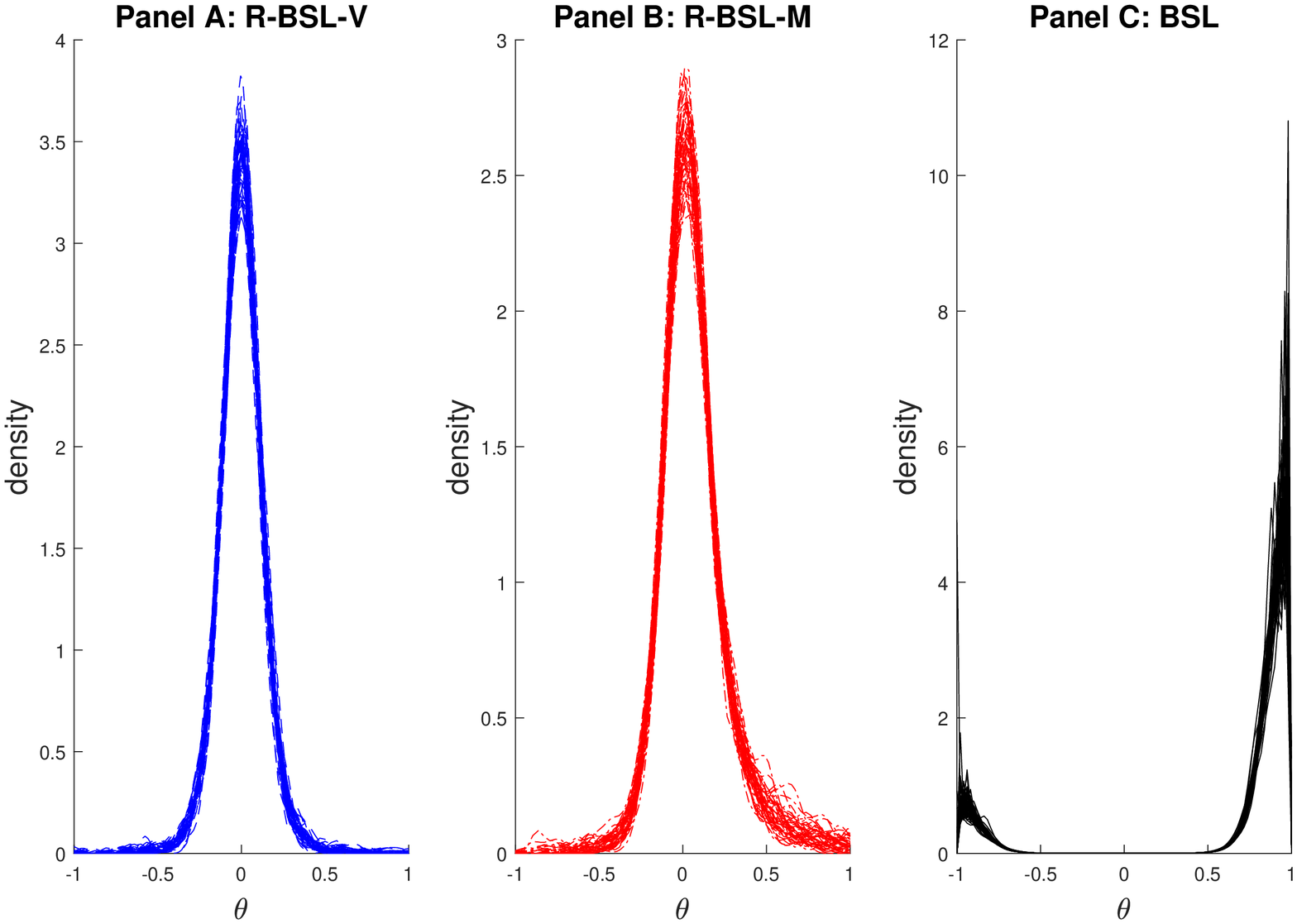}  
	\caption{\footnotesize{Posteriors for BSL, R-BSL-M and R-BSL-V for $\theta$ in the misspecified MA(1) model across fifty replicated data sets.}}%
	\label{fig:ma2_theta}
\end{figure}

We now examine the marginal posteriors for the different adjustment components across R-BSL-V and R-BSL-M, which are given in Figure \ref{fig:ma2_adjust}, across the fifty replications. The top row of Figure \ref{fig:ma2_adjust} gives the posterior densities of $\gamma_1$, $\gamma_2$ and $\gamma_3$ for R-BSL-M and the bottom row corresponds to the same components for R-BSL-V. The results demonstrate that, as suggested, the model can not reliably match the first summary statistic, the sample variance, while the posteriors corresponding to the first and second-order autocorrelations do not significantly differ from their priors across all replicated data sets. 
\begin{figure}[h!]
	\centering
	\includegraphics[width=16cm, height=9cm]{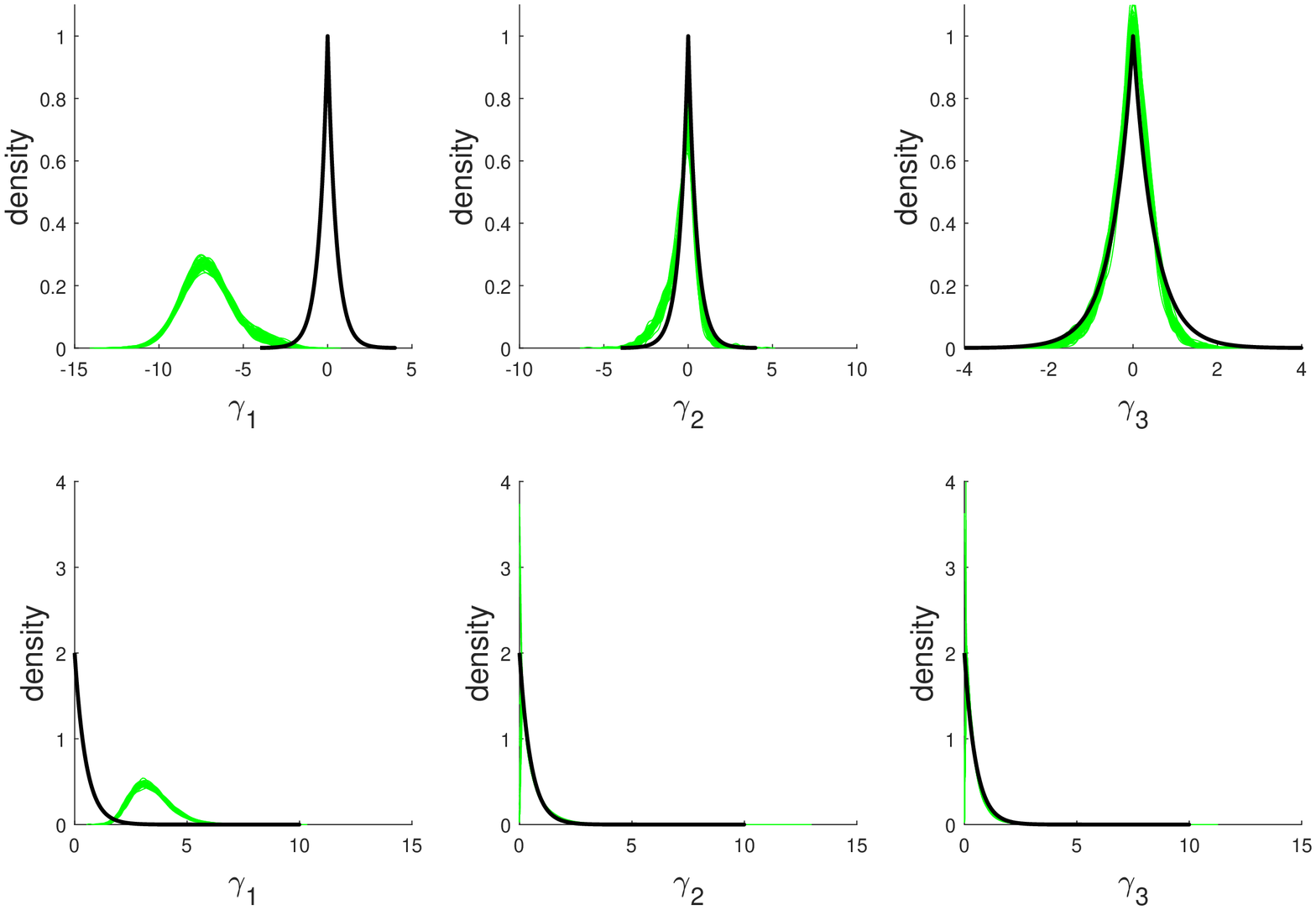} 
	\caption{\footnotesize{The top panels plot the R-BSL-M posteriors for $\gamma_{1},\gamma_2,\gamma_3$, thin lines, and the corresponding priors, thick lines, across the fifty replicated data sets. The bottom panels displays the same information for R-BSL-V.} }%
	\label{fig:ma2_adjust}
\end{figure}

The accuracy of the BSL and R-BSL point estimators, across the repeated samples, is analyzed in Table \ref{tab:ma2}. For each data set we run BSL, R-BSL-M, and R-BSL-V and calculate the bias (Bias), root mean squared error (RMSE), posterior credible set length (Len), and Monte Carlo coverage (COV), all relative to the pseudo-true value $\theta=0$. The results are displayed in Table \ref{tab:ma2}, and demonstrate that R-BSL-V yields the most accurate point estimators, as measured by both bias and RMSE, followed by R-BSL-M. The bi-modal nature of the BSL posterior leads to a significantly biased point estimator, and ensures that the resulting credible set length and Monte Carlo coverage are not entirely meaningful. Hence, we do not report these quantities for BSL in Table \ref{tab:ma2}. 

The results in Figure \ref{fig:ma2_theta} and Table \ref{tab:ma2} unequivocally demonstrate that when the model is misspecified R-BSL yields more reliable statistical inferences than those obtained by BSL.

% Table generated by Excel2LaTeX from sheet 'MA2Reps'
\begin{table}[h!]
	\centering
	\centering
	\caption{ \footnotesize{Summary measures for posterior mean accuracy, calculated as averages across the replications. RMSE- root mean squared error, BIAS- bias across the replications, LEN- credible set length, COV- Monte Carlo coverage. {Due to the bi-modal BSL posterior, we do not report the resulting Monte Carlo coverage (COV) or credible set length (LEN) for BSL.}}}
	\begin{tabular}{lrrrr}
		    & \multicolumn{1}{l}{BSL} & \multicolumn{1}{l}{R-BSL-V} & \multicolumn{1}{l}{R-BSL-M} &        \\
		RMSE  & 0.306 & 0.006 & 0.076 &       \\
		BIAS  & 0.305 & -0.001 & 0.073 &       \\
		LEN   & N/A& 0.544 & 0.979 &       \\
		COV   & N/A  & 100\% & 100\% &       \\
	\end{tabular}%
	\label{tab:ma2}%
\end{table}
\subsection{Toad Example}\label{sec:toad}

\subsubsection{Background}

We consider an individual-based model of a species called Fowler's Toads (\textit{Anaxyrus fowleri}) developed by \citet{Marchand2017}, which was also analysed by \citet{an2018robust}.  Here we give very brief details, with more information in \citet{Marchand2017} and \citet{an2018robust}.

The model assumes that a toad hides in its refuge site in the daytime and moves to a randomly chosen foraging place at night.  GPS location data are collected on $n_t$ toads for $n_d$ days, i.e.\ the observation matrix $\vect{Y}$ is of dimension $n_d \times n_t$ ($n_t=66$ and $n_d=63$ here).  Then $\vect{Y}$ is summarised to $4$ sets comprising the relative moving distances for time lags of $1,2,4,8$ days. For instance, $\vect{y}_1$ consists of the displacement information of lag $1$ day, $\vect{y}_1 = \{|\vect{Y}_{i,j}-\vect{Y}_{i+1,j}| ; 1 \leq i \leq n_{d}-1, 1 \leq j \leq n_t \}$.

Simulating from the model involves two distinct processes. For each toad, we first generate an overnight displacement, $\Delta y$, then mimic the returning behaviour with a simplified model. The overnight displacement is assumed to belong to the L\'evy-alpha stable distribution family, with stability parameter $\alpha$ and scale parameter $\delta$.  The total returning probability is a constant $p_0$, if a return occurs on day $i$, $1 \leq i \leq m$, then the return site is the same as the refuge site on day $i$, where $i$ is selected randomly from ${1,2,\dots,m}$ with equal probability.  Here we consider both simulated and real datasets. For the synthetically generated data we take $\theta=(\alpha,\delta,p_0)^\top=(1.8,45,0.6)^\top$, which is informed by the parameter estimates obtained in \citet{Marchand2017}. We use a uniform prior over $(1,2) \times (0,100) \times (0,0.9)$.  \citet{Marchand2017} consider three variations on the model.  Here, we consider their `Model 2' since there is a strong indication from their results that this model is not able to recover some of the chosen summary statistics.

As in \citet{Marchand2017}, the dataset of displacements is split into two components.  If the absolute value of the displacement is less than 10 metres, it is assumed the toad has returned to its starting location.  For the summary statistic, we consider the number of toads that returned (\citet{Marchand2017}).  For the non-returns (absolute displacement greater than 10 metres) we consider a larger collection of summaries.  We calculate the log difference between adjacent $p$-quantiles with $p=0,0.1,\ldots,1$ and also the median.  These statistics are computed separately for the four time lags. This results in 48 statistics in total, which is hard to handle for conventional ABC methods.  \citet{an2018robust} demonstrate that BSL is computationally efficient enough to analyse simulated data for this application with a similar number of summary statistics.

For the $\gamma$ parameters of the mean adjustment and variation inflation procedures we use the same priors as the previous example.

\subsubsection{Results}

We first consider the simulated dataset, where we use $n=300$ simulations to estimate the synthetic likelihood at each MCMC iteration.  Standard BSL, together with the two incompatibility extensions, produce approximate posteriors shown in Figure \ref{fig:toad_model2_simdata_posteriors}.  As can be seen, the adjustments produce posteriors remarkably similar to BSL with slightly inflated variances.   The MCMC acceptance rates for BSL, R-BSL-M and R-BSL-V are 11\%, 9\% and 22\%, respectively.  As consistent with previous results, the variance adjustment improves the computational efficiency, even when the model is correctly specified.  The posterior distributions for $\gamma$ of R-BSL are shown in Figure \ref{fig:toad_model2_simdata_gamma}.  In all cases, the posteriors are not too dissimilar to the prior.

\begin{figure}[h!]
	\centering
	\includegraphics[width=16cm, height=5cm]{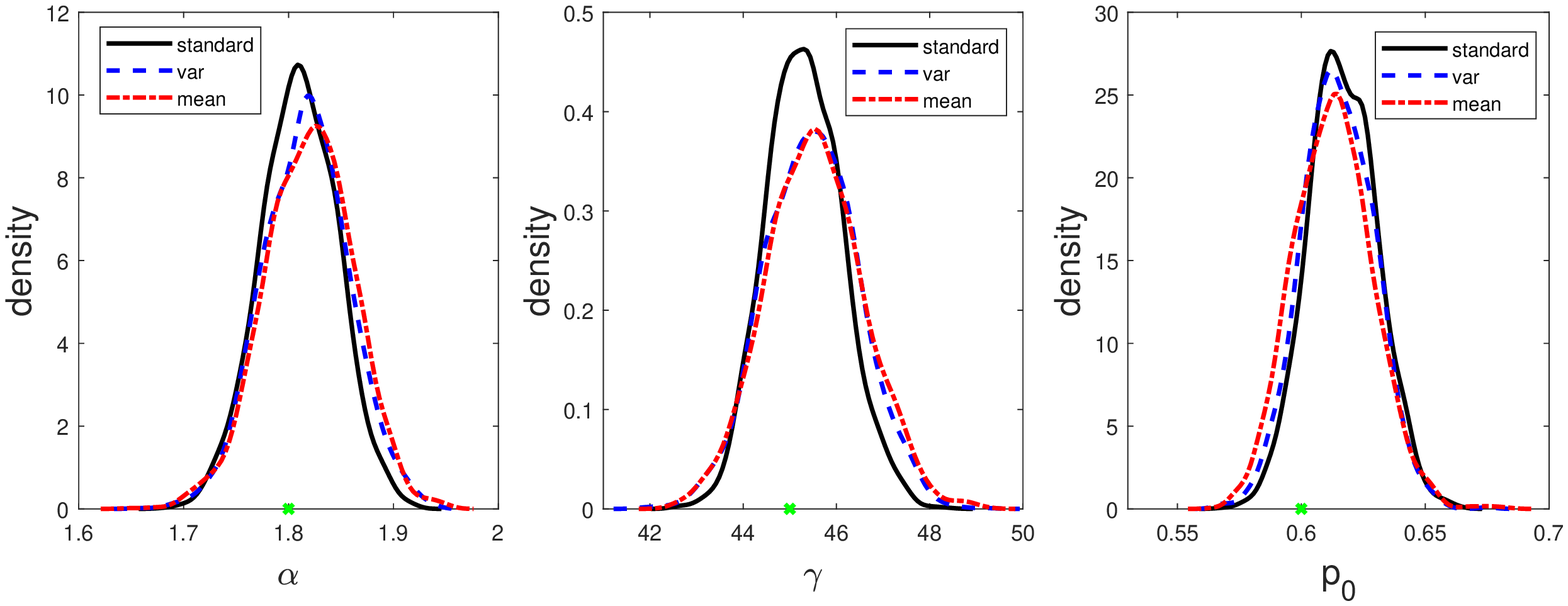} 
	 \caption{Univariate posterior distributions for the parameters when applying BSL (sold), R-BSL-V (dash) and R-BSL-M (dot-dash) to simulated data for the toad example.  True parameter values are shown as crosses.}%
	\label{fig:toad_model2_simdata_posteriors}
\end{figure}

\begin{figure}[h!]
	\centering
	\includegraphics[width=12cm, height=5cm]{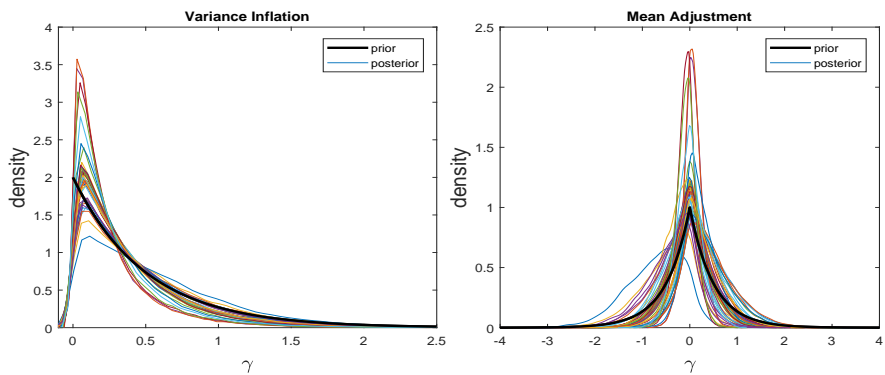} 
	 \caption{Posterior distributions for $\gamma$ for R-BSL-M (right) and R-BSL-V (left) applied to the simulated data for the toad example.  The thick line is the prior and the thin lines are the posteriors for the components of $\gamma$.}%
	\label{fig:toad_model2_simdata_gamma}
\end{figure}

For the real data, we required $n=2000$ simulations for estimating the synthetic likelihood to obtain an acceptance rate of 9\% for BSL.  However, the chain still suffered from periods of stickiness.  In contrast, with only $n=500$, the R-BSL-M and R-BSL-V produce acceptance rates of 7\% and 15\%, respectively, without substantial stickiness.  The variance inflation method offers a computational improvement of about one order of magnitude over BSL when accounting for both acceptance rate and number of simulations. 

The posterior distributions for the components of $\gamma$ for the R-BSL methods are shown in Figure \ref{fig:toad_realdata_gamma}.  It is evident from the plots that our methods have identified that there are three or four statistics that the model is not compatible with.  The statistic with the largest incompatibility is the number of returns for lag 1.  For R-BSL-V, the 95\% posterior predictive interval for this statistic is (262, 346) with an observed value of 234.  Other statistics showing some incompatibility are the first quantile differences of the non-returns for lags 3 and 4.  Figure \ref{fig:toad_model2_misspec_var} confirms that the observed data are not consistent with the posterior predictive distribution of the (log) non-return distances for lags 3 and 4, in that the model generally predicts larger non-return distances.  The mean adjustment results are similar (not shown).  Our adjustment methods permit in-depth analyses such as these and may provide practitioners valuable information for improving the model.  

\begin{figure}[h!]
	\centering
	\includegraphics[width=12cm, height=5cm]{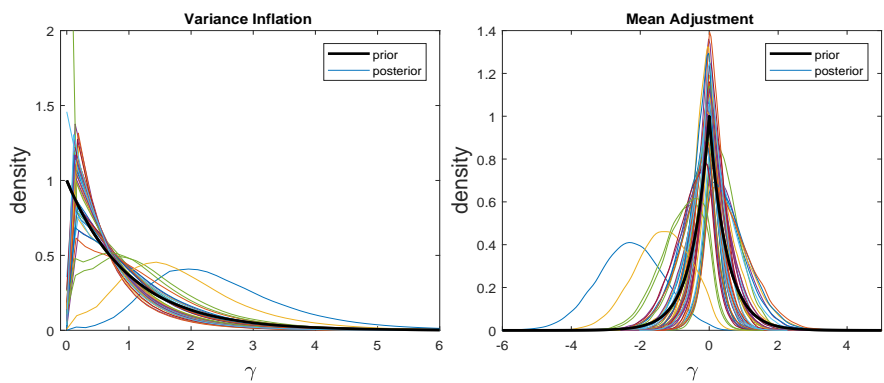}  
	\caption{Posterior distributions for $\gamma$ for R-BSL-M (right) and R-BSL-V (left)  applied to the real data for the toad example.  The thick line is the prior and the thin lines are the posteriors for the components of $\gamma$.}%
	\label{fig:toad_realdata_gamma}
\end{figure}

\begin{figure}[h!]
	\centering
	\includegraphics[width=12cm, height=6cm]{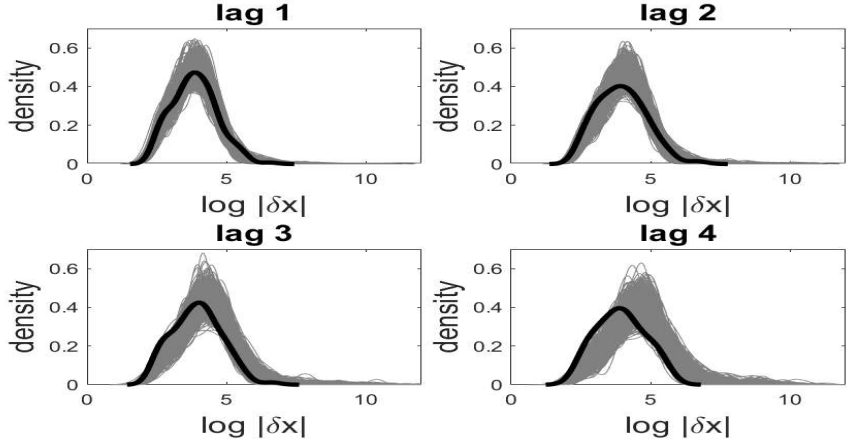} 
	 \caption{Posterior predictive distributions of the log non-returns for the four lags based on R-BSL-V.  The thick line is the distribution for the observed data.}%
	\label{fig:toad_model2_misspec_var}
\end{figure}

The bivariate posterior distributions for the parameters based on the adjustment methods are shown in Figure \ref{fig:toad_model2_mean_vs_var}.  It is evident that the estimated posterior distributions are similar, with the R-BSL-M posteriors slightly more concentrated than R-BSL-V.  Univariate posteriors for all BSL approaches are shown in Figure \ref{fig:toad_model2_posteriors}. {Comparing the univariate posteriors, we see that there is substantive disagreement between BSL and R-BSL. The R-BSL posteriors generally have fatter tails than the BSL posteriors, which indicates the presence of model misspecification, and are centered over different regions of the support (especially for $\alpha$ and $p_0$). This empirical evidence reinforces the analysis in Section \ref{sec:diffs}, where we argued that the R-BSL and BSL posteriors need not agree under model misspecification.}

\begin{figure}[h!]
	\centering
	\includegraphics[width=16cm, height=5cm]{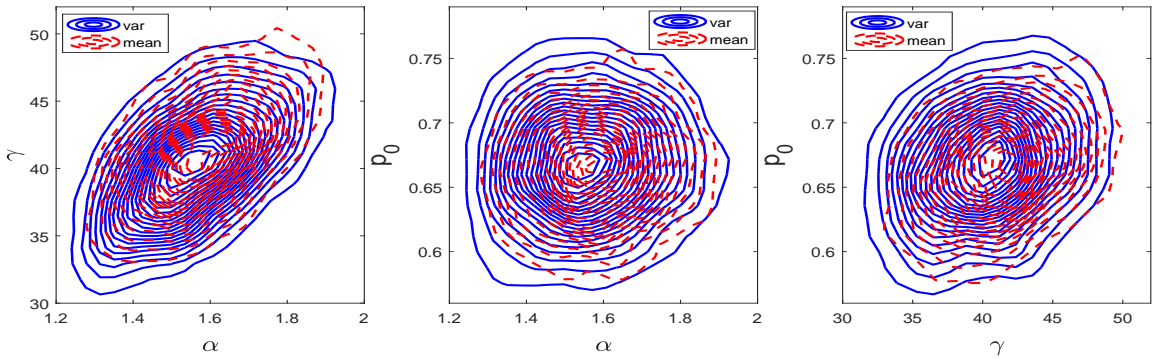}  
	\caption{Bivariate posterior distributions visualised as contour plots for the parameters based on R-BSL-M (right) and R-BSL-V (left) applied to the real data for the toad example.  }%
	\label{fig:toad_model2_mean_vs_var}
\end{figure}

\begin{figure}[h!]
	\centering
	\includegraphics[width=16cm, height=5cm]{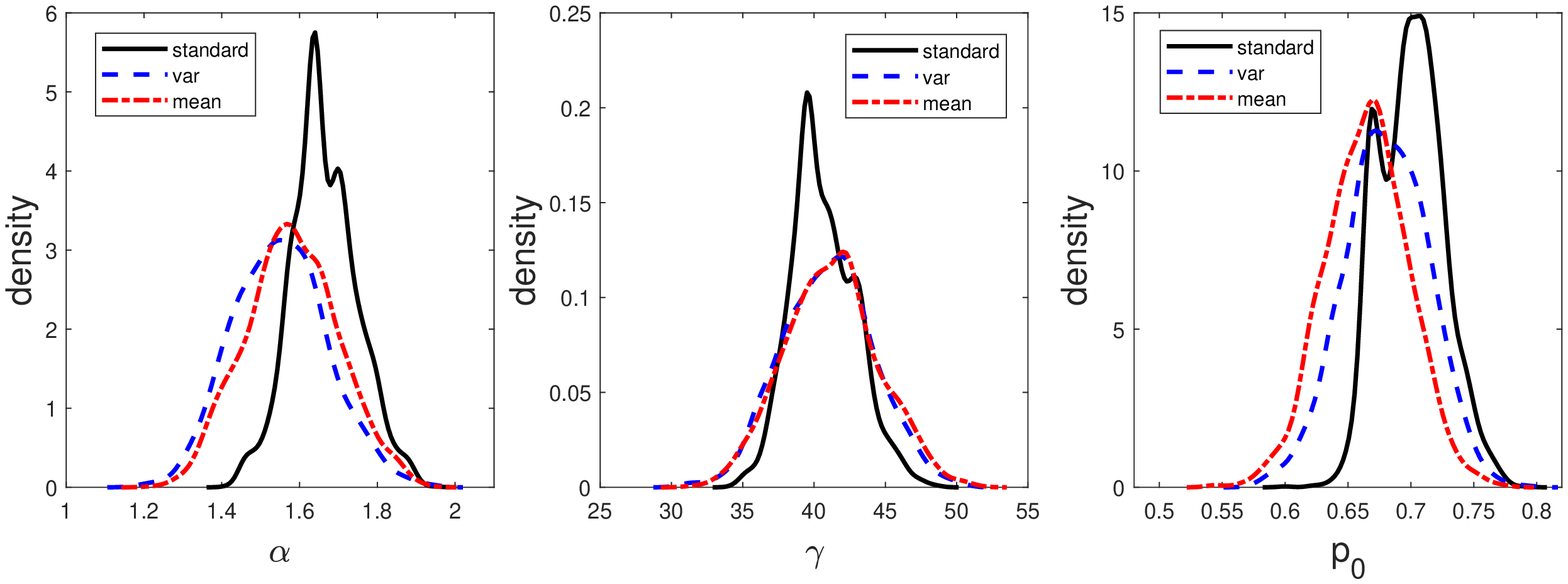}
  \caption{Univariate posterior distributions for the parameters when applying BSL (sold), R-BSL-V (dash) and R-BSL-M (dot-dash) to real data for the toad example.}%
	\label{fig:toad_model2_posteriors}
\end{figure}

\section{Discussion} \label{sec:discussion}
This paper has made two significant contributions to the literature on approximate Bayesian methods. Firstly, to our knowledge, this is the first piece of research to demonstrate that, similar to approximate Bayesian computation (ABC), Bayesian synthetic likelihood (BSL) can deliver unreliable inference when the assumed model is misspecified. Secondly, to circumvent the poor behavior of BSL in these settings, we have proposed a modification of BSL that displays robustness to model misspecification. Several Monte Carlo and empirical examples are used to illustrate the performance of this new method, with the results demonstrating both the statistical and computational benefits of this new approach when the model is misspecified. 

In addition to delivering more accurate statistical inference under model misspecification, this new approach also allows the user to detect precisely which summary statistics are incompatible with the assumed data generating process. Incorporating this information within subsequent rounds of model building could lead to better models that can more accurately capture the behavior exhibited by the observed summary statistics. In this sense, the robust BSL approach can be viewed as a BSL version of the model criticism approach of \cite{ratmann2009model}. In the context of ABC, \cite{ratmann2009model} propose an approach to detect aspects of the model that the summary statistics can not adequately capture. Their approach relies on treating the ABC tolerance as an unknown parameter, and augmenting the original ABC inference problem with this additional parameter. The authors argue that posterior realizations for the tolerance parameter that are ``large''  indicate the possibility of a mismatch between the model and the observed data. 

While useful, in the case of multivariate summaries the approach of \cite{ratmann2009model} requires a tolerance parameter for each summary statistic used in the analysis, with posterior inference then required on the full set of model parameters and tolerance parameters. {Therefore, even for a moderate number of summaries, this approach can exacerbate the underlying curse-of-dimensionality in ABC, as it pertains to {both the dimension of summaries and the number of parameters in the analysis}. For instance, the empirical example in Section \ref{sec:toad} employed 48 summary statistics, which is much larger than can reliably be considered by standard implementations of ABC.} In addition, the approach considered herein has as a direct benchmark with which to gauge the impact of misspecification on the summaries, namely the prior distribution of the adjustment components. If the corresponding posterior for the adjustment component in the robust version of BSL does not resemble the prior, this is strong evidence that this summary can not be matched by the assumed model. If one wished to put a numerical value, or conduct a formal hypothesis test, on the difference between the prior and posterior, any number of techniques could be used. 

The examples illustrate that, in particular, the variance inflation approach can significantly improve the MCMC acceptance rate, under model misspecification (relative to standard BSL).  The variance inflation approach bears some resemblance to MCMC ABC approaches that assign a distribution to the ABC tolerance to facilitate MCMC mixing by proposing a relatively large tolerance value (e.g., \citealt{Bortot2007}).  However, improving mixing is not our primary focus, but is simply a useful by-product.  Our R-BSL approaches may also be useful for initial explorations of the parameter space when it is not known where the bulk of the posterior support is since, even for a correctly specified model, a poor parameter value will not be able to recover the observed statistic.

As with BSL, our R-BLS approach requires that a Gaussian approximation to the summaries is reasonable. In cases where it is not, it is possible that our methods may detect incompatibility when it is not present.  For example, the variance inflation parameter may activate to accommodate a summary statistic distribution with a heavy tail, even when the model is correct. We are currently working on adapting our approach to other likelihood-free methods that relax the Gaussian assumption, such as ABC and the semi-parametric extension of BSL \citet{an2018robust}, which uses flexible models for the marginal summary statistic distributions.     

{The choice of prior for $\gamma$ was chosen for simplicity and ease of posterior sampling via the slice sampler.    We note, however, that other prior choices are possible.  For example, it is worth investigating sparsity inducing priors such as spike and slab priors.   However, the mixed discrete-continuous nature of these priors may complicate posterior sampling in our context.  Another possible choice is the class of global-local shrinkage priors such as the horseshoe.     There is also a question of hyperparameter choice, although our default choice is sensible and produced good empirical results in our examples.  Attempting to tune the hyperparameter by cross validation or running prior sensitivity studies would be highly computationally intensive in our likelihood-free context.  We leave a more thorough investigation of prior choice to future research.}

Lastly, we note that, given the similarities between BSL and ABC, a natural question posed during this research was  whether or not the mean and variance adjustment approaches discussed in this current paper were applicable in the context of ABC. In concurrent work from the  authors, preliminary investigations into a similar type of mean and variance adjusted ABC have revealed that such an approach can mitigate the poor performance of ABC under model misspecification (\citealp{frazier2020model}). 

\section*{Acknowledgments}

The authors are grateful to Ziwen An, who provided some useful code for this paper. Frazier was supported by the Discovery Early Career Researcher Award funding scheme.

\singlespacing
\bibliographystyle{apalike} 
\bibliography{refs}
%%%%%%%%%%%%%%%%%%%%%%%%%%%%%%%%%%%%%%%%%%%%%%%%%%%%%%%%%%%%%%%%%%%%%%%%%%%%%%%%%%%%

\appendix 
\section*{Appendix}

This appendix contains additional details for the contaminated normal example and an empirical example that analyzes the behavior of a popular collective cell spreading model. In addition, this material includes the proof of Proposition 1 in the main text. 

\section{Additional Details: Normal Example}
In this section, we present additional details for Example 1 in the paper. First, we analyze the R-BSL adjustment components, $\Gamma$, from Example 1, and then we present repeated sampling results for BSL and R-BSL across three levels of model misspecification. We refer the reader to Sections \ref{sec:toynorm} and \ref{sec:toynormcont} of the main paper for details of the Monte Carlo specification.

\subsection{Adjustment Components}
Figure \ref{fig:normal_rbsl_gamma} displays the resulting posterior densities for $\Gamma$ across the two R-BSL procedures, and across all levels of misspecification. Panels A and B give the results for R-BSL-M, and correspond to the components $\gamma_1$ and $\gamma_2$, respectively, while panels C and D give the same results for R-BSL-V. For comparison purposes, the black line in each panel represents the prior densities, and the color-coding in each figure represents the level of misspecification (where $\sigma^2_\epsilon=1$ encodes correct model specification). 

Focusing on Panel A, we see that the posterior densities for the $\gamma_1$ component, in R-BSL-M, which captures our ability to match the first observed statistic (the mean), are indistinguishable from the prior across all the replicated data sets, which implies that we can match the mean of this model regardless of model misspecification. In contrast, in Panel B we see that the second component, which captures our ability to match the second observed statistic (the variance), looks nothing like the prior, except perhaps at low levels of misspecification.

Panels C and D describe precisely the same story as in Panels A and B but correspond to $\gamma_1$ and $\gamma_2$ in R-BSL-V. Under correct specification ($\sigma^2_\epsilon=1$) the posteriors are indistinguishable from the priors, we are easily able to detect departures from compatibility for the second summary statistic, and the posteriors for the first adjustment term remain indistinguishable from the prior across all data sets. These results demonstrate that both R-BSL approaches are capable of reliably detecting which features of the model we are not able to replicate.

\begin{figure}[h!]
	\centering
	\includegraphics[width=17cm, height=9cm]{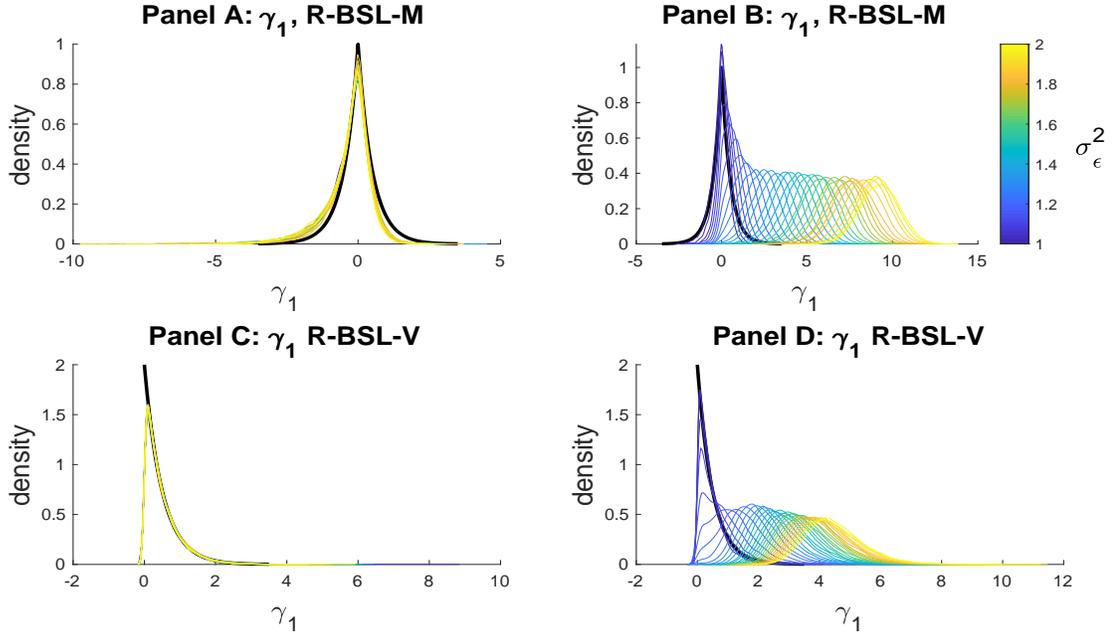}  
	\caption{\footnotesize{Marginal posteriors for adjustment components. Panels A and B correspond to the mean adjustment R-BSL approach (R-BSL-M), while Panels C and D correspond to the variance adjustment R-BSL approach (R-BSL-V). Panels A and C correspond to the components for the first summary, and panels B and D for the second summary.} In each sub-figure, the black line corresponds to the prior density for each component.}%
	\label{fig:normal_rbsl_gamma}
\end{figure}

\subsection{BSL and R-BSL Comparison}

We now analyze the repeated sampling behavior of R-BSL and compare it to BSL. We again consider data generated from the contaminated normal model, and we choose $\sigma^2_\epsilon$ in equation \eqref{eq:truemod} in the main text so that the observed data for $\y$ has \textit{sample variance} equal to $\{1.00,1.50,2.00\}$. For each value of the sample variance, we generate one hundred replicated data sets. We then apply R-BSL and BSL, and calculate the bias (BIAS), and root mean squared error (RMSE), associated with the posterior means, as well as the average credible set length (LEN), obtained from a 95\% confidence set, and the Monte Carlo coverage (COV), as determined by a 95\% confidence set. Table \ref{tab:bsl1} displays the results and demonstrates that R-BSL performs well across both correctly and incorrectly specified models. In the case of correct specification, R-BSL-V and BSL perform very similarly. However, as model misspecification increases, R-BSL-V yields more accurate point estimators than BSL, and better uncertainty quantification. These results demonstrate that when the model is misspecified, R-BSL will yield more reliable statistical inferences than those obtained by BSL. 
\begin{table}[h!]
	\centering
	\caption{Summary measures for posterior mean accuracy, calculated as averages across the Monte Carlo replications. RMSE- root mean squared error of the posterior mean, BIAS- bias of the posterior mean, LEN- credible set length, COV- Monte Carlo coverage. BSL refers to BSL, R-V refers to R-BSL-V, and R-M refers to R-BSL-M. Average acceptance rates across the three different levels of misspecification are as follows: R-BSL-V:  71.86\%, 53.47\%, 41.78\%; R-BSL-M: 68.44\%, 24.51\%, 5.83\%; and BSL: 68.77\%, 3.93\%, 0.02\%}
	\begin{tabular}{lrrrrrrrrrrr}
		&       & \multicolumn{1}{l}{$\sigma$=1.0} &       &       &       & \multicolumn{1}{l}{$\sigma$=1.5} &       &       &       & \multicolumn{1}{l}{$\sigma$=2.0} &  \\
		& \multicolumn{1}{l}{BSL} & \multicolumn{1}{l}{R-V} & \multicolumn{1}{l}{R-M} &       & \multicolumn{1}{l}{BSL} & \multicolumn{1}{l}{R-V} & \multicolumn{1}{l}{R-M} &       & \multicolumn{1}{l}{BSL} & \multicolumn{1}{l}{R-V} & \multicolumn{1}{l}{R-M} \\
		RMSE  & 0.0022 & 0.0034 & 0.0292 &       & 0.0175 & 0.0029 & 0.0299 &       & 0.1436 & 0.0026 & 0.0257 \\
		BIAS  & 0.0000 & 0.0003 & 0.0296 &       & -0.0006 & 0.0006 & 0.0297 &       & 0.0165 & 0.0003 & 0.0286 \\
		LEN   & 0.3917 & 0.4807 & 0.5065 &       & 0.4515 & 0.4821 & 0.5120 &       & 0.1916 & 0.4826 & 0.5185 \\
		COV   & 100\% & 100\% & 100\% &       & 100\% & 100\% & 100\% &       & 52\% & 100\% & 100\% \\
	\end{tabular}
	\label{tab:bsl1}
\end{table}%

\section{Collective Cell Spreading }
\subsection{Background}

Collective cell spreading models are often used to gain insight into the biological mechanisms governing, for example, wound healing and skin cancer growth (e.g.\ \cite{VoDiameter2014,Vo2015}).   \citet{Browning2018} develop a simulation-based model where cells are able to move freely in continuous space.  They calibrate the model to real \emph{in vitro} data collected from a cell proliferation assay experiment using a rejection-based ABC algorithm.  Here, using our new synthetic likelihood methods, we demonstrate that the model is not compatible with the observed summary statistic and provide insight into what aspects of the data that the model is not able to recover.

The model of  \citet{Browning2018} is a stochastic individual based model where cells move and interact in a two dimensional space.  Here we provide only brief details of the model and refer to \citet{Browning2018} for the full description.  Proliferation (cell birth) and motility (movement) for each cell evolves in continuous time according to a Poisson process.  The intrinsic rates are given by $p$ and $m$ for proliferation and motility events, respectively.  The rates of these processes are also neighbourhood-dependent, with rates decreasing as the amount of crowding around a cell increases.  The closeness of cells is governed by a Gaussian kernel that depends on a fixed cell diameter, $\sigma$.  When a cell proliferates, it places a new cell randomly in its neighbourhood according to an uncorrelated two dimensional Gaussian centered at the cell location with component variances of $\sigma^2$.  When a motility events occurs, the cell moves a distance of $\sigma$. The direction of the move depends on cell density, biased towards lower cell density.  A parameter used to help determine the move direction, $\gamma_b$, is part of a Gaussian kernel used to measure the closeness of cells.  The parameter of interest is $\theta = (p,m,\gamma_b)^{\top}$.

In the experiments of \citet{Browning2018}, images of the cell population are taken every 12 hours starting at 0 hours with the final image taken at 36 hours.  \citet{Browning2018} use the number of cells and the pair correlation computed from each image as the summary statistics, resulting in a six dimensional summary statistic here.  The pair correlation is the ratio of the number of pairs of agents separated by some pre-specified distance to an expected number of cells separated by the same distance if the cells were uniformly distributed in space.   

The prior distribution is set as $p \sim \mathcal{U}(0,10)$, $m \sim \mathcal{U}(0,0.2)$ and $\gamma_b \sim \mathcal{U}(0,20)$ with no  dependence amongst parameters, as in \citet{Browning2018}.  We use MCMC to sample the posterior with 50,000 iterations and no burn-in as we initialise the chain at the point estimate $\theta = (1,0.04,6)^\top$ reported in \citet{Browning2018}.  We sample over the space of a logit-type transformation of $\theta$ so that any proposal is within the prior bounds.   We use a multivariate Gaussian random walk proposal on the transformed space with a covariance matrix obtained via some pilot MCMC runs.  We use $m=50$ model simulations to estimate the synthetic likelihood at each MCMC iteration.  We run our methods on both simulated (using the point estimate of \citet{Browning2018}) and real data.

\subsection{Results}

For the priors for each component of $\gamma$, we use a Laplace distribution with a scale of 0.5 for the mean adjustment method and an exponential prior with mean of 0.5 for the variance inflation method.  

Firstly we present results for the simulated data (model correctly specified).  As shown in Figures \ref{fig:lattice_contour_simulated_mean} and \ref{fig:lattice_contour_simulated}, the posterior distributions on $\theta$ are similar regardless of whether BSL or R-BSL is applied.  The MCMC acceptance rates for BSL and R-BSL-M are both 21\% and 20\%, respectively.   The variance inflation seems to allow for a slightly increased acceptance rate (24\%) compared to mean adjustment.

\begin{figure}[h!]
	\centering
	\includegraphics[width=16cm, height=6cm]{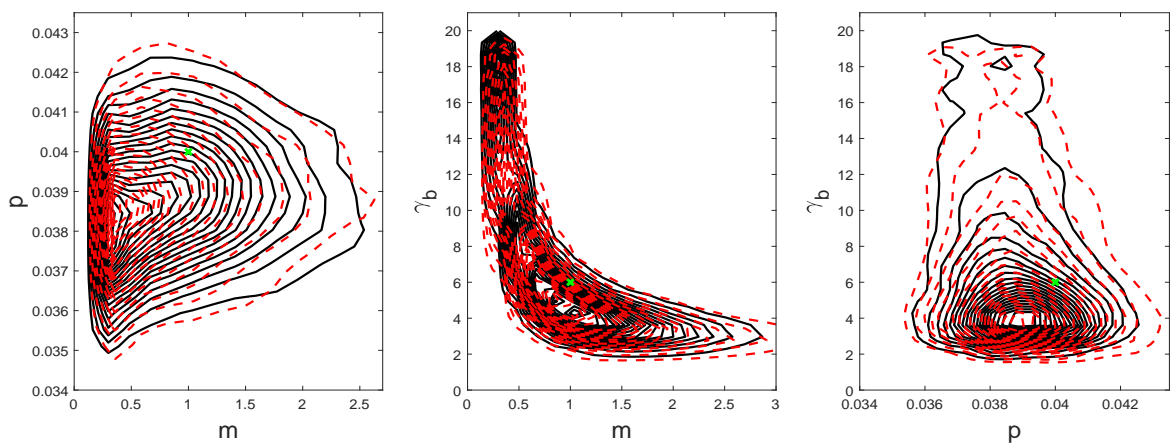}  
	\caption{Contour plots of the posterior distributions based on the simulated data for the collective cell spreading example.  Results are shown for BSL (solid) and R-BSL-M (dash).  The true parameter values are shown as a crosses.}%
	\label{fig:lattice_contour_simulated_mean}
\end{figure}

\begin{figure}[h!]
	\centering
	\includegraphics[width=16cm, height=6cm]{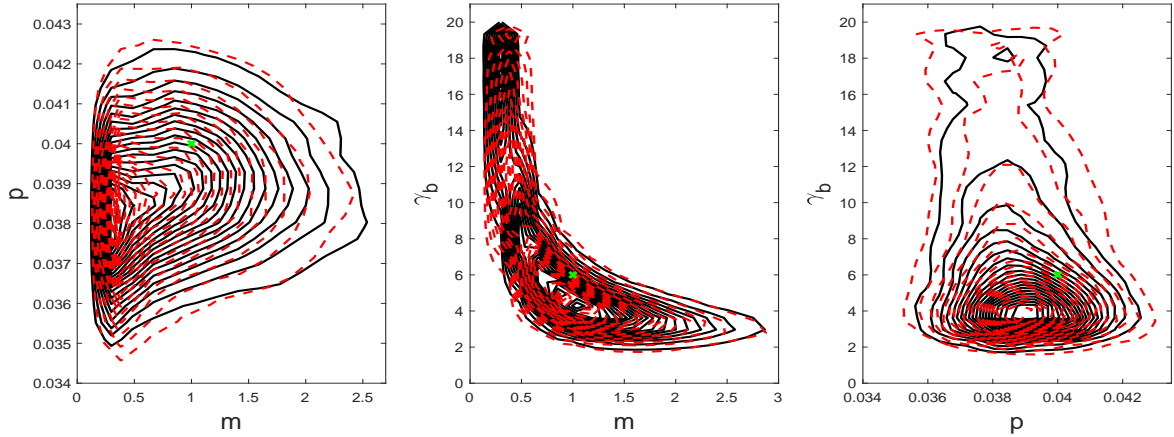}  
	\caption{Contour plots of the posterior distributions based on the simulated data for the collective cell spreading example.  Results are shown for BSL (solid) and R-BSL-V (dash).  The true parameter values are shown as crosses.}%
	\label{fig:lattice_contour_simulated}
\end{figure}

The posterior distribution for each component of $\gamma$ is shown in Figure \ref{fig:lattice_gamma_simulated_mean}.   It can be seen that most posteriors are similar to the prior.  For both R-BSL methods, there is no indication that any of the statistics are incompatible with the model, as expected.

\begin{figure}[h!]
	\centering
	\includegraphics[width=12cm, height=6cm]{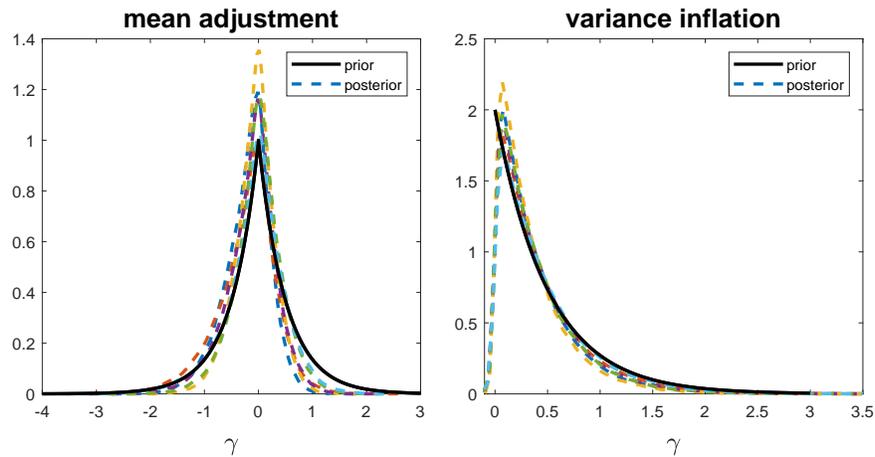}  
	\caption{Posterior distributions for each component of $\gamma$ (dashed lines) based on the simulated data for the collective cell spreading example for R-BSL-M (left) and R-BSL-V (right).  The prior distribution of $\gamma$ is shown as solid lines for both R-BSL-M (left) and R-BSL-V (right).}%
	\label{fig:lattice_gamma_simulated_mean}
\end{figure}

For the real data, the MCMC acceptance rate using BSL is only 3\% as the variance of the synthetic likelihood is high generating long periods of no acceptance.  Applying R-BSL-M and R-BSL-V results in an MCMC acceptance rate of roughly 12\% and 18\%, respectively, permitting statistical inference.  Again, the variance inflation seems to produce an improved acceptance rate.

The univariate posterior distributions for $\gamma$ for R-BSL-M and R-BSL-V are shown in Figures \ref{fig:lattice_gamma_posteriors_real} and \ref{fig:lattice_epsilon_posteriors_real}, respectively.  Both methods identify that the model is not compatible with the pair correlation statistic at 12 and 36 hours.

\begin{figure}[h!]
	\centering
	\includegraphics[width=12cm, height=7cm]{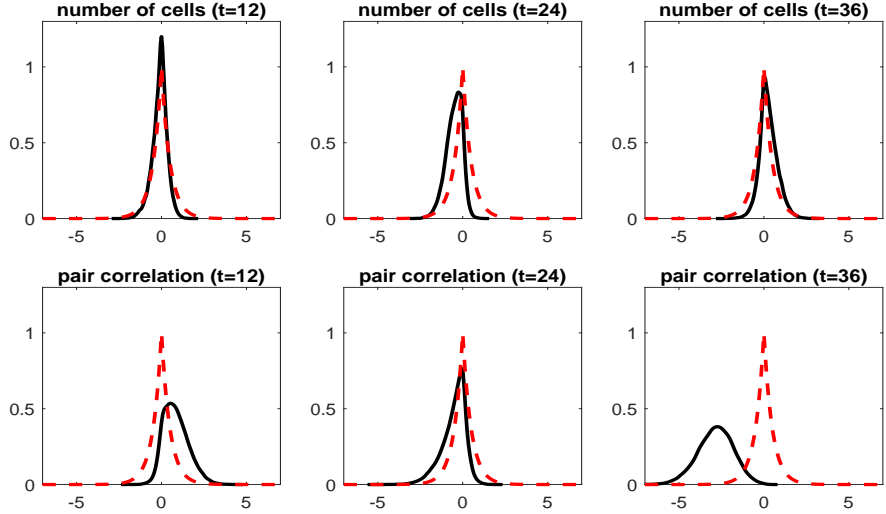}  
	\caption{Posterior distributions (solid) for each component of $\gamma$ when applying R-BSL-M to the real data of the collective cell spreading example.  The prior distributions, which are Laplace distributed with scale 0.5, are also shown (dash).}%
	\label{fig:lattice_gamma_posteriors_real}
\end{figure}

\begin{figure}[h!]
	\centering
	\includegraphics[width=12cm, height=7cm]{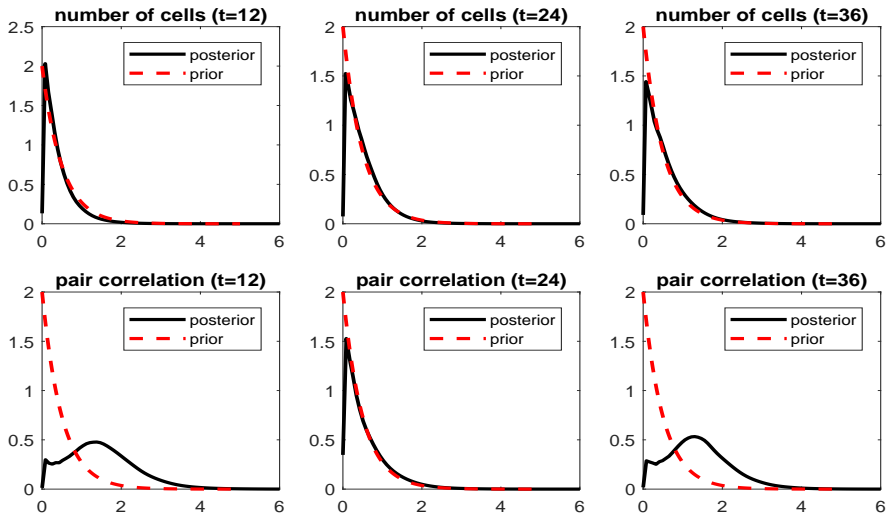}  
	\caption{Posterior distributions (solid) for each component of $\gamma$ when applying R-BSL-V to the real data of the collective cell spreading example.  The prior distributions, which are exponential with mean 0.5, are also shown (dash).}%
	\label{fig:lattice_epsilon_posteriors_real}
\end{figure}

In Figure \ref{fig:lattice_postpredictions_real05_mean}, we show R-BSL-M posterior predictive distributions of the summary statistics without (left) and with (right) using estimated mean adjustment parameters with the observed summaries overlaid.  The corresponding plots for the variance inflation is shown in Figure \ref{fig:lattice_postpredictions_real}.  From both figures, it is evident from the plots on the left that the model is successful in tracking the number of cells over time.  However, the model underestimates the rate of decrease in the pair correlation over time.  This is valuable information that might enable mathematical biologists to extend the model so that this data feature can be better captured.  Both adjustment methods have allowed us to make this inference.    It can be seen in the second column of Figure \ref{fig:lattice_postpredictions_real05_mean} that the mean adjustment is able to shift the predictions so that the observed statistic does not lie so far in the tails.  From the second column of Figure \ref{fig:lattice_postpredictions_real},  the variance adjustment expands the predictions so that the observed statistic does not lie so far in the tails.

\begin{figure}[h!]
	\centering
	\includegraphics[width=16cm, height=10cm]{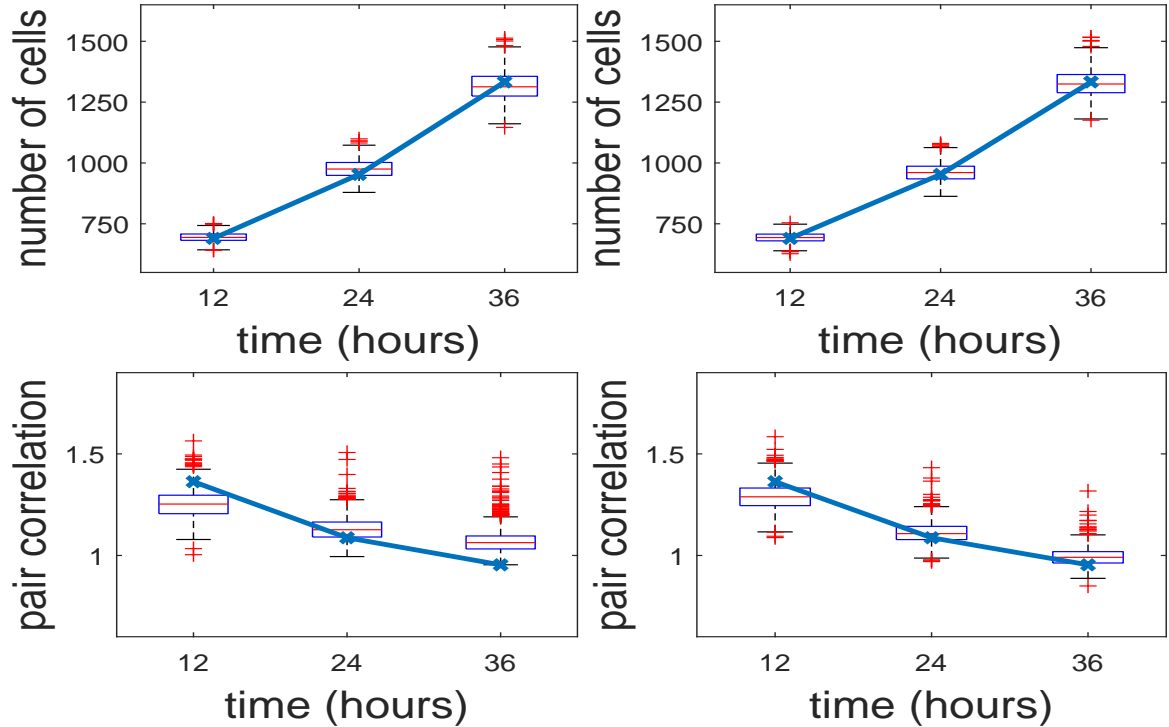}  
	\caption{Posterior predictive distributions (summarised as boxplots) of the summary statistics obtained with R-BSL-M.  Shown are the posterior predictive distributions of the summaries without (left) and with (right) including the variance inflation in the predictions. The observed summary statistics are overlaid as crosses.}%
	\label{fig:lattice_postpredictions_real05_mean}
\end{figure} 

\begin{figure}[h!]
	\centering
	\includegraphics[width=16cm, height=10cm]{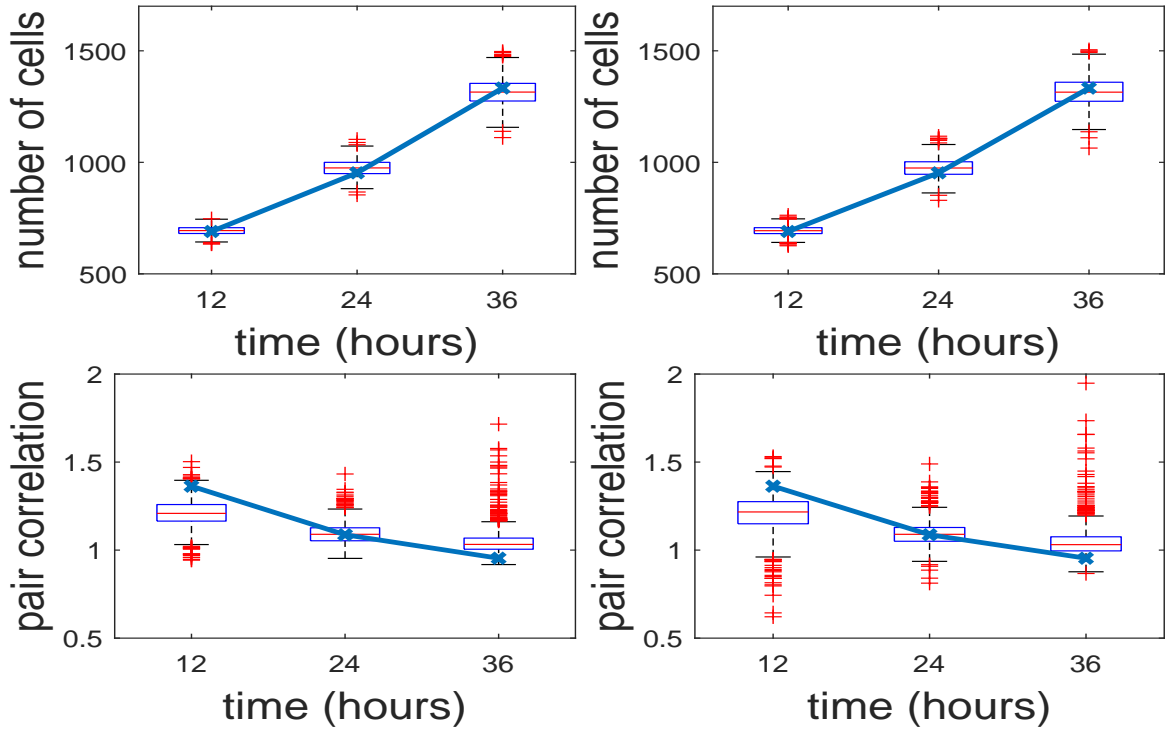} 
	\caption{Posterior predictive distributions (summarised as boxplots) of the summary statistics obtained with R-BSL-V.  Shown are the posterior predictive distributions of the summaries without (left) and with (right) including the variance inflation in the predictions. The observed summary statistics are overlaid as crosses.}%
	\label{fig:lattice_postpredictions_real}
\end{figure}

Finally, Figure \ref{fig:lattice_posteriors_real_mean_vs_variance} compares the posterior distributions for the mean adjustment and variance inflation, together the standard synthetic likelihood results.  It can be seen that the posterior distributions are broadly similar, except that the standard synthetic likelihood results suffer from substantial Monte Carlo error due to the small acceptance rate.

\begin{figure}[h!]
	\centering
	\includegraphics[width=16cm, height=5cm]{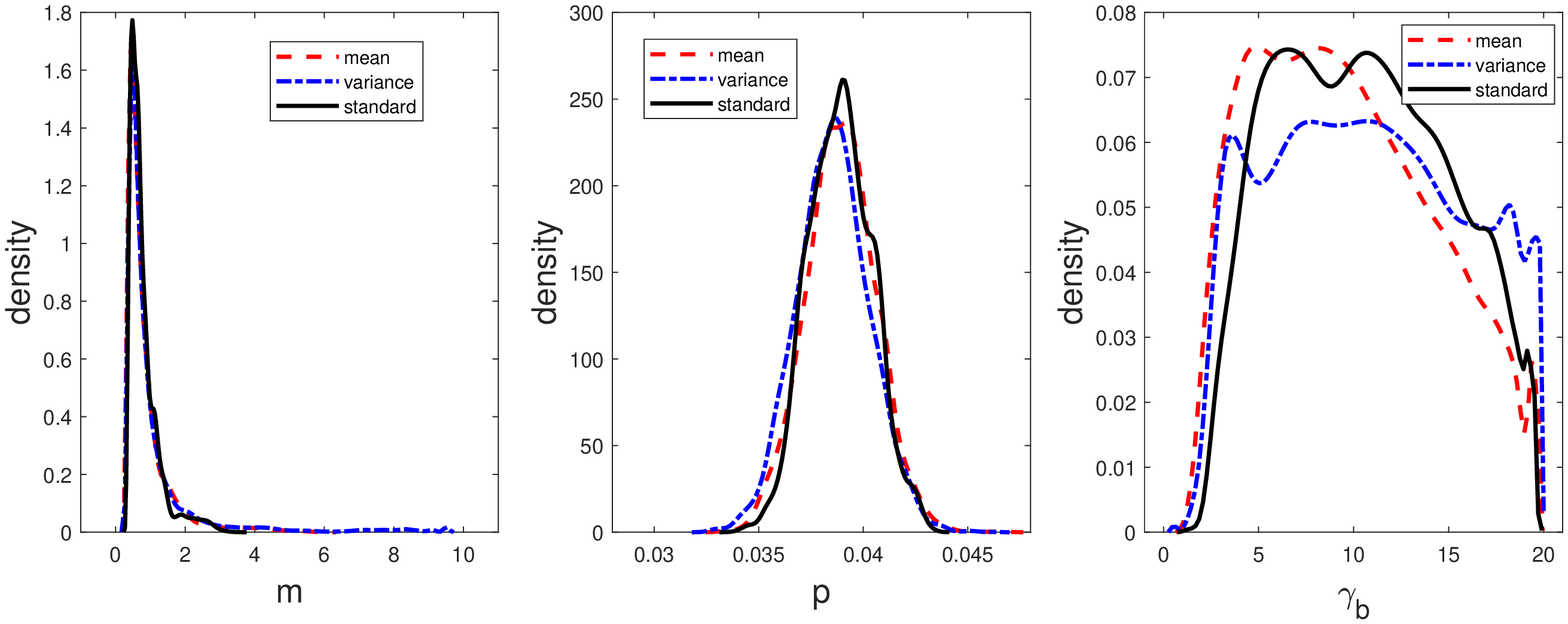}  
	\caption{Posterior distributions for each component of $\theta$ when applying BSL (solid), R-BSL-M (dash) and R-BSL-V (dot-dash) to the real data of the collective cell spreading example.  The priors used are Laplace with scale 0.5 for mean adjustment and exponential with a mean of 0.5 for variance inflation.}%
	\label{fig:lattice_posteriors_real_mean_vs_variance}
\end{figure}

\section{Theoretical Properties of R-BSL}
For clarity, we first recall the assumptions and the result in question. 

\begin{assumption}\label{ass:one}{
		There exists a sequence of positive real numbers $v_n$ diverging to $\infty$ such that, for some distribution $Q$ on $\mathbb{R}^{d_\eta}$ and some vector $b_0\in\mathbb{R}^{d_\eta}$, $$v_n\left[\eta(\y)-b_0\right]\Rightarrow Q,\text{ \em{under} } P^0_n.$$}
\end{assumption}
\begin{assumption}\label{ass:two}
	(i) The sequence $\{v_n\}_{n\ge1}$ is such that, for all $\theta\in\Theta$ and some $n$ large enough, there exists constants ${c}_1,{c}_2$, $c_1\leq c_2$, such that, for $\|\cdot\|_*$ denoting a matrix norm, $0<{c}_1\leq \|v_{n}^{}{A}^{}_{n}({\theta }%
	)\|_{*}\leq {c}_2<\infty$; (ii) For all $\theta\in\Theta$ and all $n\ge1$, the $d\times d$-matrix $A_{n}(\theta)$ is continuous in $\theta$.
\end{assumption}
\begin{assumption}\label{ass:three}
	There exists a deterministic map ${b}:{\Theta }%
	\rightarrow \mathcal{B}$, such that, for all $\theta\in\Theta$, and for constants $\alpha, u_0>0$, for all $0<u<u_0 v_n$,
	\begin{equation*}
	{G}_n\left[ \|v_n\{\eta(\y)-b(\theta)\}\|>u\mid\theta\right] \leq c(\theta) u^{-\alpha},
	\end{equation*}uniformly for $n\ge1$ and where $\int_\Theta c(\theta)\pi(\theta)\text{d}\theta=O(1). $
\end{assumption}

\begin{assumption}\label{ass:four}(i) There exists some $\tau>0$ such that, for all $0<u <u_0 v_n$, the prior probability satisfies
	$$\Pi \left[ \|{b}({\theta })-{b}_{0}\|\leq u \right]
	\asymp u ^{\tau}.$$(ii) The prior density $\pi(\theta)$ is continuous and satisfies $\pi(\theta_0)>0$. 
\end{assumption}

\begin{assumption}\label{ass:five} {(i)} The map $\theta\mapsto{b}(\theta)$
	%:{\Theta }\rightarrow \mathcal{B}$ 
	is continuous and injective, with $b(\theta_0)=b_0$ for some $\theta_0\in\Theta$, and satisfies:
	$\Vert {\theta }-{\theta }_{0}\Vert \leq L\Vert {b}({\theta })-b_0\Vert ^{\kappa }$ on some open neighbourhood of ${\theta }_{0}$ with $L>0$ and $\kappa >0$.
\end{assumption}

\begin{assumption}\label{ass:six} If Assumption \ref{ass:five} is satisfied, for any $\epsilon>0$, there exists $u,\delta>0$ and a set $V_n$ such that for all $\theta\in\{\theta:\|b(\theta)-b_0\|\leq u v_n^{-1}\}$ $$V_n\subset \left\{\eta\in\mathbb{R}^{d_\eta}:g_n^0(\eta)\lesssim {g}_n\left(\eta\mid \theta\right)\right\}\text{ where } P_n^0(V_n^c)<\epsilon.$$
	
\end{assumption}

Generally speaking, Assumptions \ref{ass:one}-\ref{ass:six} impart regularity on the summary statistics needed to deduce a posterior concentration result. Assumptions \ref{ass:one}, \ref{ass:three} and \ref{ass:six} are similar to those used by \cite{marin2014relevant} to deduce concentration of posteriors conditioned on summary statistics, while Assumptions \ref{ass:two}, and \ref{ass:four} are specific to the analysis of BSL, and have also been used in \cite{frazier2019bayesian}. We now discussion each of the assumptions in detail.

Assumption \ref{ass:one} imposes regularity on the observed summary statistics and requires that they satisfy a converge in distribution result at rate $v_n$, but does not restrict this distribution to be Gaussian.  Assumption \ref{ass:two} is specific to BSL and requires that the BSL variance matrix be well-behaved for all values of $\theta\in\Theta$. This assumption is needed to rule out cases where the BSL variance does not exist or is ill-conditioned (which would be the case, e.g.,  if the summary statistics are perfectly correlated). Assumption \ref{ass:three} is a condition on the tails of the simulated summary statistics and requires that they have at least a polynomial tail (i.e., it requires the existence of at least $\alpha$ moments) uniformly in $\theta$.\footnote{A stronger version of this condition has been used in \cite{frazier2019bayesian} to deduce posterior concentration of the BSL posterior.} Such a condition allows the application of Markov-type inequalities, which are a key ingredient in many posterior concentration results. Assumption \ref{ass:four} is a condition on the tails of the prior used in BSL, and requires that the tails of the prior are not too thick. This condition is satisfied by any prior that admits an exponential moment. Assumption \ref{ass:five} is an identification condition and requires that the simulated summaries are capable of replicating the asymptotic mean of the observed summary statistics. Together with Assumptions \ref{ass:one} and \ref{ass:four}, Assumption \ref{ass:five} ensures that the compatibility condition is satisfied. Assumption \ref{ass:five}, or a similar variant, has been used in several studies on the asymptotic behavior of approximate Bayesian procedures (see \citealp{fearnhead2018asymptotics} for a detailed discussion on this condition). Assumption \ref{ass:six} requires that, for values of $\theta$ such that $\|b(\theta)-b_0\|$ is small, up to a universal constant, the assumed model density can be bounded below by the true model density, pointwise, and that the support over which this bound is satisfied has large probability. This condition allows us to link the assumed and true model within the theoretical analysis. This condition is vacuously satisfied if the assumed and true model coincide. However, such a condition would be overly restrictive since compatibility does not require that the assumed model, $g_n(\cdot|\theta)$,  coincides with the true model, $g_n^0(\cdot)$, but only that certain moments of the two models agree.

We note that Assumptions \ref{ass:one}, \ref{ass:three} and \ref{ass:six} are similar to Assumptions 1, 2 and 4 imposed by \cite{marin2014relevant} in their analysis of posteriors conditioned on summary statistics, while variants of Assumptions \ref{ass:one}-\ref{ass:five} have been used by \cite{frazier2019bayesian} in the analysis of the asymptotic properties of BSL. Furthermore, we note that it is trivial to verify Assumptions \ref{ass:one}-\ref{ass:six} for the contaminated normal example when compatibility is in evidence (i.e., when $\sigma^2_\epsilon=1$). In addition, we recall that in the moving average example the model is not compatible, and we have already verified that Assumptions \ref{ass:five} is not satisfied. That being said, we note that by following the analysis in Example 1 of \cite{frazier2016asymptotic}, Assumptions \ref{ass:one}-\ref{ass:four} can be verified for this example. 

The following result gives the theoretical behavior of the R-BSL posterior under the above assumptions. The proof of the result follows. 

\begin{proposition}\label{prop:one} 
	Under Assumption \ref{ass:one}-\ref{ass:six}, for any $\delta>0$, 
	$$
	\Pi\left[\|b(\theta)-b_0\|\leq\delta |\eta(\y)\right]=1+o_P(1).
	$$
	Moreover, for any $A\subseteq\mathcal{G}$: $$\Pi\left[\Gamma\in A|\eta(\y)\right]=\Pi[\Gamma\in A]+o_P(1).$$
\end{proposition}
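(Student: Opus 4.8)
The plan is to follow the ratio-of-integrals strategy of \cite{marin2014relevant} and \cite{frazier2019bayesian}: write each posterior probability as a quotient of integrals of $\bar{g}_n[\eta(\y)\mid\zeta]\pi(\theta)\pi(\Gamma)$ over the event of interest and over the whole space, then bound the numerator from above and the denominator from below. The quantitative engine is the common scaling in Assumptions \ref{ass:one}--\ref{ass:two}: since $v_n[\eta(\y)-b_0]\Rightarrow Q$ and the BSL standard deviations are of order $v_n^{-1}$, the observed fluctuation $\eta(\y)-b_0$, the simulation error $\mu_m(\theta)-b(\theta)$, and the mean adjustment $\mathrm{diag}[\Sigma_m^{1/2}(\theta)]\Gamma$ all live on the scale $v_n^{-1}$ whenever $\Gamma=O(1)$, whereas $\|b(\theta)-b_0\|$ is $O(1)$ away from $\theta_0$. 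This separation of scales is exactly what allows a bounded $\Gamma$ to act only locally, leaving the first-order behaviour of $\theta$ identical to standard BSL. I would treat R-BSL-M in detail and note that R-BSL-V follows by the same steps, the only change being an additional $\Gamma$-dependent determinant factor that is itself $O(1)$.

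For the first claim I would lower-bound the evidence by restricting to the neighbourhood $\{\theta:\|b(\theta)-b_0\|\le uv_n^{-1}\}$, whose prior mass is $\asymp v_n^{-\tau}$ by Assumption \ref{ass:four}, and to a fixed compact set of $\Gamma$ near the origin; there the quadratic form inside the Gaussian kernel is $O_P(1)$ by Assumptions \ref{ass:one}, \ref{ass:three} and \ref{ass:six}, so the kernel is bounded below with probability tending to one. For the numerator over $B_\delta=\{\theta:\|b(\theta)-b_0\|>\delta\}$ I would invoke the decomposition of Section \ref{sec:diffs}, $\|\mu_m(\theta)-\eta(\y)\|_{\Sigma_m(\theta)}\ge\|b(\theta)-b_0\|_{\Sigma_m(\theta)}-o_P(1)$, which is of order $v_n\|b(\theta)-b_0\|$: matching such a $\theta$ would force $\|\Gamma\|\asymp v_n$ and hence an exponentially small prior factor, while any bounded $\Gamma$ leaves the kernel of order $\exp(-c\,v_n^2\|b(\theta)-b_0\|^2)$. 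Integrating $\Gamma$ against its proper prior and $\theta$ against $\pi$, and using $\int_\Theta c(\theta)\pi(\theta)\dt\theta=O(1)$ from Assumption \ref{ass:three}, the numerator is exponentially smaller than the denominator once the common Gaussian normalisation cancels, so the ratio is $o_P(1)$ and $\Pi[\|b(\theta)-b_0\|\le\delta\mid\eta(\y)]=1+o_P(1)$.

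For the second claim I would localise via $\theta=\theta_0+v_n^{-1}h$ on the full-mass event that $\theta$ lies within $O(v_n^{-1})$ of $\theta_0$, and expand $b(\theta_0+v_n^{-1}h)=b_0+v_n^{-1}\mathcal{J}h+o(v_n^{-1})$ with $\mathcal{J}=\nabla b(\theta_0)$. The structural point is that the likelihood depends on $(\theta,\Gamma)$ only through $\phi_m(\zeta)=\mu_m(\theta)+\mathrm{diag}[\Sigma_m^{1/2}(\theta)]\Gamma$, so after rescaling it becomes a function of the single combination $\mathcal{J}h+\tilde D\Gamma$ plus an $O_P(1)$ noise vector, where $\tilde D=v_n\,\mathrm{diag}[\Sigma_m^{1/2}(\theta_0)]=O_P(1)$. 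Since Assumption \ref{ass:four}(ii) makes the reparametrised prior on $h$ asymptotically flat, integrating $h$ over $\mathbb{R}^{d_\theta}$ absorbs the corresponding shifts in $\tilde D\Gamma$ and strips the data-dependence from the $\Gamma$-conditional, leaving the $\Gamma$-marginal asymptotically proportional to $\pi(\Gamma)$; this yields $\Pi[\Gamma\in A\mid\eta(\y)]=\Pi[\Gamma\in A]+o_P(1)$.

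I expect the second claim to be the main obstacle. The first is, up to the harmless $\Gamma$-integration, the BSL concentration argument of \cite{frazier2019bayesian}; but showing that the $\Gamma$-marginal reverts to the prior requires proving that the $\theta$-integration genuinely neutralises the $O_P(1)$ local dependence of the likelihood on $\Gamma$. This needs the joint local limit of $\eta(\y)$ and $(\mu_m(\theta),\Sigma_m(\theta))$, uniform control of the remainder in the expansion of $b$, and a profiling/change-of-variables step showing the residual likelihood is free of $\Gamma$; the way the directions spanned by $\mathcal{J}$ interact with the full $d_\eta$-dimensional adjustment is the delicate point the argument must handle with care.
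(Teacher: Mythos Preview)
Your overall architecture---write each posterior probability as a ratio of integrals, lower-bound the denominator by restricting to a $v_n^{-1}$-neighbourhood of $b_0$ (prior mass $\asymp v_n^{-\tau}$), and upper-bound the numerator using the scale separation between $\|b(\theta)-b_0\|$ and the $O(v_n^{-1})$ fluctuations---matches the paper. The paper packages the concentration step as a preliminary lemma (a direct adaptation of Corollary~1 in \cite{marin2014relevant} to the idealised BSL likelihood) and then invokes it in both halves of the proposition; your first-claim sketch is essentially that lemma with the extra $\Gamma$-integral carried along.

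The substantive divergence is in the $\Gamma$-marginal. You localise via $\theta=\theta_0+v_n^{-1}h$, linearise $b$ through the Jacobian $\mathcal{J}$, and argue that integrating out $h$ absorbs the shift $\tilde D\Gamma$. The paper instead changes variables directly in $b$, setting $Z=v_n(b-\eta)$ and then $\tilde Z=[v_n\Sigma^{1/2}(b_0)]^{-1}Z$, so that the R-BSL-M exponent becomes (up to $o_P(1)$) $-\tfrac12\tilde Z^\top\tilde Z+\tilde Z^\top\Gamma-\tfrac12\Gamma^\top\Gamma$. It then applies the explicit identity
\[
\int_{\mathbb{R}^{d_\eta}} e^{-\frac12\tilde Z^\top\tilde Z+\tilde Z^\top\Gamma}\,\mathrm{d}\tilde Z=(2\pi)^{d_\eta/2}e^{\frac12\Gamma^\top\Gamma},
\]
which exactly cancels the residual $e^{-\frac12\Gamma^\top\Gamma}$ and leaves both numerator and denominator proportional to $\int_A\pi(\Gamma)\,\mathrm{d}\Gamma$ and $\int\pi(\Gamma)\,\mathrm{d}\Gamma$ respectively. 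No profiling, no Taylor expansion of $b$, and---because the $\tilde Z$-integral is taken over the full $d_\eta$-dimensional space---no interaction with the column space of $\mathcal{J}$ to manage. The R-BSL-V case is handled identically, with the $\Gamma$-dependence sitting in the determinant and the standardisation, again vanishing after the $\tilde Z$-integration.

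Your closing worry is therefore exactly on point, and it is a genuine gap in your route rather than a technicality: integrating $h\in\mathbb{R}^{d_\theta}$ only profiles out the component of $\tilde D\Gamma$ lying in the range of $\mathcal{J}$, and when $d_\eta>d_\theta$ the orthogonal component leaves an $O_P(1)$ residual quadratic in $\Gamma$ that does not obviously vanish. The paper sidesteps this by working in $b$-space and exploiting translation invariance of the full $d_\eta$-dimensional Gaussian integral, so the ``delicate point'' you flag simply never arises in its computation. If you wish to push your parametrisation through, you would have to show separately that the $\mathcal{J}$-orthogonal residual is asymptotically free of $\Gamma$; the paper's completing-the-square device delivers this for free.
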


%\setstretch{1.0}          
We first prove a result that is of independent interest. Namely, we demonstrate that under Assumptions \ref{ass:one}-\ref{ass:six}, the standard BSL posterior concentrates all posterior mass onto the sets of the form $\{b:\|b-b_0\|\lesssim v^{-1}_n\}$.  To simplify the computations, we demonstrate this result for the so-called ``idealized'' BSL posterior, which takes as the mean and variance the infeasible counterparts 
$$
b(\theta):=\mathbb{E}[\eta(\y)|\theta]\text{ and }A_{n}(\theta):=\left(\mathbb{E}\left[\left(\eta(\y)-\mathbb{E}[\eta(\y)|\theta]\right)\left(\eta(\y)-\mathbb{E}[\eta(\y)|\theta]\right)^{\top}|\theta\right]\right)^{1/2}.
$$For $\Sigma(\theta):=A_n(\theta)^{\top}A_n(\theta)$, the BSL ``likelihood'' is then given by $${g}_n(\eta|\theta):=\left(2\pi\right)^{-\frac{d_\eta}{2}}\text{det}\left(\Sigma(\theta)\right)^{-1}\exp\left(-\frac{1}{2}\left\{\eta-b(\theta) \right\}^{\top}\Sigma^{-1}(\theta)\left\{\eta-b(\theta) \right\}\right).
$$Even though $b$ and $\Sigma$ can depend on $n$, we suppress this dependence for notational simplicity. 

The following result is a modification of Corollary 1 in \cite{marin2014relevant}, and the proof follows similarly.  

\begin{lemma}\label{lem:one}
	Under Assumption \ref{ass:one}-\ref{ass:six}, the idealized BSL posterior $\pi\{\cdot|\eta(\y)\}$ concentrates at the
	rate $1/v_n$ onto $b_0$, provided that $\alpha > \tau$ .
\end{lemma}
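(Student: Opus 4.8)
The plan is to adapt the posterior-concentration argument of Corollary~1 in \cite{marin2014relevant} to the Gaussian ``likelihood'' $g_n(\cdot\mid\theta)$, exploiting the fact that, by Assumption~\ref{ass:two}, the idealized BSL variance $\Sigma(\theta)=A_n(\theta)^\top A_n(\theta)$ has all eigenvalues of order $v_n^{-2}$, so that $\|\Sigma^{-1}(\theta)\|\asymp v_n^2$ two-sidedly. Writing the posterior as a ratio, for a radius $M>0$ I would study
$$
\Pi\left[\|b(\theta)-b_0\|> M/v_n \mid \eta(\y)\right]=\frac{\int_{\{\|b(\theta)-b_0\|>M/v_n\}}g_n(\eta(\y)\mid\theta)\,\pi(\theta)\,\dt\theta}{\int_\Theta g_n(\eta(\y)\mid\theta)\,\pi(\theta)\,\dt\theta}=:\frac{N}{D},
$$
and show $N/D=o_P(1)$ once $M$ is taken large. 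Throughout I would condition on the event $\mathcal{E}_n=\{\|\eta(\y)-b_0\|\le K/v_n\}$, which by Assumption~\ref{ass:one} has $P_0^n$-probability arbitrarily close to one for $K$ large, pinning $\eta(\y)$ within $O(v_n^{-1})$ of $b_0$.

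For the denominator I would restrict the integral to the fixed neighbourhood $\{\|b(\theta)-b_0\|\le 1/v_n\}$ of $\theta_0$. On $\mathcal{E}_n$ the triangle inequality gives $\|\eta(\y)-b(\theta)\|\le (K+1)/v_n$ there, so the Mahalanobis exponent is bounded by an absolute constant (since $\|\Sigma^{-1}(\theta)\|\lesssim v_n^2$), while the common Gaussian normalizing factor satisfies $\det\{\Sigma(\theta)\}^{-1/2}\asymp v_n^{d_\eta}$. Combining these with the prior-mass bound $\Pi[\|b(\theta)-b_0\|\le 1/v_n]\asymp v_n^{-\tau}$ from Assumption~\ref{ass:four}(i) (using the injectivity of Assumption~\ref{ass:five} together with $\pi(\theta_0)>0$ to guarantee this shrinking neighbourhood carries prior mass of exactly this order) yields $D\gtrsim v_n^{d_\eta-\tau}$.

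For the numerator I would partition $\{\|b(\theta)-b_0\|> M/v_n\}$ into annular shells indexed by $\|b(\theta)-b_0\|$. On $\mathcal{E}_n$ the likelihood on each shell is controlled by the Gaussian tail $g_n(\eta(\y)\mid\theta)\lesssim v_n^{d_\eta}\exp\{-\tfrac12 c' v_n^2\|b(\theta)-b_0\|^2\}$, again via Assumption~\ref{ass:two}, and the prior mass of each shell is controlled by Assumption~\ref{ass:four}(i). Rescaling $s=v_n\|b(\theta)-b_0\|$, the $v_n^{d_\eta}$ factors cancel against $D$ and I would obtain
$$
\frac{N}{D}\lesssim \int_{M}^{\infty}e^{-c's^2/2}\,s^{\tau-1}\,\dt s,
$$
which is finite because the Gaussian decay dominates the polynomial prior density. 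This is precisely the point at which the moment/prior trade-off $\alpha>\tau$ enters in the general \cite{marin2014relevant} template; here it is automatically met, since the idealized likelihood has arbitrarily light (Gaussian) tails, effectively $\alpha=\infty$. Letting $M=M_n\to\infty$ slowly drives the right-hand side to zero, establishing concentration at rate $1/v_n$.

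The main obstacle I anticipate is the denominator (evidence) lower bound: the $v_n^{d_\eta}$ determinant factor and the $v_n^{-\tau}$ prior-mass factor must be tracked to the exact order so that they cancel cleanly against the numerator, and this bound must hold uniformly on the high-probability event $\mathcal{E}_n$ on which the random observed summary $\eta(\y)$ is located near $b_0$. Showing that the shrinking neighbourhood of $\theta_0$ retains prior mass of the precise order $v_n^{-\tau}$, rather than merely being bounded below by a constant, is the delicate step, and it is exactly where Assumptions~\ref{ass:four} and \ref{ass:five} are indispensable.
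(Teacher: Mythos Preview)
Your argument is correct and, in fact, more elementary than the paper's, but it proceeds along a genuinely different route. The paper does not exploit the Gaussian form of $g_n(\cdot\mid\theta)$ at all: it divides both numerator and denominator by the true density $g_n^0(\eta)$, lower-bounds the evidence ratio via Assumption~\ref{ass:six} (Lemma~2, giving $D_n\gtrsim v_n^{-\tau}$), and upper-bounds the numerator ratio by taking a $P_n^0$-expectation, applying Markov's inequality and Fubini, and then invoking the polynomial tail Assumption~\ref{ass:three} (Lemma~3, giving $N_n\lesssim M_n^{-\alpha}v_n^{-\alpha}$). The trade-off $\alpha>\tau$ is therefore used explicitly, exactly as in the \cite{marin2014relevant} template. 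Your approach instead works pointwise on the high-probability event $\mathcal{E}_n$, bounds the Gaussian exponent directly through the eigenvalue control in Assumption~\ref{ass:two}, and integrates over shells in $\|b(\theta)-b_0\|$; this bypasses Assumptions~\ref{ass:three} and~\ref{ass:six} entirely and, as you correctly observe, renders $\alpha>\tau$ automatic because the idealized likelihood has Gaussian tails. The paper's route is the general one (it would apply verbatim to any synthetic likelihood satisfying Assumption~\ref{ass:three}), whereas yours is sharper and shorter for the specific Gaussian case at hand. One caveat: your shell argument needs uniform two-sided control on the eigenvalues of $v_n^2\Sigma(\theta)$, which is the intended reading of Assumption~\ref{ass:two} but is slightly stronger than the single matrix-norm bound literally stated there; the paper's own later arguments rely on the same reading, so this is a comment on the assumption rather than a gap in your proof.
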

\begin{proof}
	Take $M_n$ to be a sequence diverging to $\infty$. Define $$m(\eta):=\int_{\Theta}g_n\left(\eta|\theta\right)\pi(\theta)\dt\theta, $$ and consider the set $$T_n(M_n):=\left\{\theta\in\Theta:\|b(\theta)-b_0\|>M_nv_n^{-1}\right\}.$$ Now, consider the BSL posterior over the set $T_n(M_n)$:
	\begin{flalign*}
	\frac{\int_{T_n(M_n)}g_n\left[\eta|\theta\right]\pi(\theta)\dt\theta}{m\left(\eta\right)} \equiv \frac{\int_{T_n(M_n)}\frac{g_n\left[\eta|\theta\right]}{g^0_n\left(\eta\right)}\pi(\theta)\dt\theta}{{m\left(\eta\right)}/{g^0_n\left(\eta\right)}}.
	\end{flalign*}Define $$N_n:=\int_{T_n(M_n)}\frac{g_n\left[\eta|\theta\right]}{g^0_n\left(\eta\right)}\pi(\theta)\dt\theta$$ and $$D_n:=m\left(\eta\right)/g^0_n\left(\eta\right).$$ By Lemma \ref{lem:two}, we have that $D_n\gtrsim v_n^{-\tau}$. Moreover, by Lemma \ref{lem:three}, we have that $N_n\lesssim M_n^{-\alpha}v_n^{-\alpha}$. Therefore, $$\Pi\left\{T_n(M_n)\mid \eta\right\}=\frac{N_n}{D_n}=o_{P}\left(v_{n}^{-\alpha}/v_{n}^{-\tau}\right)=o_{P}(1).$$
	From the above, conclude that  $$\Pi\left[\|b(\theta)-b_0\|>M_nv_{n}^{-1}|\eta\right]=o_P(1).$$ Applying Assumption \ref{ass:five}, we have the stated result: $$\Pi\left[\|\theta-\theta_0\|>L\{M_nv_{n}^{-1}\}^{\kappa}|\eta\right]=o_P(1).$$
\end{proof}
\begin{lemma}\label{lem:two}
	Under Assumptions \ref{ass:one}-\ref{ass:six},  $$\lim_{n\rightarrow\infty}P^0_n\left(m\left(\eta\right)/g^0_n\left(\eta\right)\gtrsim v_n^{-\tau}\right)=1.$$
\end{lemma}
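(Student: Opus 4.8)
The plan is to bound the marginal density $m(\eta)$ from below by discarding all of the integral in its definition except a small neighbourhood of $\theta_0$, over which the idealized Gaussian kernel is guaranteed to dominate the true summary density. Concretely, fix $\epsilon>0$ and let $u>0$ and the set $V_n$ be those furnished by Assumption \ref{ass:six}, and set $B_n:=\{\theta\in\Theta:\|b(\theta)-b_0\|\le uv_n^{-1}\}$. The three ingredients are then (i) a pointwise lower bound for $g_n(\eta\mid\theta)$ relative to $g_n^0(\eta)$ on $B_n$, supplied by Assumption \ref{ass:six}; (ii) a lower bound on the prior mass $\Pi(B_n)$, supplied by Assumption \ref{ass:four}(i); and (iii) the observation that the relevant event holds with probability tending to one.

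First I would restrict the integral to $B_n$, which can only decrease it:
\[
m(\eta)=\int_\Theta g_n(\eta\mid\theta)\pi(\theta)\,\dt\theta\;\ge\;\int_{B_n} g_n(\eta\mid\theta)\pi(\theta)\,\dt\theta.
\]
On the event $\{\eta(\y)\in V_n\}$, Assumption \ref{ass:six} gives $g_n^0(\eta)\lesssim g_n(\eta\mid\theta)$ \emph{uniformly} in $\theta\in B_n$, i.e.\ $g_n(\eta\mid\theta)\ge c\,g_n^0(\eta)$ for some constant $c>0$ and every $\theta\in B_n$. Substituting and factoring out $g_n^0(\eta)$ yields
\[
m(\eta)\;\ge\; c\,g_n^0(\eta)\int_{B_n}\pi(\theta)\,\dt\theta \;=\; c\,g_n^0(\eta)\,\Pi(B_n).
\]

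It remains to bound $\Pi(B_n)$ from below and to collect the probabilistic statements. By Assumption \ref{ass:four}(i), $\Pi(B_n)=\Pi[\|b(\theta)-b_0\|\le uv_n^{-1}]\asymp (uv_n^{-1})^{\tau}\asymp v_n^{-\tau}$, so in particular $\Pi(B_n)\gtrsim v_n^{-\tau}$. Combining with the previous display gives $m(\eta)/g_n^0(\eta)\gtrsim v_n^{-\tau}$ on $\{\eta(\y)\in V_n\}$, an event with $P_n^0(\eta(\y)\in V_n)\ge 1-\epsilon$ by construction. Since $\epsilon>0$ was arbitrary, letting $\epsilon\downarrow0$ delivers $\lim_{n\to\infty}P_n^0\big(m(\eta)/g_n^0(\eta)\gtrsim v_n^{-\tau}\big)=1$, which is the assertion of the lemma.

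The substance of the argument is entirely concentrated in Assumption \ref{ass:six}, which is tailored precisely to make step (i) go through; this is also where the main obstacle would lie if one sought to verify the assumption from first principles rather than assume it. The mechanism is a direct Gaussian lower bound: Assumption \ref{ass:two} forces the eigenvalues of $\Sigma(\theta)$ to be of order $v_n^{-2}$, so the normalising constant $\det(\Sigma(\theta))^{-1/2}\asymp v_n^{d_\eta}$ is large, while for $\theta\in B_n$ the triangle inequality together with the concentration $\|\eta(\y)-b_0\|=O_P(v_n^{-1})$ from Assumption \ref{ass:one} gives $\|\eta(\y)-b(\theta)\|\lesssim v_n^{-1}$, keeping the quadratic form $\{\eta-b(\theta)\}^\top\Sigma^{-1}(\theta)\{\eta-b(\theta)\}$ of order one and hence the exponential bounded below. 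I would, however, keep the proof at the level of invoking Assumption \ref{ass:six} directly, mirroring the organisation of the analogous step in \cite{marin2014relevant}.
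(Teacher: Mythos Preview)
Your proposal is correct and follows essentially the same argument as the paper: restrict the marginal integral to the shrinking neighbourhood $\{\theta:\|b(\theta)-b_0\|\le uv_n^{-1}\}$, invoke Assumption~\ref{ass:six} on the high-probability set $V_n$ to bound $g_n(\eta\mid\theta)$ below by a multiple of $g_n^0(\eta)$, and then apply Assumption~\ref{ass:four}(i) to lower-bound the prior mass by $v_n^{-\tau}$. The paper's proof additionally opens with an appeal to Assumption~\ref{ass:one} to localise $\eta(\y)$ near $b_0$, but this step is not actually used in the remainder of its argument, so your omission of it is harmless.
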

\begin{proof}
	Fix $\delta>0$. By Assumption \ref{ass:one}, there exists an $M_\delta$ such that $$P^0_n\left\{v_n\|\eta-b_0\|>M_\delta\right\}<\delta.$$ For all $\epsilon>0$, by Assumption \ref{ass:six}, there exists $U_\epsilon, \delta_\epsilon$ such that, for $\eta\in V_n$, 
	\begin{flalign*}
	\int_{\Theta}{g}_n(\eta\mid\theta)\pi(\theta)\dt\theta\geqslant \delta_\epsilon g^0_n(\eta)\pi\left[\mathcal{F}_n(U_\epsilon)\right],\text{ for }\mathcal{F}_n(u):=\left\{\theta\in\Theta:\|b(\theta)-b_0\|\leq uv_n^{-1}\right\}.
	\end{flalign*}Apply Assumption \ref{ass:four} to obtain 
	\begin{flalign*}
	\int_{\Theta}{g}_n(\eta\mid\theta)\pi(\theta)\dt\theta\geqslant \delta_\epsilon g^0_n(\eta)v_{n}^{-\tau}.
	\end{flalign*}From the definition of $V_n$, it follows that, for $n$ large enough, there exists some $c(\varepsilon)$ such that  $$P^0_n\left\{\int_{\Theta}{g}_n(\eta\mid\theta)\pi(\theta)\dt\theta\geqslant c(\varepsilon) g^0_n(\eta)v_{n}^{-\tau}\right\}\geqslant1-\varepsilon.$$
	
\end{proof}
\begin{lemma}\label{lem:three}
	Under Assumptions \ref{ass:one}-\ref{ass:six}, for the set  $T_n(M_n):=\left\{\theta\in\Theta:\|b(\theta)-b_0\|>M_nv_n^{-1}\right\},$ $$\lim_{n\rightarrow\infty}P^0_n\left(\int_{T_n(M_n)}\frac{g_n\left[\eta|\theta\right]}{g^0_n\left(\eta\right)} \pi(\theta){{\text{d}}\theta}\lesssim M_n^{-\alpha}v_n^{-\alpha}\right)=1.
	$$
\end{lemma}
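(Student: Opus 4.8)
The plan is to control, in $P^{0}_{n}$-probability, the quantity $N_{n}:=\int_{T_n(M_n)} g_n[\eta(\y)\mid\theta]/g^0_n(\eta(\y))\,\pi(\theta)\,\dt\theta$ by exploiting two competing scales: by Assumption \ref{ass:one} the observed summary $\eta(\y)$ concentrates on a $v_n^{-1}$-ball around $b_0$, whereas every $\theta\in T_n(M_n)$ forces the Gaussian mean $b(\theta)$ to lie at least $M_nv_n^{-1}$ away from $b_0$. Hence $\eta(\y)$ falls deep into the tail of the idealized likelihood $g_n[\cdot\mid\theta]$ for each such $\theta$, and the polynomial tail control of Assumption \ref{ass:three} makes the integrand small. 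Since $N_n$ is random through $\eta(\y)$ and the conclusion is only required with probability tending to one, I would argue through a first-moment (Markov) bound rather than any pointwise estimate.

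First, localize the data: by Assumption \ref{ass:one}, for any $\delta>0$ there is a constant $M_\delta$ with $P^0_n(\mathcal{E}_n)\ge 1-\delta$ for $n$ large, where $\mathcal{E}_n:=\{\eta:\|v_n(\eta-b_0)\|\le M_\delta\}$; it thus suffices to bound $N_n\1_{\mathcal{E}_n}$, the complement contributing at most $\delta$ to the probability. The conceptual heart of the argument is that, after taking the first moment, the unavailable denominator cancels by Tonelli (all integrands being nonnegative):
\begin{align*}
\E_{P^0_n}\!\left[N_n\1_{\mathcal{E}_n}\right]
&=\int_{\mathcal{E}_n}\!\left(\int_{T_n(M_n)}\frac{g_n(\eta\mid\theta)}{g^0_n(\eta)}\pi(\theta)\,\dt\theta\right) g^0_n(\eta)\,\dt\eta
=\int_{T_n(M_n)}G_n\!\left[\eta\in\mathcal{E}_n\mid\theta\right]\pi(\theta)\,\dt\theta .
\end{align*}
This cancellation of $g^0_n$ is precisely what allows us to avoid any (unavailable) lower bound on the true summary density.

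Next, bound the inner mass. For $\eta\in\mathcal{E}_n$ and $\theta\in T_n(M_n)$ the triangle inequality gives $\|v_n(\eta-b(\theta))\|\ge v_n\|b(\theta)-b_0\|-M_\delta>M_n-M_\delta$, so $\mathcal{E}_n\subseteq\{\|v_n(\eta-b(\theta))\|>M_n-M_\delta\}$. Assumption \ref{ass:two} guarantees $\|v_n\Sigma_n(\theta)\|_{*}\asymp 1$, so the standardization of the idealized Gaussian $g_n(\cdot\mid\theta)$ is genuinely at the $v_n^{-1}$ scale and its tails are lighter than any polynomial; writing $w(\theta):=v_n\|b(\theta)-b_0\|>M_n$, the tail control of Assumption \ref{ass:three} yields $G_n[\eta\in\mathcal{E}_n\mid\theta]\le c(\theta)(w(\theta)-M_\delta)^{-\alpha}\lesssim c(\theta)\,v_n^{-\alpha}\|b(\theta)-b_0\|^{-\alpha}$. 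Substituting this and integrating over $\theta$, using $\int_\Theta c(\theta)\pi(\theta)\,\dt\theta=O(1)$ together with a layer-cake integration of $\|b(\theta)-b_0\|^{-\alpha}$ against the prior-mass law $\Pi[\|b(\theta)-b_0\|\le s]\asymp s^{\tau}$ (Assumption \ref{ass:four}), produces the required polynomial-order bound, and a Markov inequality converts this first-moment estimate into the in-probability statement; letting $\delta\downarrow 0$ completes the proof.

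The main obstacle is exactly this final bookkeeping step. A crude uniform use of the tail bound, taking the radius $u\asymp M_n$ for every $\theta$, only delivers an $M_n^{-\alpha}$ factor with no gain in $v_n$, which is too weak to be divided by the denominator bound $D_n\gtrsim v_n^{-\tau}$ of Lemma \ref{lem:two}. One must instead retain the full separation $w(\theta)$ and integrate it against the prior, so that the tail exponent $\alpha$ and the prior-mass exponent $\tau$ interact correctly: the layer-cake integral $\int_{M_nv_n^{-1}} s^{\tau-1-\alpha}\,\dt s$ converges at infinity and is dominated by its lower endpoint \emph{precisely} under the standing hypothesis $\alpha>\tau$, which is what ultimately forces $\Pi[T_n(M_n)\mid\eta(\y)]=N_n/D_n=o_P(1)$ in Lemma \ref{lem:one}. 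Verifying that the resulting order is strong enough to survive the division by $D_n$ is the delicate point, and is where the assumption $\alpha>\tau$ is indispensable.
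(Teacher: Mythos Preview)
Your core strategy---localise $\eta(\y)$ near $b_0$ via Assumption \ref{ass:one}, take the first moment so that $g^0_n$ cancels through Tonelli, and then control $G_n[\mathcal{E}_n\mid\theta]$ with the polynomial tail of Assumption \ref{ass:three}---is exactly the paper's approach. The paper applies Markov at the threshold $v_n^{-\tau}$, performs the same Fubini cancellation, and then uses a \emph{single} $\theta$-independent tail radius to replace $G_n[\mathcal{E}_n\mid\theta]$ by $c(\theta)$ times a power, integrating $c(\theta)\pi(\theta)$ via $\int_\Theta c(\theta)\pi(\theta)\,\dt\theta=O(1)$. (Their printed radius $(M_\delta v_n)^{-\alpha}$ is garbled, but the intended structure is the crude uniform bound you describe.)

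Your additional layer-cake step, however, is not in the paper and carries a gap. After substituting Assumption \ref{ass:three} the integrand is $c(\theta)\,\|b(\theta)-b_0\|^{-\alpha}\pi(\theta)$; to run the layer-cake against $\Pi[\|b(\theta)-b_0\|\le s]\asymp s^{\tau}$ you must decouple $c(\theta)$ from the radial factor, yet Assumption \ref{ass:three} only gives $\int c(\theta)\pi(\theta)\,\dt\theta<\infty$, not $c(\cdot)$ bounded, so the layer-cake does not go through as written. Your worry that the crude $M_n^{-\alpha}$ bound is ``too weak to be divided by $D_n$'' is also overstated: in Lemma \ref{lem:one} the sequence $M_n$ is at your disposal, and choosing $M_n\gg v_n^{\tau/\alpha}$ (possible precisely because $\alpha>\tau$) makes $M_n^{-\alpha}=o(v_n^{-\tau})$ without any layer-cake refinement---this is how the paper's cruder argument closes. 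Finally, note that neither your layer-cake rate $M_n^{\tau-\alpha}v_n^{-\tau}$ nor the paper's displayed bound actually matches the literal $M_n^{-\alpha}v_n^{-\alpha}$ in the lemma; that order appears to be a misstatement of the weaker rate that Lemma \ref{lem:one} in fact requires.
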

\begin{proof}Fix $\delta>0$. By Assumption \ref{ass:one}, there exists some $M_\delta$ such that, for some $n$ large enough, $$P^0_n \left\{ \left\| \eta - b _ { 0 } \right\| > M _ { \delta } / v _ { n } \right\} < \delta.$$ On this set consider the joint probability $$P^0_ { n } \left\{ \int _ { {T } _ { n }(M_\delta) } {g} _ { n} \left( \eta | \theta _ {  } \right) \pi _ {  } \left( \theta _ {  } \right) \mathrm { d } \theta _ {  } > g^0 _ { n } \left( \eta \right) v_n^{-\tau} \right\}. $$
	Applying Markov inequality and Fubini's theorem
	\begin{flalign*}
	P^0 _ { n }& \left\{ \int _ { {T } _ { n }(M_\delta) } {g} _ { n} \left( \eta | \theta _ {  } \right) \pi _ {  } \left( \theta _ {  } \right) \mathrm { d} \theta _ {  } >  g^0_ { n } \left( \eta \right) v_n^{-\tau} \right\}\\
	&\leqslant P^0_n \left( \left\| \eta - b_ { 0 } \right\| > M _ { \delta } v _ { n } ^ { - 1 } \right) + v_n^{\tau}\int _ { {T } _ { n }(M_\delta)} \int _ { \left\| \mathbf { S } _ { n }- b(\theta) \right\| \geqslant M _ { \delta } } \frac { 1 } { g^0 _ { n } ( t )  } g^0 _ { n } ( t ){g}_ { n } \left( t | \theta _ {  } \right) \pi _ { } \left( \theta _ {  } \right) \mathrm { d } t \mathrm { d } \theta _ { }\\ 
	&{ \leqslant P^0_n\left( \left\| \eta-b_0 \right\| > M _ { \delta } v _ { n } ^ { - 1 } \right) + v_{n}^{\tau}\int _ { {T } _ { n }(M_\delta) }  \int _ { \left\| \eta- b(\theta) \right\| \geqslant M _ { \delta }} {g} _ { n } \left( t | \theta _ {  } \right) \pi _ {  } \left( \theta _ {  } \right) \mathrm { d } t \mathrm { d } \theta _ {  } }
	\\ 
	&{ \leqslant P^0_n \left( \left\| \eta - b_0 \right\| > M _ { \delta } v _ { n } ^ { - 1 } \right) + v_{n}^{\tau}\int _ { {T } _ { n }(M_\delta) }  {G} _ { n } \left[ \left\| \eta- b(\theta) \right\| \geqslant M _ { \delta } | \theta  \right] \pi _ {  } \left( \theta _ {  } \right) \mathrm { d } \theta _ {  } }.
	\end{flalign*}
	It then follows that, by Assumption \ref{ass:three}, 
	\begin{flalign*}
	{P^0_n \left( \left\| \eta-b_0 \right\| > M _ { \delta } v _ { n } ^ { - 1 } \right) + v_{n}^{\tau}\int _ { {T } _ { n }(M_\delta) }  {G} _ { n } \left[ \left\| \eta- b(\theta) \right\| \geqslant M _ { \delta } | \theta  \right] \pi _ {  } \left( \theta _ {  } \right) \mathrm { d } \theta _ {  } }&\\\leqslant v_n^{\tau}\left(M_\delta v_{n}\right)^{-\alpha}\int_{\Theta}c(\theta)\pi(\theta)\dt\theta\lesssim o\left(v_{n}^{\tau-\alpha}\right),
	\end{flalign*}where, by Assumption \ref{ass:three}, $\int_{\Theta}c(\theta)\pi(\theta)\dt \theta=O(1)$. Conclude that, for any $\delta>0$, $$P^0_ { n } \left\{ \int _ { {T } _ { n }(M_\delta) } {g} _ { n} \left( \eta | \theta _ {  } \right) \pi _ {  } \left( \theta _ {  } \right) \mathrm { d } \theta _ {  } >  g^0_ { n } \left( \eta \right) v_n^{-\tau}\right\}\leqslant  \delta+
	o(v_n^{\tau-\alpha})\leqslant 2\delta$$ for $n$ large enough. 
\end{proof}

\begin{proof}[Proof of Proposition \ref{prop:one}]
	
	First, we prove the stated result for the posterior of $\Gamma$.

	\noindent\textbf{Part (1): $\Pi[\Gamma\in A|\eta(\y)]=\Pi[\Gamma\in A]+o_P(1)$.}
	We prove the result for R-BSL-M and R-BSL-V separately. 
	\bigskip

	\noindent\textbf{R-BSL-M:} From the posterior concentration of the BSL posterior we have that, for some $M_n\rightarrow\infty$, with $M_n/v_n\rightarrow0$, for $A\subseteq\mathcal{G}$,
	\begin{flalign*}
	&\Pi\left[A|\eta(\y)\right]\\&=\frac{\int_{A}\int_{\|b-b_0\|\leq M_n/v_n}(2\pi)^{-\frac{d_\eta}{2}}\text{det}\left[\Sigma(b)\right]^{-1/2}e^{\left(-\frac{1}{2}\left[b-\eta-x(\Gamma)\right]^{\top}\left[\Sigma(b)\right]^{-1}\left[b-\eta-x(\Gamma)\right]\right)}\pi(b)\pi(\Gamma) \dt b \dt \Gamma}{\int \int_{\|b-b_0\|\leq M_n/v_n}(2\pi)^{-\frac{d_\eta}{2}}\text{det}\left[\Sigma(b)\right]^{-1/2}e^{\left(-\frac{1}{2}\left[b-\eta-x\right]^{\top}\left[\Sigma(b)\right]^{-1}\left[b-\eta-x\right]\right)}\pi(b)\pi(\Gamma)\dt b \dt \Gamma}+o_P(1),
	\end{flalign*}where $x(\Gamma):=\text{diag}[\Sigma(b)]^{1/2}\Gamma$ and $\eta:=\eta(\y)$.
	
	Define $Z:=v_n\left(b-\eta\right)$ and consider the change of variables $b\mapsto v_n\left(b-\eta\right)+v_n\left(\eta-b_0\right)\equiv Z+v_n\left(\eta-b_0\right)$,  which yields
	\begin{flalign*}
	\Pi\left[A|\eta(\y)\right]&=\frac{N_n}{D_n}+o_P(1),
	\end{flalign*}where 
	\begin{flalign*}
	N_n&:={\int_{A}\int_{\|Z\|\leq {M}_n}\frac{e^{\left(-\frac{1}{2}\left[Z/v_n-x(\Gamma)\right]^{\top}\left[\Sigma(Z/v_n+\eta)\right]^{-1}\left[Z/v_n-x(\Gamma)\right]\right)}}{(2\pi)^{d/2}\text{det}\left[\Sigma(Z/v_n+\eta)\right]^{1/2}}\pi(Z/v_n+\eta)\pi(\Gamma) \dt Z \dt \Gamma},
	\end{flalign*}and 
	\begin{flalign*}
	D_n&:={\int \int_{\|Z\|\leq {M}_n}\frac{e^{\left(-\frac{1}{2}\left[Z/v_n-x(\Gamma)\right]^{\top}\left[\Sigma(Z/v_n+\eta)\right]^{-1}\left[Z/v_n-x(\Gamma)\right]\right)}}{(2\pi)^{\frac{d_\eta}{2}}\text{det}\left[\Sigma(Z/v_n+\eta)\right]^{1/2}}\pi(Z/v_n+\eta)\pi(\Gamma)\dt Z \dt \Gamma}.
	\end{flalign*}

	We now analyze $N_n$ and $D_n$ separately. 
	\medskip 
	
	\noindent\textbf{Term $D_n$:} By Assumption \ref{ass:one}, $P^0_n\left\{\|\eta-b_0\|\leq M_{\delta}/ v_n\right\}\geq 1-\delta$ for some $M_\delta,\delta>0$. On this set,  from the definition of $M_n$, 
	\begin{flalign*}
	\sup_{\|Z\|\leq M_n}\left\|\pi(Z/v_n+\eta)-\pi(b_0)\right\|&=o_P(1)\\\sup_{\|Z\|\leq M_n}\left\|\Sigma(Z/v_n+\eta)-\Sigma(b_0)\right\|&=o_P(1).
	\end{flalign*}The first equation follows from continuity of $\pi(\cdot)$ and the second from continuity of $A_n(\theta)\equiv \Sigma^{1/2}(\theta)$. By the dominated convergence theorem, 
	$$
	\frac{D_n}{\pi(b_0)\text{det}\left[\Sigma(b_0)\right]^{-1/2}}= {\int \int_{\|Z\|\leq {M}_n}(2\pi)^{-\frac{d_\eta}{2}}e^{-\frac{1}{2}\left[Z/v_n-x(\Gamma)\right]^{\top}\left[\Sigma(b_0)\right]^{-1}\left[Z/v_n-x(\Gamma)\right]}\pi(\Gamma)\dt Z \dt \Gamma}+o_P(1).
	$$Define $\tilde{Z}:=\left[\Sigma^{1/2}(b_0)v_n\right]^{-1}Z$ and note that, by Assumption \ref{ass:two},  $\tilde{Z}:=A(b_0)^{-1}Z+o_P(1)$, for some positive definite matrix $A(b_0)$. We have that
	$$
	\frac{D_n}{\pi(b_0)\text{det}\left[\Sigma(b_0)\right]^{-1/2}}= {\int (2\pi)^{-\frac{d_\eta}{2}}\int_{\|\tilde{Z}\|\leq {M}_n}e^{-\frac{1}{2}\tilde{Z}^{\top}\tilde{Z}+\tilde{Z}^{\top}\Gamma}e^{-\frac{1}{2}\Gamma^{\top}\Gamma}\pi(\Gamma)\dt Z \dt \Gamma}+o_P(1).
	$$Recall the following:  for $x,y\in\mathbb{R}^{d_\eta}$,
	$$\int_{\mathbb{R}^{d_\eta}}e^{-\frac{1}{2}x^{\top}x+x^{\top}y }\dt x= (2\pi)^{d_\eta/2}e^{y^{\top}y/2}.
	$$ 
	From the above fact and the  dominated convergence theorem
	$$\int_{\|\tilde{Z}\|\leq {M}_n}e^{ -\frac{1}{2}\tilde{Z}^{\top}\tilde{Z}+\tilde{Z}^{\top}\Gamma }\dt Z\rightarrow(2\pi)^{d_\eta/2}e^{\frac{1}{2}\Gamma^{\top}\Gamma}.$$ Apply the above and Fubini's Theorem to deduce
	$$
	\frac{D_n}{\pi(b_0)\text{det}\left[\Sigma(b_0)\right]^{-1/2}}\rightarrow_p\int \pi(\Gamma)\dt\Gamma=1.
	$$
	\medskip 
	
	\noindent\textbf{Term $N_n$:} Apply the same argument as for $D_n$, to obtain
	$$
	\frac{N_n}{\pi(b_0)\text{det}\left[\Sigma(b_0)\right]^{-1/2}}= {\int_{A} (2\pi)^{-\frac{d_\eta}{2}}\int_{\|\tilde{Z}\|\leq {M}_n}e^{-\frac{1}{2}\tilde{Z}^{\top}\tilde{Z}+\tilde{Z}^{\top}\Gamma }e^{-\frac{1}{2}\Gamma^{\top}\Gamma}\pi(\Gamma)\dt Z \dt \Gamma}\rightarrow_{p}\int_{A}\pi(\Gamma)\dt \Gamma.
	$$Conclude that 
	$$
	\Pi\left[A|\eta(\y)\right]=\frac{N_n}{D_n}+o_P(1)=\Pi[A]+o_{P}(1).
	$$

	\noindent\textbf{R-BSL-V:} Define $$V(b,\Gamma):=\Sigma(b)+\text{diag}\left[\Sigma^{1/2}(b)\right]\Gamma\Gamma^{\top}\text{diag}\left[\Sigma^{1/2}(b)\right]^{\top}$$
	From posterior concentration, for some $M_n\rightarrow\infty$, with $M_n/v_n\rightarrow0$, 
	\begin{flalign*}
	&\Pi\left[A|\eta\right]\\&=\frac{\int_{A}\int_{\|b-b_0\|\leq M_n/v_n}(2\pi)^{-\frac{d_\eta}{2}}\text{det}\left[V(b,\Gamma)\right]^{-1}e^{\left(-\frac{1}{2}\left[b-\eta\right]^{\top}V(b,\Gamma)^{-1}\left[b-\eta\right]\right)}\pi(b)\pi(\Gamma) \dt b \dt \Gamma}{\int \int_{\|b-b_0\|\leq M_n/v_n}(2\pi)^{-\frac{d_\eta}{2}}\text{det}\left[V(b,{\Gamma})\right]^{-1}e^{\left(-\frac{1}{2}\left[b-\eta\right]^{\top}V(b,{\Gamma})^{-1}\left[b-\eta\right]\right)}\pi(b)\pi(\Gamma)\dt b \dt \Gamma}+o_P(1),
	\end{flalign*}where $\eta:=\eta(\y)$. 
	
	Similar to the case of R-BSL-M, define $Z:=v_n\left(b-\eta\right)$ and consider the change of variables $b\mapsto v_n\left(b-\eta\right)+v_n\left(\eta-b_0\right)\equiv Z+v_n\left(\eta-b_0\right)$,  which yields, for some ${M}_n\rightarrow\infty$,
	\begin{flalign*}
	&\Pi\left[A|\eta(\y)\right]=\frac{N_n}{D_n}+o_P(1)
	\end{flalign*}
	where
	\begin{flalign*}N_n&:={\int_{A}\int_{\|Z\|\leq {M}_n}\frac{e^{\left(-\frac{1}{2}\left[Z/v_n\right]^{\top}V^{-1}(Z/v_n+\eta,\Gamma)^{}\left[Z/v_n\right]\right)}}{(2\pi)^{\frac{d_\eta}{2}}\text{det}\left[V(Z/v_n+\eta,\Gamma)\right]^{1/2}}\pi(Z/v_n+\eta)\pi(\Gamma) \dt Z \dt \Gamma}\end{flalign*}and
	\begin{flalign*}D_n&:={\int \int_{\|Z\|\leq {M}_n}\frac{e^{\left(-\frac{1}{2}\left[Z/v_n\right]^{\top}V^{-1}(Z/v_n+\eta,\Gamma)^{}\left[Z/v_n\right]\right)}}{(2\pi)^{\frac{d_\eta}{2}}\text{det}\left[V(Z/v_n+\eta,\Gamma)\right]^{1/2}}\pi(Z/v_n+\eta)\pi(\Gamma)\dt Z \dt \Gamma}.
	\end{flalign*}
	Again, we analyze $N_n$ and $D_n$ separately. 
	\medskip 
	
	\noindent\textbf{Term $D_n$:} Similar to the previous result, by Assumption \ref{ass:one}, $P _ { n } ^ { 0 } \left\{ \left\| \eta - b _ { 0 } \right\| \leq M _ { \delta } / v _ { n } \right\} \geq 1 - \delta$ for some $M_\delta,\delta>0$. Similarly to the case of R-BSL-M,
	\begin{flalign*}
	\sup_{\|Z\|\leq M_n}\left\|\pi(Z/v_n+\eta)-\pi(b_0)\right\|&=o_P(1)\\\sup_{\|Z\|\leq M_n}\left\|\Sigma(Z/v_n+\eta)-\Sigma(b_0)\right\|&=o_P(1).
	\end{flalign*}By Assumption \ref{ass:two}, and the dominated convergence theorem,
	$$
	\frac{D_n}{\pi(b_0)v_n^{-d_\eta}}= {\int \int_{\|Z\|\leq {M}_n}(2\pi)^{-\frac{d_\eta}{2}}{\text{det}\left[v^2_nV(b_0,\Gamma)\right]}^{-1/2}e^{\left(-\frac{1}{2}\left[Z/v_n\right]^{\top}\left[V(b_0,\Gamma)\right]^{-1}\left[Z/v_n\right]\right)}\pi(\Gamma)\dt Z \dt \Gamma}+o_P(1).
	$$Define $$\tilde{Z}:=\left[v^2_n\Sigma(b_0)+\text{diag}\left[v_n\Sigma^{1/2}(b_0)\right]\Gamma\Gamma^{\top}\text{diag}\left[v_n\Sigma^{1/2}(b_0)\right]^{\top}\right]^{-1/2}Z.$$ By Assumption \ref{ass:two}, $\tilde{Z}= A(b_0+\Gamma)^{-1/2}Z+o_P(1)$ and we then obtain
	$$
	\frac{D_n}{\pi(b_0)v_n^{-d_\eta}}= {\int (2\pi)^{-\frac{d_\eta}{2}}\int_{\|\tilde{Z}\|\leq {M}_n}\text{det}\left[A(b_0+\Gamma)\right]^{-1/2}e^{\left(-\frac{1}{2}\tilde{Z}^{\top}\tilde{Z}\right)}\pi(\Gamma)\dt Z \dt \Gamma}+o_P(1).
	$$From the dominated convergence theorem,  $$(2\pi)^{-d_\eta/2}\int_{\|\tilde{Z}\|\leq {M}_n}\text{det}\left[A(b_0+\Gamma)\right]^{-1/2}e^{\left(-\frac{1}{2}\tilde{Z}^{\top}\tilde{Z}\right)}\dt Z\rightarrow_p1.$$ Applying the above and Fubini's Theorem delivers
	$$
	\frac{D_n}{\pi(b_0)v_n^{-d_\eta}}\rightarrow_p\int \pi(\Gamma)\dt\Gamma=1.
	$$
	\medskip 
	
	\noindent\textbf{Term $N_n$:} Apply the same argument as for $D_n$, to deduce that 
	$$
	\frac{N_n}{\pi(b_0)v_n^{-d_\eta}}= {\int_{A} (2\pi)^{-\frac{d_\eta}{2}}\int_{\|\tilde{Z}\|\leq {M}_n}\text{det}\left[A(b_0+\Gamma)\right]^{-1/2}e^{\left(-\frac{1}{2}\tilde{Z}^{\top}\tilde{Z}\right)}\pi(\Gamma)\dt Z \dt \Gamma}+o_P(1)\rightarrow_{p}\int_{A}\pi(\Gamma)\dt \Gamma.
	$$Conclude that 
	$$
	\Pi\left[A|\eta(\y)\right]=\frac{N_n}{D_n}+o_P(1)=\Pi[A]+o_{P}(1).
	$$ 
	
	From the arbitrary nature of $A$, the result holds for any $A\subseteq\mathcal{G}$ by the absolute continuity of $\Pi[\cdot]$. 
	
	\noindent\textbf{Part (2): $\Pi[\|b(\theta)-b_0\|\leq\delta|\eta(\y)]=1+o_P(1)$.}
	The result follows along similar lines to those given for \textbf{Part (1)}, the posterior for the $\Gamma$ components, and, hence, we only sketch the result and only for the R-BSL-M case. The equivalent result for R-BSL-V is very similar and hence omitted. 
	
	From the posterior concentration of the BSL posterior in Lemma \ref{lem:one} we have that, for any $\delta>0$, and some $M_n\rightarrow\infty$, as $n\rightarrow\infty$, such that $\lim_nM_n/v_n=\delta$,
	\begin{flalign*}
	&\Pi\left[\|b(\theta)-b_0\|\leq\delta|\eta(\y)\right]\\&=\frac{\int_{\|b(\theta)-b_0\|\leq\delta}\int_{}(2\pi)^{-\frac{d_\eta}{2}}\text{det}\left[\Sigma(b)\right]^{-1/2}e^{\left(-\frac{1}{2}\left[b-\eta-x(\Gamma)\right]^{\top}\left[\Sigma(b)\right]^{-1}\left[b-\eta-x(\Gamma)\right]\right)}\pi(b)\pi(\Gamma) \dt b \dt \Gamma}{\int \int_{}(2\pi)^{-\frac{d_\eta}{2}}\text{det}\left[\Sigma(b)\right]^{-1/2}e^{\left(-\frac{1}{2}\left[b-\eta-x\right]^{\top}\left[\Sigma(b)\right]^{-1}\left[b-\eta-x\right]\right)}\pi(b)\pi(\Gamma)\dt b \dt \Gamma},\\&=\frac{\int_{v_n\|b(\theta)-b_0\|\leq M_n}\int_{}(2\pi)^{-\frac{d_\eta}{2}}\text{det}\left[\Sigma(b)\right]^{-1/2}e^{\left(-\frac{1}{2}\left[b-\eta-x(\Gamma)\right]^{\top}\left[\Sigma(b)\right]^{-1}\left[b-\eta-x(\Gamma)\right]\right)}\pi(b)\pi(\Gamma) \dt b \dt \Gamma}{\int \int_{v_n\|b-b_0\|\leq M_n}(2\pi)^{-\frac{d_\eta}{2}}\text{det}\left[\Sigma(b)\right]^{-1/2}e^{\left(-\frac{1}{2}\left[b-\eta-x\right]^{\top}\left[\Sigma(b)\right]^{-1}\left[b-\eta-x\right]\right)}\pi(b)\pi(\Gamma)\dt b \dt \Gamma}+o_P(1),
	\end{flalign*}where $x(\Gamma):=\text{diag}[\Sigma(b)]^{1/2}\Gamma$ and $\eta:=\eta(\y)$.
	
	Define $Z:=v_n\left(b-\eta\right)$ and consider the change of variables $b\mapsto v_n\left(b-\eta\right)+v_n\left(\eta-b_0\right)\equiv Z+v_n\left(\eta-b_0\right)$,  which yields
	\begin{flalign*}
	\Pi\left[\|b(\theta)-b_0\|\leq\delta|\eta(\y)\right]&=\frac{N_n}{D_n}+o_P(1),
	\end{flalign*}where 
	\begin{flalign*}
	N_n&:={\int_{\|Z\|\leq {M}_n}\int\frac{e^{\left(-\frac{1}{2}\left[Z/v_n-x(\Gamma)\right]^{\top}\left[\Sigma(Z/v_n+\eta)\right]^{-1}\left[Z/v_n-x(\Gamma)\right]\right)}}{(2\pi)^{d/2}\text{det}\left[\Sigma(Z/v_n+\eta)\right]^{1/2}}\pi(Z/v_n+\eta)\pi(\Gamma) \dt \Gamma\dt Z},
	\end{flalign*}and 
	\begin{flalign*}
	D_n&:={\int_{\|Z\|\leq {M}_n}\int\frac{e^{\left(-\frac{1}{2}\left[Z/v_n-x(\Gamma)\right]^{\top}\left[\Sigma(Z/v_n+\eta)\right]^{-1}\left[Z/v_n-x(\Gamma)\right]\right)}}{(2\pi)^{\frac{d_\eta}{2}}\text{det}\left[\Sigma(Z/v_n+\eta)\right]^{1/2}}\pi(Z/v_n+\eta)\pi(\Gamma) \dt \Gamma\dt Z}.
	\end{flalign*}

	We now analyze $N_n$ and $D_n$ separately, starting with $D_n$. 
	
	\noindent\textbf{Term $D_n$.} The argument is identical to that used to prove the same part in \textbf{Part (1)} of the result. Therefore, defining $\tilde{Z}:=\left[\Sigma^{1/2}(b_0)v_n\right]^{-1}Z$, by Assumption \ref{ass:two}  $\tilde{Z}:=A(b_0)^{-1}Z+o_P(1)$, for some positive definite matrix $A(b_0)$, and it follows that
	\begin{flalign*}
	\frac{D_n}{\pi(b_0)\text{det}\left[\Sigma(b_0)\right]^{-1/2}}&= \int (2\pi)^{-\frac{d_\eta}{2}}\int_{\|\tilde{Z}\|\leq {M}_n}e^{-\frac{1}{2}\tilde{Z}^{\top}\tilde{Z}+\tilde{Z}^{\top}\Gamma}e^{-\frac{1}{2}\Gamma^{\top}\Gamma}\pi(\Gamma)\dt \tilde{Z} \dt \Gamma+o_P(1)\\&\rightarrow_p\int \pi(\Gamma)\dt\Gamma=1.
	\end{flalign*}
	Similar to the proof of  \textbf{Part (1)}, the second line follows from the dominated convergence theorem and Fubini's theorem.

	\noindent\textbf{Term $N_n$.} Again, the argument follows similarly to that used for the posterior of the $\Gamma$ components. Similar to the case for $D_n$, we can obtain
	\begin{flalign*}
	\frac{N_n}{\pi(b_0)\text{det}\left[\Sigma(b_0)\right]^{-1/2}}&= (2\pi)^{-\frac{d_\eta}{2}}\int_{\|\tilde{Z}\|\leq {M}_n}{\int_{} e^{-\frac{1}{2}\tilde{Z}^{\top}\tilde{Z}+\tilde{Z}^{\top}\Gamma }e^{-\frac{1}{2}\Gamma^{\top}\Gamma}\pi(\Gamma)\dt \tilde{Z} \dt \Gamma}\\&= (2\pi)^{-\frac{d_\eta}{2}}\int_{\|\tilde{Z}\|\leq {M}_n}{\int_{} e^{-\frac{1}{2}\tilde{Z}^{\top}\tilde{Z}+\tilde{Z}^{\top}\Gamma }e^{-\frac{1}{2}\Gamma^{\top}\Gamma}\pi(\Gamma)\dt \tilde{Z} \dt \Gamma}
	\end{flalign*}From the dominated convergence theorem, 
	\begin{flalign*}
	\frac{N_n}{\pi(b_0)\text{det}\left[\Sigma(b_0)\right]^{-1/2}}&\xrightarrow{P} (2\pi)^{-\frac{d_\eta}{2}}\int_{}\int_{} e^{-\frac{1}{2}\tilde{Z}^{\top}\tilde{Z}+\tilde{Z}^{\top}\Gamma }e^{-\frac{1}{2}\Gamma^{\top}\Gamma}\pi(\Gamma)\dt\tilde{Z} { \dt \Gamma},
	\end{flalign*}for $\tilde{Z}$ a standard Gaussian vector. Again, recalling that, for $x,y\in\mathbb{R}^{d_\eta}$,
	$$\int_{}e^{-\frac{1}{2}x^{\top}x+x^{\top}y }\dt x= (2\pi)^{d_\eta/2}e^{y^{\top}y/2},
	$$ we can apply the above and Fubini's theorem to obtain that 
	\begin{flalign*}
	\frac{N_n}{\pi(b_0)\text{det}\left[\Sigma(b_0)\right]^{-1/2}}\xrightarrow{P}(2\pi)^{-\frac{d_\eta}{2}}\int\int_{}e^{ -\frac{1}{2}\tilde{Z}^{\top}\tilde{Z}+\tilde{Z}^{\top}\Gamma }e^{-\frac{1}{2}\Gamma^{\top}\Gamma}\pi(\Gamma)\dt \tilde{Z} \dt \Gamma=\int\pi(\Gamma)\dt\Gamma=1.
	\end{flalign*}
	
	Putting the two terms together we have that 
	$$
	\Pi\left[\|b(\theta)-b_0\|\leq\delta|\eta(\y)\right]=\frac{N_n}{D_n}+o_P(1),
	$$as stated.

\end{proof}

\bibliographystyle{apalike} 
\bibliography{refs}

\end{document}